\setlist[description]{leftmargin=15pt,labelindent=15pt}
\definecolor{darkgreen}{rgb}{0,0.5,0}
\newif\ifnotes\notestrue
 \definecolor{mygrey}{gray}{0.50}
 \newcommand{\notename}[2]{{\textcolor{red}{\footnotesize{\bf (#1:} {#2}{\bf ) }}}}
 \newcommand{\notename}[2]{{}}
\newtheorem{theorem}{Theorem}[section]
\newtheorem{claim}[theorem]{Claim}
\newtheorem{proposition}[theorem]{Proposition}
\newtheorem{lemma}[theorem]{Lemma}
\newcommand{\expref}[2]{{\texorpdfstring{\hyperref[#2]{#1~\ref{#2}}}{#1~\ref{#2}}}} 
\newcommand{\secref}[1]{\expref{Section}{#1}}
\newcommand{\thmref}[1]{\expref{Theorem}{#1}}
\newcommand{\clmref}[1]{\expref{Claim}{#1}}
\newcommand{\lref}[1]{\expref{Lemma}{#1}}
\newcommand{\pref}[1]{\expref{Proposition}{#1}}
\newcommand{\figref}[1]{\expref{Figure}{#1}}
\newcommand{\boldp}{\ensuremath{\mathsf{p}}}
\newcommand{\RZ}{\mathbb{Z}}
\newcommand{\R}{\mathbb{R}}
\newcommand{\E}{\mathbb{E}}
\newcommand{\CD}{\mathcal{D}}
\newcommand{\ip}[2]{\langle #1,#2\rangle}
\newcommand{\BR}{\mathbb{R}}
\newcommand{\BE}{\mathbb{E}}
\newcommand{\BP}{\mathbb{P}}
\newcommand{\ind}{\mathbf{1}}
\newcommand{\calP}{\ensuremath{\mathcal{P}}}
\newcommand{\CE}{\mathcal{E}}
\newcommand*\bj{\ensuremath{\boldsymbol{j}}}
\newcommand*\bk{\ensuremath{\boldsymbol{k}}}
\newcommand{\polylog}{\mathrm{polylog}}
\newcommand{\poly}{\mathrm{poly}}
\newcommand{\eps}{\epsilon}
\newcommand{\disc}{\mathsf{disc}}
\newcommand{\sign}{\mathsf{sign}}
\newcommand{\CS}{\mathcal{S}}
\newcommand{\sfp}{\mathsf{p}}
\newcommand{\err}{\mathsf{err}}
\newcommand{\CG}{\mathcal G}
\newcommand{\CT}{T}
\newcommand{\calT}{\mathcal{T}}
\newcommand{\smin}{\eps_\mathsf{min}}
\newcommand{\smax}{\eps_\mathsf{max}}
\newcommand{\cov}{\mathbf{\Sigma}}
\global\long\def\norm#1{\left\Vert #1\right\Vert }
\newcommand{\BS}{\mathbb{S}}
\newcommand{\p}{\mathbb{P}}
\newcommand{\tM}{M^+}
\newcommand{\op}{\mathsf{op}}
\newcommand{\diam}{\mathsf{diam}}
\newcommand{\im}{\mathrm{im}}
\newcommand{\SE}{\mathcal{S}}
\newcommand{\SA}{\mathcal{A}}
\newcommand{\SG}{\mathcal{G}}
\newcommand{\ST}{\mathcal{T}}
\newcommand{\Tr}{\mathrm{Tr}}
\newcommand{\SD}{\mathcal{D}}
\newcommand{\SL}{\mathcal{L}}
\newcommand{\arbnorm}[1]{\left\|#1\right\|_*}
\newcommand{\qlt}{q^l_{j,k}}
\newcommand{\qrt}{q^r_{j,k}}
\newcommand{\dlt}{d^l_{j,k}}
\newcommand{\drt}{d^r_{j,k}}
\newcommand{\SM}{\mathcal{M}}
\newcommand{\SN}{\mathcal{N}}
\title{Online Discrepancy Minimization for Stochastic Arrivals}
\author{Nikhil Bansal\thanks{CWI Amsterdam and TU Eindhoven, \texttt{N.Bansal@cwi.nl}. Supported by the ERC Consolidator Grant 617951 and the NWO VICI grant 639.023.812.} \and
Haotian Jiang\thanks{Paul G. Allen School of CSE, University of Washington, \texttt{jhtdavid@cs.washington.edu}.
Supported in part by the National Science Foundation, Grant Number CCF-1749609, CCF-1740551, DMS-1839116.
} \and
Raghu Meka\thanks{Department of Computer Science, University of California, Los Angeles, \texttt{raghum@cs.ucla.edu}. Supported by NSF Grant CCF-1553605. Part of the work was done while visiting CWI Amsterdam.
} \and
Sahil Singla\thanks{Institute for Advanced Study and Princeton University, \texttt{singla@cs.princeton.edu}. Supported in part by the Schmidt Foundation. Part of the work done while visiting CWI Amsterdam.} \and 
Makrand Sinha\thanks{CWI Amsterdam, \texttt{makrand@cwi.nl}. Supported by the Netherlands Organization for Scientific Research, Grant Number 617.001.351 and the NWO VICI grant 639.023.812.}
}
\date{}
\begin{document}

\maketitle
\pagenumbering{roman}

\begin{abstract}
\medskip
In the stochastic online vector balancing problem, vectors $v_1,v_2,\ldots,v_T$ chosen independently from an arbitrary distribution in $\BR^n$ arrive one-by-one and must be immediately given a $\pm$ sign. The goal is to keep the norm of the discrepancy vector, \emph{i.e.}, the signed prefix-sum, as small as possible for a given target norm.
    
    \medskip 
    We consider some of the most well-known problems in discrepancy theory in the above online stochastic setting, and give algorithms that match the known offline bounds up to $\polylog(nT)$ factors. This substantially generalizes and improves upon the previous results of Bansal, Jiang, Singla,  and Sinha (STOC' 20).
    In particular, for the Koml\'{o}s problem where $\|v_t\|_2\leq 1$ for each $t$, our algorithm achieves $\widetilde{O}(1)$ discrepancy with high probability, improving upon the previous $\widetilde{O}(n^{3/2})$ bound. 
    For Tusn\'{a}dy's problem of minimizing the discrepancy of axis-aligned boxes, we obtain an $O(\log^{d+4} T)$ bound for arbitrary distribution over points.
    Previous techniques only worked for product distributions and gave a weaker $O(\log^{2d+1} T)$ bound. 
    We also consider the Banaszczyk setting, where given a symmetric convex body $K$ with Gaussian measure at least $1/2$, our algorithm achieves $\widetilde{O}(1)$ discrepancy with respect to the norm given by $K$ for input distributions with sub-exponential tails.
  
  \medskip
  
  Our results are based on a new potential function approach.
  Previous techniques consider a potential that penalizes large discrepancy, and greedily chooses the next color to minimize the increase in potential.   
  Our key idea is to introduce a potential that also enforces constraints on how the discrepancy vector evolves, allowing us to maintain certain anti-concentration properties. We believe that our techniques to control the evolution of states could find other applications in  stochastic processes and online algorithms. For the Banaszczyk setting, we further enhance this potential by combining it with ideas from generic chaining. Finally, we also extend these results to the setting of online multi-color discrepancy.  
    
    

\end{abstract}

\clearpage
\setcounter{tocdepth}{2}

   \tableofcontents
   

\newpage

\pagenumbering{arabic}
\setcounter{page}{1}


\section{Introduction}

We consider the following online vector balancing question, originally
proposed by Spencer \cite{Spencer77}: vectors $v_1,v_2,\ldots,v_T \in \BR^n$
arrive online, and upon the arrival of $v_t$, a sign $\chi_t \in \{\pm 1\}$ must be
chosen irrevocably, so that the $\ell_\infty$-norm of the \emph{discrepancy vector} (signed sum) 
$d_t :=  \chi_1 v_1 + \ldots + \chi_t  v_t$ remains as small as possible. That is, find the
smallest $B$ such that $\max_{t \in [T]} \|d_t\|_\infty \leq B$.
More generally, one can consider the problem of minimizing $\max_{t \in T} \|d_t\|_K$ with respect to arbitrary norms given by a symmetric convex body $K$.

\paragraph{Offline setting.} The offline version of the problem, where the vectors $v_1,\ldots,v_T$ are given in advance, 
has been extensively studied in discrepancy theory, and has various applications \cite{Matousek-Book09,Chazelle-Book01,ChenST-Book14}. Here we study three important problems in this vein: 
\begin{description}
\item[Tusn\'ady's problem.] Given points $x_1,\ldots, x_T \in [0,1]^d$, we want to assign $\pm$ signs to the points, so that for every axis-parallel box, the difference between the number of points inside the box that are assigned a plus sign and those assigned a minus sign is minimized. 
\item[Beck-Fiala and Koml{\'o}s problem.] Given $v_1,\ldots,v_T \in \R^n$ with Euclidean norm at most one, we want to minimize $\max_{t \in T} \|d_t\|_\infty$. After scaling, a special case of the Koml{\'o}s problem is the Beck-Fiala setting where $v_1,\ldots,v_T \in [-1,1]^n$ are $s$-sparse (with at most $s$ non-zeros).
\item[Banaszczyk's problem.] Given $v_1,\ldots,v_T \in \R^n$ with Euclidean norm at most one, and a convex body $K \in \R^n$ with Gaussian measure\footnote{The Gaussian measure $\gamma_n(\SE)$ of a set $\SE \subseteq \R^n$ is defined as $\BP[G \in \SE]$ where $G$ is standard Gaussian in $\BR^n$.} $\gamma_n(K) \geq 1-1/(2T)$, find the smallest $B$ so that there exist signs such that $d_t \in B \cdot K$ for all $t\in [T]$.
\end{description}

One of the most general and powerful results here is due to Banaszczyk~\cite{B12}:  there exist signs such that $d_t \in O(1) \cdot K$ for all $t\in [T]$ for any convex body $K \in \R^n$ with Gaussian measure\footnote{We remark that if one only cares about the final discrepancy $d_T$, the condition in Banaszczyk's result can be improved to $\gamma_n(K)\geq 1/2$ (though, in all applications we are aware of, this makes no difference if $T=\text{poly}(n)$ and makes a difference of at most $\sqrt{\log T})$ for general $T$).}  $\gamma_n(K)\ge 1-1/(2T)$.
In particular, this gives the best known bounds of $O((\log T)^{1/2})$ for the {Koml\'{o}s problem}; for the Beck-Fiala setting, when the vectors are $s$-sparse,  the bound is $O((s \log T)^{1/2})$. 

An extensively studied case, where sparsity plays a key role, is that of {Tusn\'ady's problem} (see~\cite{Matousek-Book09} for a history), where the best known (non-algorithmic) results, building on a long line of work, are an $O(\log^{d-1/2}T)$ upper bound of~\cite{Nikolov-Mathematika19} and an almost matching  $\Omega(\log^{d-1}T)$ lower bound of~\cite{MN-SoCG15}.

In general, several powerful techniques have been developed for offline discrepancy problems over the last several decades, starting with initial non-constructive approaches such as \cite{Beck-Combinatorica81, Spencer85, gluskin-89, Giannopoulos,Banaszczyk-Journal98,B12}, and more recent algorithmic ones such as \cite{Bansal-FOCS10, Lovett-Meka-SICOMP15, Rothvoss14, MNT14,  BansalDG16, LevyRR17, EldanS18, BansalDGL18,DNTT18}. However, none of them applies to the online setting that we consider here.

\paragraph{Online setting.}
A na\"ive algorithm is to pick each sign $\chi_t$ randomly and independently, which by standard tail bounds gives $B = \Theta((T \log n)^{
1/2})$ with high probability. In typical interesting settings, we have $T \ge \poly(n)$, 
and hence a natural question is whether the dependence on $T$ can be
improved from $T^{1/2}$ to say, poly-logarithmic in $T$, and ideally to even match the known offline bounds.

Unfortunately, the $\Omega(T^{1/2})$ dependence is necessary if the adversary is adaptive\footnote{In the sense that the adversary can choose the next vector $v_t$ based on the current discrepancy vector $d_{t-1}$.}: at each time $t$, the adversary can choose the next input vector $v_t$ to be \emph{orthogonal} to $d_{t-1}$, causing $\|d_t\|_2$ to grow as $\Omega(T^{1/2})$ (see \cite{Spencer-Book87} for an even stronger lower bound). Even for very special cases, such as for vectors in $\{-1,1\}^n$, strong $\Omega(2^n)$ lower bounds are known \cite{Barany79}. 
Hence, we focus on a natural \emph{stochastic} model where we relax the power of the adversary and assume that the arriving vectors are chosen in an i.i.d.~manner from some---possibly adversarially chosen---distribution $\boldp$. In this case, one could hope to exploit that $\ip{d_{t-1}}{v_t}$ is not always zero, \emph{e.g.}, due to anti-concentration properties of the input distribution, and  beat the $\Omega(T^{1/2})$ bound.

Recently, Bansal and Spencer~\cite{BansalSpencer-arXiv19}, considered the special case where $\boldp$ is the uniform distribution on all $\{-1,1\}^n$ vectors, and gave an almost optimal $O(n^{1/2} \log T)$ bound for the $\ell_\infty$ norm that holds  with high probability for all $t \in [T]$.
The setting of general distributions $\boldp$ turns out to be harder and was considered recently by \cite{JiangKS-arXiv19} and \cite{BJSS20}, motivated by \emph{envy minimization} problems and an online version of Tusn\'ady's problem. The latter was also considered independently by Dwivedi, Feldheim, Gurel-Gurevich, and Ramadas~\cite{DFGR19} motivated by the problem of placing points uniformly in a grid.


For an arbitrary distribution $\boldp$ supported on vectors in $[-1,1]^n$, \cite{BJSS20} give an algorithm achieving an $O(n^2 \log T)$ bound for the $\ell_\infty$-norm.
In contrast, the best offline bound is $O((n \log T)^{1/2})$, and hence $\widetilde{\Omega}(n^{3/2})$ factor worse, where $\widetilde{\Omega}(\cdot)$ ignores poly-logarithmic factors in $n$ and $T$. 

More significantly, the existing bounds for the online version are much worse than those of the offline version for the case of $s$-sparse vectors (\emph{Beck-Fiala} setting) --- \cite{BJSS20} obtain a much weaker bound of $O(s n \log T)$ for the online setting while the offline bound of $O((s\log T)^{1/2})$ is independent of the ambient dimension $n$. These technical limitations also carry over to the online Tusn\'ady problem, where previous works~\cite{JiangKS-arXiv19,DFGR19,BJSS20} could only handle product distributions. 

To this end, \cite{BJSS20}  propose two key problems in the i.i.d.~setting. First, for a general distribution $\boldp$ on vectors in $[-1,1]^n$, can one get an optimal $\widetilde{O}(n^{1/2})$ or even $\widetilde{O}(n)$ dependence? Second, can one get $\text{poly}(s, \log T)$ bounds when the vectors are $s$-sparse. In particular, as a special case, can one get $(\log T)^{O(d)}$ bounds for the Tusn\'ady problem, when points arrive from an {\em arbitrary} non-product distribution on $[0,1]^d$.

\subsection{Our Results}
In this paper we resolve both the above questions of~\cite{BJSS20}, and prove much more general results that obtain bounds within poly-logarithmic factors of those achievable in the offline setting.

\paragraph{Online Koml{\'o}s and Tusn\'ady settings.}
We first consider Koml{\'o}s' setting for online discrepancy minimization where the vectors have $\ell_2$-norm at most $1$. Recall, the best known offline bound in this setting is $O((\log T)^{1/2})$~\cite{B12}. We achieve the same result, up to poly-logarithmic factors, in the online setting.

\begin{restatable}[Online Koml{\'o}s setting]{theorem}{komlos} \label{thm:komlos}
    Let $\sfp$ be a distribution in $\mathbb{R}^n$ supported on vectors with Euclidean norm at most $1$.
    Then, for vectors $v_1, \ldots, v_T$ sampled i.i.d. from $\sfp$, there is an online algorithm that  with high probability  maintains a discrepancy vector $d_t$ such that $\| d_t \|_\infty = O(\log^4 (nT))$ for all $t \in [T]$. 
\end{restatable}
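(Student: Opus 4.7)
The plan is to design a potential function $\Phi$ on the running discrepancy vector $d_t$, together with a (possibly randomized) sign-selection rule, so that $\Phi(d_t) \le \poly(nT)$ with high probability throughout the execution. The primary component of $\Phi$ is the standard exponential penalty
\[
\Phi_1(d) \;=\; \sum_{i=1}^n \cosh(\lambda d_i),
\]
for a suitably small $\lambda = \widetilde{\Theta}(1/\log(nT))$, so that a bound $\Phi_1(d_t) \le \poly(nT)$ immediately gives $\|d_t\|_\infty = \polylog(nT)$. The novel ingredient, following the paper's stated philosophy, is to augment $\Phi_1$ with an auxiliary ``stabilizing'' potential $\Phi_2$ that penalizes configurations of $d_t$ in which the random inner products $\ip{\nabla \Phi_1(d_t)}{v}$ with $v \sim \sfp$ exhibit insufficient anti-concentration for the sign rule to make progress.

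The one-step analysis goes through the Taylor expansion of $\Phi_1$: writing $g_{t-1} = \nabla \Phi_1(d_{t-1})$ and using $\|v_t\|_2 \le 1$,
\[
\Phi_1(d_{t-1} + \chi_t v_t) - \Phi_1(d_{t-1}) \;\le\; \chi_t \ip{g_{t-1}}{v_t} \;+\; O(\lambda^2)\cdot \Phi_1(d_{t-1}).
\]
A greedy choice $\chi_t = -\sign \ip{g_{t-1}}{v_t}$ turns the linear term into $-|\ip{g_{t-1}}{v_t}|$, so the expected drift of $\Phi_1$ is non-positive provided $\BE_{v \sim \sfp} |\ip{g_{t-1}}{v}|$ dominates the quadratic correction $O(\lambda^2 \Phi_1(d_{t-1}))$. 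This is an anti-concentration requirement on the linear functional $v \mapsto \ip{g_{t-1}}{v}$ under the law of $\sfp$, and is precisely what $\Phi_2$ is built to enforce. A natural shape for $\Phi_2$ involves a term controlling $g^\top \cov\, g$ with $\cov = \BE_{v\sim \sfp}[vv^\top]$ (and possibly higher moments), penalizing states where $g_{t-1}$ lies close to the kernel of $\cov$ or to a low-variance direction of $\sfp$.

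Given the one-step drift bound $\BE[\Phi(d_t) - \Phi(d_{t-1}) \mid d_{t-1}] \le 0$ (modulo a $\poly(n)$ additive slack), I would conclude with a martingale concentration argument in the style of Freedman's inequality. The increments are bounded because $\lambda \|v_t\|_2 \le \lambda$ is small and $\Phi$ is Lipschitz in the relevant regime, and their conditional variance is controlled by the very same quadratic term appearing in the Taylor expansion; this yields $\Phi(d_t) \le \poly(nT)$ uniformly in $t \in [T]$ with high probability, and inverting $\Phi_1$ delivers $\|d_t\|_\infty = O(\log^4(nT))$, with the extra logarithmic factors absorbing the union bound over $t$ and the slack in the concentration inequality.

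The main obstacle will be designing $\Phi_2$ and its accompanying sign rule so that three properties hold simultaneously: (a) whenever $\Phi_2$ is small, the anti-concentration inequality for $\ip{g_{t-1}}{v}$ holds with room to spare; (b) $\Phi_2$ does not itself blow up under the stochastic dynamics, i.e. its own one-step expected growth is controlled; and (c) the combined one-step expected change in $\Phi = \Phi_1 + \Phi_2$ is non-positive. The genuinely delicate case is when $\sfp$ concentrates near a low-dimensional subspace, so that $\nabla \Phi_1(d)$ can easily land in a direction of vanishing variance under $\sfp$; $\Phi_2$ must steer $d$ away from such configurations without inflating $\Phi_1$ or losing the boundedness/variance properties needed for Freedman. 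Coupling these effects cleanly, while keeping the total potential a bounded-difference martingale, is the heart of the argument.
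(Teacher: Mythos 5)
Your high-level intuition — that the discrepancy-tracking potential $\Phi_1(d)=\sum_i \cosh(\lambda d_i)$ must be combined with an auxiliary potential that keeps the state $d_t$ in a regime where $\BE_{v\sim\sfp}|\ip{\nabla\Phi_1(d_t)}{v}|$ is large — is exactly the paper's philosophy, and your drift/concentration scaffolding (Taylor, greedy sign, martingale to $\poly(T)$, invert) is correct in outline. But the proposal leaves the crucial ingredient open, and the hints you offer for $\Phi_2$ point in a direction the paper's analysis shows to be insufficient.

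Concretely, you suggest that $\Phi_2$ should be built around a quadratic form $g^\top \cov\, g$ (plus ``possibly higher moments''). A second-moment invariant alone gives no useful \emph{lower} bound on $\BE_v|\ip{g}{v}|$; even a fourth-moment/Paley--Zygmund style invariant is explicitly discussed in the paper as too weak. The paper's key idea — which is what your plan is missing — is to take $\Phi_2$ to be the \emph{exponential moment} with respect to the input distribution itself, namely $\BE_{v\sim\sfp}[\cosh(\lambda\, d_t^\top v)]$ (merged with $\Phi_1$ into a single potential $\sum_k \BE_{x\sim\sfp_x}[\cosh(\lambda\, d_t^\top\Pi_k x)]$ where $\sfp_x$ mixes $\sfp$ with the test vectors). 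Once this is bounded by $\poly(T)$, one knows $|\ip{d_{t-1}}{v}|\lesssim\lambda^{-1}\log T$ for typical $v\sim\sfp$, and the lower bound on the linear term then follows from the ``coupling trick'' $\BE_v|L(v)|\ge \BE_v[L(v)f(v)]/\|f\|_\infty$ with $f(v)=d^\top\Pi v\cdot\ind_{\CG}(v)$: this turns the absolute value into a quadratic form that can be equated, via $\cov$, with the quadratic (in $\lambda$) term in the Taylor expansion. Without the exponential moment and this coupling step, there is no mechanism to make the linear term dominate the quadratic one, and the drift bound does not close.

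Two further omissions: (i) the proposal tacitly treats $\sfp$ as if it were roughly isotropic; the paper needs the $\kappa$-dyadic eigendecomposition $\cov\approx\sum_k 2^{-k}\Pi_k$ and a per-scale potential $\sum_k\BE_x[\cosh(\lambda d^\top\Pi_k x)]$ precisely so that the matching between linear and quadratic terms works scale-by-scale when $\sfp$ is highly anisotropic; (ii) you ask for $\BE[\Delta\Phi\mid d_{t-1}]\le 0$ and a Freedman-type argument, whereas the paper only needs (and only gets) an $O(1)$ bound on the positive drift, followed by a simple truncation + Markov argument — your stronger requirement is neither achievable nor necessary. You acknowledge that constructing $\Phi_2$ is ``the heart of the argument''; it is, and it is the part your proposal does not supply.
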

In particular, for vectors in $[-1,1]^n$ this gives an $O(n^{1/2} \log^4 (nT))$ bound, and for $s$-sparse vectors in $[-1,1]^n$, this gives an $O(s^{1/2} \log^4 (nT))$ bound, both of which are optimal up to poly-logarithmic factors.

The above result implies significant savings for the online Tusn\'ady problem. Call a set $B \subseteq [0,1]^n$ an axis-parallel box if $B = I_1 \times \cdots \times I_n$ for intervals $I_i \subseteq [0,1]$. In the online Tusn\'ady problem, we see points $x_1,\ldots,x_T \in [0,1]^d$ and need to assign signs $\chi_1,\ldots,\chi_T$ in an online manner to minimize the discrepancy of every axis-parallel box at all times. More precisely, for an axis-parallel box $B$, define\footnote{Here, and henceforth, for a set $S$, denote $\ind_S(x)$ the indicator function that is $1$ if $x \in S$ and $0$ otherwise.}
\[ \disc_t(B) := \Big|\chi_1 \ind_B(x_1) + \ldots + \chi_t  \ind_B(x_t)\Big|.\]

Our goal is to assign the signs $\chi_1,\ldots,\chi_t$ so as to minimize $\max_{t \le T} \disc_t(B)$ for every axis-parallel box $B$. 

There is a standard reduction (see \secref{subsec:tusnady}) from the online Tusn\'ady problem to the case of $s$-sparse vectors in $\R^N$ where $s = (\log T)^d$ but the ambient dimension $N$ is $O_d(T^d)$. Using this reduction, along with \thmref{thm:komlos}, directly gives an $O(\log^{3d/2+4} T)$ bound for the online Tusn\'ady's problem that works for any {\em arbitrary} distribution on points, instead of just product distributions as in \cite{BJSS20}. In fact, we prove a more general result where we can choose arbitrary directions to test discrepancy and we use this flexibility (see  \thmref{thm:gen-disc} below) to improve the exponent of the bound further, and essentially match the best offline bound of $O((\log^{d-1/2} T)$ \cite{Nikolov-Mathematika19}. 


\begin{restatable}[Online Tusn\'ady's problem for arbitrary $\boldp$]{theorem}{tusnady}
\label{thm:tusnady}
Let $\boldp$ be an arbitrary distribution on $[0,1]^d$. 
For points $x_1,\ldots,x_T$ sampled i.i.d from $\boldp$, there is an algorithm which selects signs $\chi_t \in \{\pm 1\}$ such that with high probability for every axis-parallel box $B$, we have  $\max_{t \in [T]} \disc_t(B) = O_d(\log^{d+4} T)$.
\end{restatable}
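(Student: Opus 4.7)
The plan is to reduce the online Tusn\'ady problem to a high-dimensional instance of the online Koml\'os setting via a standard dyadic decomposition, and then invoke \thmref{thm:komlos} --- or rather the stronger test-vector variant \thmref{thm:gen-disc} flagged in the introduction. First, snap each point $x_t$ to the finite grid $G = (\{0,1,\dots,T\}/T)^d \subset [0,1]^d$; the coordinates move by at most $1/T$, which perturbs the discrepancy of any axis-aligned box by only $O(1)$ over the whole execution and is absorbed in the claimed bound. It then suffices to bound $\disc_t(B)$ for boxes with endpoints in $G$, of which there are at most $T^{O(d)}$. Every coordinate interval with endpoints in $G$ is a disjoint union of at most $2\log T$ dyadic intervals, so every such axis-aligned box decomposes as a disjoint union of $s := (2\log T)^d = O_d(\log^d T)$ dyadic boxes.

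\textbf{Sparse encoding and bound.} Let $\mathcal{D}$ be the set of all dyadic boxes at scales down to $1/T$, so $|\mathcal{D}| = O_d(T^d)$, and define the feature map $\phi \colon G \to \{0,1\}^{\mathcal{D}}$ by $(\phi(x))_D = \ind_D(x)$; each $\phi(x)$ is exactly $s$-sparse. For any grid-aligned box $B$ with dyadic decomposition $\mathcal{D}_B$, one has $\ind_B(x) = \langle \phi(x), u_B\rangle$ with the $s$-sparse indicator $u_B := \sum_{D \in \mathcal{D}_B} e_D$. Using the same signs $\chi_t$ for the original and transformed problems, $\disc_t(B) = |\langle d_t^\phi, u_B\rangle|$ where $d_t^\phi = \sum_{r \le t} \chi_r \phi(x_r)$. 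Since $\|\phi(x)\|_2 = \sqrt{s}$, the vectors $\phi(x_t)/\sqrt{s}$ are i.i.d.\ unit $\ell_2$-vectors from $\phi_* \boldp$, a legitimate online Koml\'os instance. A direct application of \thmref{thm:komlos} would yield $\|d_t^\phi/\sqrt{s}\|_\infty = O_d(\log^4 T)$, i.e.\ $O_d(\log^{d/2+4} T)$ per dyadic coordinate; the triangle inequality over the $s$ pieces of $B$ would then lose another factor of $s$ and only deliver $O_d(\log^{3d/2+4} T)$. To recover the target exponent, I would instead invoke the test-vector form \thmref{thm:gen-disc} on the family $\{u_B/\sqrt{s}\}$, which bounds $|\langle d_t^\phi/\sqrt{s},\, u_B/\sqrt{s}\rangle|$ by $O_d(\log^4 T)$ directly, yielding
\[\disc_t(B) \;=\; |\langle d_t^\phi, u_B\rangle| \;\le\; s \cdot O_d(\log^4 T) \;=\; O_d(\log^{d+4} T).\]
A union bound over $t \in [T]$ and the $T^{O(d)}$ grid-aligned boxes contributes only another $O_d(\log T)$ inside the logarithm, which is absorbed.

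\textbf{Main obstacle.} Once the test-vector form of online Koml\'os is available, the rest is an essentially combinatorial reduction and contains no real surprises. The substantive difficulty therefore sits inside \thmref{thm:gen-disc}: one has to upgrade the potential-function / anti-concentration argument behind \thmref{thm:komlos} so that it simultaneously controls inner products $\langle d_t, u\rangle$ against an \emph{arbitrary} prescribed family of unit test vectors, rather than only coordinate projections. This is exactly where the $\sqrt{s}$ saving over the naive triangle-inequality bound is earned, and where I would expect the real work of the proof to live.
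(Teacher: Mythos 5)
Your reduction to \thmref{thm:gen-disc} via dyadic decomposition is the same route the paper takes, and you correctly identify the $\ell_2$-accumulation over the $s = O_d(\log^d T)$ dyadic pieces as the source of the saving over the naive $O_d(\log^{3d/2+4}T)$ bound, with the same scaling calculation. The part of the argument that handles boxes \emph{not} expressible as unions of dyadic boxes, however, has a genuine gap.

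Your first step --- snap each $x_t$ to the grid $G = (\{0,\dots,T\}/T)^d$ and claim this ``perturbs the discrepancy of any axis-aligned box by only $O(1)$'' --- is false for a general distribution $\boldp$. The signs $\chi_t$ are chosen by an algorithm that sees only the snapped points, and the perturbation term $\bigl|\sum_{\tau \le t}\chi_\tau\bigl(\ind_B(x_\tau)-\ind_B(\hat{x}_\tau)\bigr)\bigr|$ can involve $\Theta(T)$ points if $\boldp$ concentrates near a grid hyperplane. Concretely, if $\boldp$ places half its mass just left of $1/2$ and half just right, and $1/2$ is a grid point, all points snap to the same location and the algorithm cannot distinguish the two halves; for the box $B=[1/2,1]\times[0,1]^{d-1}$, the signs restricted to the right-half points are effectively unconstrained and their signed count will be $\Theta(\sqrt{T})$ with constant probability. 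The paper avoids this by first applying the probability integral transformation $x_i \mapsto F_i(x_i)$ coordinate-wise (after which each marginal is exactly uniform on $[0,1]$), and then, for an arbitrary box $\widetilde B$, decomposing $\widetilde B$ into a dyadic-generated box plus at most $2d$ slivers each sitting inside a width-$1/T$ ``minimum dyadic stripe.'' Uniformity of the marginals plus a Chernoff bound shows each such stripe contains only $O(\log T)$ points w.h.p., so those slivers contribute $O_d(\log T)$ to the discrepancy. Without this preprocessing step your snapping error is not even $o(\sqrt{T})$, so the reduction to grid-aligned boxes is unjustified. If you insert the probability integral transformation up front your remaining argument goes through (with the $O(1)$ weakened to the correct $O_d(\log T)$ sliver error, which is still absorbed), and it then coincides with the paper's proof.
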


\thmref{thm:komlos} and \thmref{thm:tusnady} follow from the more general result below. 

\begin{restatable}[Discrepancy for Arbitrary Test Directions]{theorem}{test}
\label{thm:gen-disc}
    Let $\SE \subseteq \BR^n$ be a finite set of test vectors with Euclidean norm at most $1$ and $\sfp$ be a distribution in $\BR^n$ supported on vectors with Euclidean norm at most $1$. Then, for vectors $v_1, \ldots, v_T$ sampled i.i.d. from $\sfp$, there is an online algorithm that with high probability maintains a discrepancy vector $d_t$ satisfying
    \begin{align*}
  \max_{z \in \SE} |d_t^\top z| = O((\log (|\SE|) + \log T)\cdot \log^3(nT)) ~\text{ for every } t \in [T].
    \end{align*}
\end{restatable}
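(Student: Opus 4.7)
The plan is to run a greedy algorithm that at every step picks $\chi_t \in \{\pm 1\}$ to minimize a carefully designed exponential potential, and then show that the potential stays bounded with high probability. The potential I would work with is
\[
\Phi_t \;=\; \sum_{z \in \SE \cup \SE'} \cosh\!\bigl(\lambda\, d_t^\top z\bigr),
\]
where $\lambda \asymp ((\log|\SE| + \log T)\log^{3}(nT))^{-1}$ is the parameter controlling the final discrepancy bound, and $\SE'$ is an auxiliary finite set of \emph{regularization} test directions. The role of $\SE'$ is the novelty flagged in the abstract: it is not needed to bound the discrepancy on user-provided tests, but is there to constrain how $d_t$ evolves and enforce anti-concentration of $\langle d_{t-1}, v_t\rangle$. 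A natural candidate is a $\delta$-net of the unit sphere of size $(nT)^{O(1)}$ (or a sample of Gaussian directions), chosen so that boundedness of the corresponding $\cosh$ terms forces $|d_t^\top z|$ to remain small in \emph{every} direction $z$, not only in those of $\SE$.

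To analyze a single step, expand $\Phi_t = \Phi_{t-1} + \chi_t L_{t-1}(v_t) + Q_{t-1}(v_t)$, where
\[
L_{t-1}(v_t) \;=\; \lambda\sum_{z} (v_t^\top z)\, \sinh(\lambda d_{t-1}^\top z), \qquad Q_{t-1}(v_t) \;=\; O(\lambda^2)\sum_{z}(v_t^\top z)^2\cosh(\lambda d_{t-1}^\top z),
\]
using the second-order estimate $\cosh(x+h) \le \cosh(x)\bigl(1 + h\tanh(x) + h^2 e^{|h|}/2\bigr)$ for $|h| \le 1$. The greedy rule gives $\Phi_t - \Phi_{t-1} \le -|L_{t-1}(v_t)| + Q_{t-1}(v_t)$. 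Since $\|v_t\|_2, \|z\|_2 \le 1$, the quadratic remainder is at most $O(\lambda^2 \Phi_{t-1})$. The anti-concentration step will argue that, letting $a_{t-1} := \sum_z \sinh(\lambda d_{t-1}^\top z)\, z$, the quantity $\E_{v_t}|L_{t-1}(v_t)| = \lambda\, \E_{v_t}|\langle v_t, a_{t-1}\rangle|$ is at least a constant multiple of $\lambda \log\Phi_{t-1}$ whenever $\Phi_{t-1}$ exceeds a threshold. Consequently, $\E[\log \Phi_t - \log \Phi_{t-1}\mid \mathcal{F}_{t-1}] \le 0$ once $\Phi_{t-1}$ is above that threshold; a standard Azuma--Hoeffding argument on (a suitable truncation of) the supermartingale $\log \Phi_t$ then yields $\Phi_t \le \poly(nT)$ with high probability, which converts to $\max_{z \in \SE}|d_t^\top z| \le O(\lambda^{-1}\log(nT))$ by isolating any single term in the sum defining $\Phi_t$ and inverting $\cosh$.

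The main obstacle is the anti-concentration statement itself: showing that whenever $\Phi_{t-1}$ is not already small, $a_{t-1}$ has a non-negligible projection onto $\sfp$'s typical support, \emph{regardless} of what $\sfp$ looks like. This is where the augmentation by $\SE'$ pays off: the $\cosh(\lambda d_t^\top z)$ terms for $z$ ranging over a fine net of the sphere prevent $a_{t-1}$ from concentrating in a direction that $\sfp$ could systematically avoid, because such concentration would force some $|d_{t-1}^\top z|$ with $z \in \SE'$ to grow and $\Phi_{t-1}$ to be dominated by that term. Making this precise requires balancing (i) the granularity of the net $\SE'$ so that $\log|\SE \cup \SE'| = O(\log(nT))$, (ii) the choice of $\lambda$ relative to the per-step quadratic contribution, and (iii) the threshold at which the drift argument activates; it is this balance that produces the $\log^3(nT)$ factor in the final bound, and I expect the technical bulk of the proof to live in establishing the quantitative anti-concentration lemma.
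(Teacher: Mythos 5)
Your overall scaffolding is the same as the paper's: greedy sign choice minimizing a $\cosh$-type exponential potential, a second-order Taylor expansion isolating a linear drift term $L$ and a quadratic term $Q$, a drift lemma, and a martingale/Markov argument converting a bounded drift into a $\poly(nT)$ bound on the potential and hence a $\lambda^{-1}\log(\cdot)$ bound on each $|d_t^\top z|$. Where the proposal goes wrong is precisely at the step you flag as ``the main obstacle,'' and the fix you propose does not work.

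You propose to augment $\SE$ with a $\delta$-net $\SE'$ of the unit sphere of size $(nT)^{O(1)}$ and to use boundedness of $\cosh(\lambda d_t^\top z)$ over $z\in\SE'$ to force anti-concentration of $\langle v_t, a_{t-1}\rangle$. In dimension $n$, any $\delta$-net of $\BS^{n-1}$ has size at least $(c/\delta)^{n-1}$, so a net of size $\poly(nT)$ has $\delta = 1 - o(1)$ once $n$ is moderately large; such a coarse net gives essentially no control on directions not in $\SE'$, and $\sfp$ can be concentrated far from every net point. Conversely, a net fine enough to give real control has $\log|\SE'| = \Omega(n)$, which ruins the claimed bound, since the potential-inversion step gives $|d_t^\top z| \lesssim \lambda^{-1}\log(|\SE\cup\SE'|\,\poly(T))$. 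So the regularization-by-net idea is stuck between two failure modes. The paper's resolution is qualitatively different: the auxiliary ``test'' mass in the potential is drawn from the \emph{input distribution} $\sfp$ itself, i.e.\ the potential is an expectation over $x\sim \frac12\sfp + \frac12\sfp_z$. Boundedness of the potential then directly controls the exponential moment $\E_{v\sim\sfp}[\exp(\lambda|d^\top\Pi_k v|)]$, and the lower bound on the linear term is obtained by the coupling trick $\E_v[|L(v)|]\ge \E_v[L(v)f(v)]/\|f\|_\infty$ with $f(v)=d^\top\Pi v\cdot\ind_{\CG}(v)$ and $\CG$ the high-probability event where $\lambda|d^\top\Pi v|$ is logarithmically bounded. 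This is distribution-adaptive in a way no fixed net can be.

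There is a second missing ingredient. The paper first reduces (\lref{lem:covariance_reduction}) to a $\kappa$-dyadic covariance with $\kappa=O(\log(nT))$ and uses a potential decomposed by eigenvalue scale, $\Phi_t=\sum_k \E_x[\cosh(\lambda d_t^\top\Pi_k x)]$. This decomposition is essential to matching the linear and quadratic drift terms: after plugging in $\cov$, both become sums of the form $\sum_k 2^{-k}\E_x[|s_k(x)|\cdots]$ and cancel scale by scale. With your single undifferentiated potential $\sum_z\cosh(\lambda d_t^\top z)$ over general $\sfp$, the quadratic term $\E_v[\sum_z(v^\top z)^2\cosh(\cdot)]$ and the linear term need not be comparable across eigenspaces of very different eigenvalues, and the drift argument does not close. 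Both the $\sfp$-in-the-potential idea and the dyadic eigenspace decomposition are what produce the $\log^3(nT)$ factor; without them the approach as written cannot reach the stated bound.
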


In fact, the proof of the above theorem also shows that given any arbitrary distribution on unit test vectors $z$, one can maintain a bound on the exponential moment $\BE_z[\exp(|\ip{d_t}{z}|)]$ at all times. 

The key idea involved in proving \thmref{thm:gen-disc} above, is a novel potential function approach. In addition to controlling the discrepancy $d_t$ in the test directions, we also control how the distribution of $d_t$ relates to the input vector distribution $\boldp$. This leads to better anti-concentration properties, which in turn gives better bounds on discrepancy in the test directions.
We describe this idea in more detail in Sections
\ref{sec:high-level-tech} and \ref{sec:proofOverview}.

\paragraph{Online Banaszczyk setting.}
Next, we consider discrepancy with respect to general norms given by an arbitrary convex body $K$. To recall, in the offline setting, Banaszczyk's seminal result \cite{B12} shows that if $K$ is any convex body with Gaussian measure $1-1/(2T)$, then for any vectors $v_1,\ldots,v_T$ of $\ell_2$-norm at most $1$, there exist signs $\chi_1,\ldots, \chi_T$ such that the discrepancy vectors $d_t \in O(1) \cdot K$ for all $t \in T$. 

Here we study the online version when the input distribution $\sfp \in \BR^n$ has sufficiently good tails. Specifically, we say a univariate random variable $X$ has \emph{sub-exponential tails} if for all $r > 0$, $\BP\big[ |X - \BE[X]| > r \sigma(X)\big]  \leq e^{-\Omega(r)}$, 
where $\sigma(X)$ denotes the standard-deviation of $X$. We say a multi-variate distribution $\sfp \in \BR^n$ has sub-exponential tails if all its one-dimensional projections have sub-exponential tails. That is, $$\BP_{v \sim p}\left[\Big|\ip{v}{\theta} - \mu_\theta] \Big| \ge \sigma_\theta \cdot r\right] \le e^{-\Omega(r)} ~~ \text{ for every } \theta \in \BS^{n-1} \text{ and every } r>0,$$  
where $\mu_\theta$ and $\sigma_\theta$ are the mean and standard deviation\footnote{Note that when the input distribution $\sfp$ is $\alpha$-isotropic, \emph{i.e.} the covariance is $\alpha I_n$, then $\sigma_\theta = \alpha$ for every direction $\theta$, but the above definition is a natural generalization to handle an arbitrary covariance structure.} of the scalar random variable $X_\theta = \ip{v}{\theta}$.

Many natural distributions, such as when $v$ is chosen uniform over the vertices of the $\{\pm 1\}^n$ hypercube (scaled to have Euclidean norm one), uniform from a convex body, Gaussian distribution (scaled to have bounded norm with high probability), or uniform on the unit sphere, have  a sub-exponential tail and in these cases our bounds match the offline bounds up to poly-logarithmic factors.

\begin{restatable}[Online Banaszczyk Setting]{theorem}{Banaszczyk}
\label{thm:gen-disc-ban}
    Let $K \subseteq \BR^n$ be a symmetric convex body with $\gamma_n(K) \ge 1/2$ and $\sfp$ be a distribution with sub-exponential tails that is supported over vectors of Euclidean norm at most 1. Then, for vectors $v_1, \ldots, v_T$ sampled i.i.d. from $\sfp$, there is an online algorithm that  with high probability maintains a discrepancy vector $d_t$ satisfying $d_t \in C \log^5(nT) \cdot K$ for all $t \in [T]$ and a universal constant $C$.
\end{restatable}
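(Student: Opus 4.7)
The plan is to extend \thmref{thm:gen-disc} from a finite set of test directions to the infinite polar body $K^{\circ}$ by combining its potential function with Talagrand-style generic chaining on $K^{\circ}$. First, since $\gamma_n(K) \ge 1/2$ is equivalent, up to absolute constants, to the Gaussian width bound $\BE_g[\sup_{z \in K^{\circ}} \ip{g}{z}] = O(1)$, Talagrand's majorizing-measure theorem yields an admissible sequence of finite nets $T_0 \subseteq T_1 \subseteq \cdots \subseteq K^{\circ}$ with $|T_k| \le 2^{2^k}$ and $\sup_{z \in K^{\circ}}\sum_{k\ge 0} 2^{k/2}\,\|z - \pi_k(z)\|_2 = O(1)$, where $\pi_k(z)$ is the closest point to $z$ in $T_k$. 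One truncates the chain at $K_{\max} = O(\log(nT))$, absorbing the remaining tail into the trivial bound $\|d_t\|_2 \le \poly(nT)$.

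Next, mimicking the potential construction underlying \thmref{thm:gen-disc}, I would introduce a single combined potential
\[ \Phi(d) \;=\; \sum_{k=1}^{K_{\max}} \sum_{y \in \Delta_k} \cosh\!\bigl(\lambda_k\, \ip{d}{\hat y}\bigr) \;+\; \Phi_{\mathrm{ac}}(d), \]
where $\Delta_k = \{z - z' : z \in T_k,\ z' \in T_{k-1}\}$ is the set of chain differences at scale $k$, $\hat y = y/\|y\|_2$, and the scale parameters $\lambda_k$ are chosen so that $\lambda_k \cdot \sup_{y\in\Delta_k}\|y\|_2$ decreases like $2^{-k/2}$, matching the chaining weight. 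The auxiliary term $\Phi_{\mathrm{ac}}$ plays exactly the role it does in \thmref{thm:gen-disc}: it constrains how $d_t$ evolves so that its one-dimensional projection onto every test direction $\hat y$ is sufficiently anti-concentrated, which converts the first-order drift of $\cosh$ under the greedy sign choice into a strict decrement on average.

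At each step the algorithm selects $\chi_t \in \{\pm 1\}$ minimizing $\Phi(d_{t-1}+\chi_t v_t)$. A second-order Taylor expansion of $\cosh$ yields a linear piece (cancelled by the sign choice, cf.\ the proof of \thmref{thm:komlos}) and a quadratic piece proportional to $\lambda_k^2\, \ip{v_t}{\hat y}^2$. The \emph{sub-exponential tail} hypothesis on $\sfp$ is precisely what is needed to bound $\BE_{v\sim\sfp}[\exp(\lambda\ip{v}{\hat y})]$ for $|\lambda| \lesssim 1/\polylog(nT)$, so that the quadratic term is absorbed by the decrement contributed by $\Phi_{\mathrm{ac}}$ at \emph{every} level $k$. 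A supermartingale / Freedman-type argument then guarantees $\Phi(d_t) \le \poly(nT)$ throughout the run with high probability, which via the telescoping chain $\ip{d_t}{z} = \sum_k \ip{d_t}{\pi_k(z) - \pi_{k-1}(z)}$ and the choice of $\lambda_k$ yields $\ip{d_t}{z} = O(\log^5(nT))$ uniformly over $z \in K^{\circ}$.

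The principal obstacle is that generic chaining is inherently a sub-Gaussian theory, whereas the input $v_t$ provides only sub-exponential one-dimensional marginals. Consequently, the usable range of each $\lambda_k$ shrinks by an extra $\log(nT)$ factor compared to the Komlós setting of \thmref{thm:komlos}, and this loss must be propagated through all $O(\log(nT))$ levels of the chain. Calibrating the scales $\lambda_k$ jointly with the anti-concentration strengths across levels so that per-step errors absorb simultaneously---while still packaging everything into a \emph{single} potential amenable to the local per-step analysis---is the main technical hurdle, and is what yields the exponent $5$ in $\log^5(nT)$ instead of the $4$ in \thmref{thm:komlos}.
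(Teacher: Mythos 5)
Your high-level plan — combine a $\cosh$-type potential with a generic-chaining decomposition of $K^{\circ}$, truncate the chain, and let the per-scale exponent $\lambda_k$ grow so that the factor $\log|\Delta_k|\approx 2^k$ from the union bound cancels against $\lambda_k^{-1}$ — matches the paper's strategy (the paper realizes the same calibration by scaling the chaining vectors by $1/\eps_\ell^2$ at a fixed $\lambda$; these are equivalent). Two secondary items you omit are handled in the paper: the potential is an \emph{expectation} under a carefully weighted test distribution whose mass at scale $\eps$ is $\approx 2^{-2/\eps^2}$ (your unweighted $\sum_{y\in\Delta_k}$ makes $\Phi(0)$ already of order $2^{2^{K_{\max}}}$, so the ``$\Phi\le\poly(nT)$'' target is vacuous), and the non-isotropic covariance is tamed by the $\kappa$-dyadic reduction and the projectors $\Pi_k$, which is why all the bounds in the paper come in blocks indexed by $k$.

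The real gap is your account of where sub-exponentiality enters. You claim it is needed to bound $\BE_{v}[\exp(\lambda\ip{v}{\hat y})]$ for $|\lambda|\lesssim 1/\polylog(nT)$ with $\hat y$ a unit vector — but with $\|v\|_2\le 1$ and $\|\hat y\|_2=1$ this quantity is at most $e^{|\lambda|}$ with no assumption on $\sfp$ whatsoever. The actual obstruction is that at the fine scales your calibration forces $\lambda_k$ to be as large as $\Theta(\sqrt{n})$, so the per-step increment $\lambda_k|\ip{v_t}{\hat y}|$ is no longer $O(1)$ and the second-order Taylor bound $\cosh(\lambda(a+\delta))\le\cosh(\lambda a)+\lambda\sinh(\lambda a)\delta+O(\lambda^2)\cosh(\lambda a)\delta^2$ simply fails pointwise. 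The paper instead uses the globally valid inequality $\cosh(\lambda a)-\cosh(\lambda b)\le\lambda\sinh(\lambda b)(a-b)+\tfrac{\lambda^2}{2}\cosh(\lambda b)\,e^{|a-b|}(a-b)^2$ and then shows (Lemma 6.5) that the dangerous factor is controlled \emph{in expectation}: $\BE_v\bigl[e^{\lambda|v^\top y|}\,|v^\top y|^2\bigr]\le C\cdot\BE_v\bigl[|v^\top y|^2\bigr]$ whenever $\sigma_y\le 1/4$, and this is precisely the step that consumes the sub-exponential hypothesis on $\sfp$. Without this ``average-case Taylor'' estimate there is no way to close the drift bound at the fine scales, so the proposal as written does not establish the theorem.
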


The proof of the above theorem, while similar in spirit to \thmref{thm:gen-disc}, is much more delicate. In particular, we cannot use that theorem directly as capturing a general convex body as a polytope may require exponential number of constraints (the set $\SE$ of test vectors). 




\paragraph{Online Weighted Multi-Color Discrepancy.}

Finally we consider the setting of weighted multi-color discrepancy, where we are given vectors $v_1, \ldots, v_T \in \BR^n$ sampled i.i.d. from a distribution $\sfp$  on vectors with $\ell_2$-norm at most one, an integer $R$ which is the number of colors available, positive weights $w_c \in [1,\eta]$ for each color $c \in [R]$, and a norm $\arbnorm{\cdot}$. At each time $t$, the algorithm has to choose a color $c \in [R]$ for the arriving vector, so that the discrepancy $\disc_t$ with respect to $\arbnorm{\cdot}$, defined below, is minimized for every $t \in [T]$:
\begin{align*}
   \disc_t(\arbnorm{\cdot}) := \max_{c\neq c'} \disc_t(c,c') ~\text{ where }~ \disc_t(c,c') := \arbnorm{  \frac{d_c(t) / w_c - d_{c'}(t) / w_{c'} } {1/w_c + 1/w_{c'} }},
\end{align*}
with $d_c(t)$ being the sum of all the vectors that have been given the color $c$ till time $t$. We note that (up to a factor of two) the case of unit weights and $R=2$ is the same as assigning $\pm$ signs to the vectors $(v_i)_{i \le T}$, and we will also refer to this setting as \emph{signed discrepancy}. 

We show that the bounds from the previous results also extend to the setting of multi-color discrepancy.
\begin{restatable}[Weighted multi-color discrepancy]{theorem}{multicolor}\label{thm:multicolor-intro}
 For any input distribution $\sfp$ and any set $\SE$ of $\poly(nT)$ test vectors with Euclidean norm at most one, there is an online algorithm for the weighted multi-color discrepancy problem that maintains discrepancy $O(\log^2(R \eta) \cdot \log^4(nT))$ with the norm $\|\cdot\|_* = \max_{z \in \SE} |\ip{\cdot}{z}|$. 
 
 Further, if the input distribution $\sfp$ has sub-exponential tails then one can maintain multi-color discrepancy $O(\log^2(R \eta)\cdot \log^5(nT))$ for any norm $\|\cdot\|_*$ given by a symmetric convex body $K$ satisfying $\gamma_n(K) \ge 1/2$.
\end{restatable}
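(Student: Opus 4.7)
The plan is to reduce the weighted multi-color discrepancy problem to signed (two-color) discrepancy, for which \thmref{thm:gen-disc} and \thmref{thm:gen-disc-ban} already apply, via the standard binary-tree reduction. First I would build a rooted binary tree $\mathcal{T}$ whose leaves correspond to colors: rounding each $w_c$ up to the nearest power of two and replacing color $c$ by $2^{\lceil\log_2 w_c\rceil}$ virtual unit-weight copies yields $R' = O(R\eta)$ virtual colors arranged in a balanced binary tree of depth $h = O(\log(R\eta))$, with each original color occupying a contiguous subtree. Under this identification the weighted discrepancy $\disc_t(c,c')$ is, up to constants, the difference of per-subtree averages of the virtual-leaf discrepancies.

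At each internal node $u \in \mathcal{T}$ I would run an independent instance of the online signed-discrepancy algorithm of \thmref{thm:gen-disc} (respectively \thmref{thm:gen-disc-ban} in the Banaszczyk setting) on the vectors that reach $u$. When $v_t$ arrives, it enters at the root, and at each internal node $u$ it currently occupies, the local signed algorithm chooses $\chi_u(t) \in \{\pm 1\}$: route $v_t$ to the left child if $\chi_u(t) = +1$ and to the right child otherwise, iterating until $v_t$ reaches a leaf, whose label determines its color. The vectors reaching $u$ form a random substream of the i.i.d.\ input and still have $\ell_2$-norm at most one; the sub-exponential-tails condition needed for \thmref{thm:gen-disc-ban} is a pointwise property of one-dimensional projections and hence is inherited by any such substream.

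The weighted discrepancy is then bounded by a telescoping argument along tree paths. For any two virtual leaves $c,c'$ with least common ancestor $a$, the difference $d_c(t) - d_{c'}(t)$ can be written as a geometric-coefficient signed combination of the per-node signed discrepancies $\sum_{v \text{ at } u} \chi_u \cdot v$ along the root-to-$c$ and root-to-$c'$ paths, with the shared prefix from the root to $a$ cancelling; this gives $\|d_c(t) - d_{c'}(t)\|_* \le O(1) \cdot \max_u D_u(t)$, where $D_u(t)$ is the local signed discrepancy at $u$. Aggregating virtual leaves back to original colors with the $1/w_c + 1/w_{c'}$ normalization preserves this up to constants, so $\disc_t(c,c') = O(\max_u D_u(t))$. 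A union bound over the $O(R\eta)$ internal nodes and the $T$ time steps forces us to boost per-node failure probabilities, which in \thmref{thm:gen-disc} appears inside the $\log|\SE|+\log T$ term and contributes an additional $O(\log(R\eta))$ factor to each per-node discrepancy; combining this with the additional $O(\log(R\eta))$ loss that arises in the telescoping across subtrees of unequal sizes gives the claimed $O(\log^2(R\eta)\cdot\log^4(nT))$ bound in the general case and $O(\log^2(R\eta)\cdot\log^5(nT))$ in the Banaszczyk setting.

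The main obstacle will be the careful bookkeeping required to reconcile the $1/w_c + 1/w_{c'}$ normalization in $\disc_t(c,c')$ with the subtree-average machinery, especially when the original weights are not powers of two or lead to subtrees of different depths. A secondary issue is ensuring that the high-probability guarantees of \thmref{thm:gen-disc} and \thmref{thm:gen-disc-ban} remain valid on the substreams reaching each internal node: the set of vectors that reach $u$ depends on ancestor sign decisions and hence on past vectors, so one must argue either that the signed-discrepancy algorithms are robust against any adaptively selected substream of i.i.d.\ arrivals, or that the anti-concentration properties driving the potential-function analysis are preserved when conditioning on the filtration generated by ancestor decisions.
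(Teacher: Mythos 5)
Your binary-tree embedding matches the paper's in spirit, but the recursive ``run an independent instance of the signed-discrepancy algorithm at each internal node'' scheme has a genuine gap, and it is precisely the one you flag as a ``secondary issue''---in fact it is the central obstruction, which the paper explicitly calls out as the reason the naive recursive reduction (which works fine offline) breaks in the online stochastic setting. The greedy sign $\chi_u(t)$ at an internal node $u$ is a function of the arriving vector $v_t$ itself (the potential-minimizing choice looks at $\ip{d_{u,t-1}}{v_t}$), so the set of vectors routed to a given child of $u$ is a \emph{value-dependent} selection of the input stream. Conditioned on reaching a non-root node, $v_t$ is no longer distributed as $\sfp$, and the conditional law depends on the entire history. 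Theorems~\ref{thm:gen-disc} and~\ref{thm:gen-disc-ban} require i.i.d.\ arrivals from a fixed distribution; nothing in their proofs shows robustness to this kind of adaptively biased substream, and the bias can in principle be adversarial from the child's perspective (e.g.\ the parent could consistently send ``hard'' directions to one side). Your suggestion that one could instead ``argue that the anti-concentration properties driving the potential-function analysis are preserved when conditioning on the filtration generated by ancestor decisions'' is unsubstantiated and is not obviously true; it is exactly the missing ingredient.

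The paper avoids this entirely by \emph{not} decomposing the algorithm into independent per-node subroutines. Instead it defines a single global potential
$\Psi_t = \sum_{(j,k)\in\calT}\Phi\bigl(\beta\, d^-_{j,k}(t)\bigr)$,
summing the signed-discrepancy potential over all internal nodes, evaluated at the weighted child-difference vectors $d^-_{j,k}$, and the algorithm makes one joint greedy choice of leaf minimizing $\Delta\Psi_t$. The drift bound (Lemma~\ref{clm:driftmulti}) compares the greedy choice against a randomized root-to-leaf walk in which only one random level's child is chosen to make the local $L$-term negative, inheriting the abstract drift inequality~\eqref{eqn:potential} satisfied by $\Phi$. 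Because every node's potential term is updated by the \emph{same} incoming $v_t\sim\sfp$ (just multiplied by the deterministic coefficients $X_{j,k}(\ell)$), the i.i.d.\ structure is used exactly once, never conditioned on routing decisions. The two factors of $\log(R\eta)$ also arise differently than you sketch: one is the $\beta^{-1}=O(h)$ scaling built into the potential, and the other is the factor $O(h)$ from aggregating along a root-to-leaf path in Lemma~\ref{lem:node_prop_to_weights}; there is no union bound over nodes inflating the per-node guarantee. To salvage your plan you would need a new argument (not present in the paper) that the potential analysis of \secref{subsec:test_theorem_proof} tolerates adversarially filtered substreams, which seems difficult and is not claimed anywhere.
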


As an application, the above theorem implies upper bounds for multi-player envy minimization in the online stochastic setting, as defined in~\cite{BenadeKPP-EC18}, by reductions similar to those in \cite{JiangKS-arXiv19} and \cite{BJSS20}.

We remark that in the offline setting, such a statement with logarithmic dependence in $R$ and $\eta$ is easy to prove by identifying the various colors with leaves of a binary tree and recursively using the offline algorithm for signed discrepancy. It is not clear how to generalize such a strategy to the online stochastic setting, since the algorithm for signed discrepancy might use the stochasticity of the inputs quite strongly. 

By exploiting the idea of working with the Haar basis, we show how to implement such a strategy in the online stochastic setting: we prove that if there is a greedy strategy for the signed discrepancy setting that uses a potential satisfying certain requirements, then it can be converted to the weighted multi-color discrepancy setting in a black-box manner.

\subsection{High-Level Approach} \label{sec:high-level-tech}
Before describing our ideas, it is useful to discuss the bottlenecks in the previous approach. In particular, 
the quantitative bounds for the online Koml\'{o}s problem, as well as for the case of sparse vectors obtained in \cite{BJSS20} are the best possible using their approach, and improving them further required new ideas. We describe these ideas at a high-level here, and refer to Section~\ref{sec:proofOverview}
for a more technical overview.

\paragraph{Limitations of previous approach.} 

For intuition, let us first consider the simpler setting, where we care about minimizing the Euclidean norm of the discrepancy vector $d_t$ --- this will already highlight the main issues. As mentioned before, if the adversary is adaptive in the online setting, then they can always choose the next input vector $v_t$ to be orthogonal to $d_{t-1}$ (i.e., $\ip{d_{t-1}}{v_t} = 0$) causing $\|d_t\|_2$ to grow as $T^{1/2}$. However, if $\ip{d_{t-1}}{v_t}$ is typically large, then one can reduce $\|d_t\|_2$ by choosing $\chi_t = - \sign(\ip{d_{t-1}}{v_t})$, as the following shows:
\begin{equation}\label{eqn:l2}
    \ \|d_t\|_2^2 - \|d_{t-1}\|_2^2 ~=~ 2\chi_t \cdot \ip{d_{t-1}}{v_t} + \|v_t\|^2_2 ~\le~ -2 |\ip{d_{t-1}}{v_t}| + 1. 
\end{equation}

The key idea in \cite{BJSS20}  was that if the vector $v_t$ has uncorrelated coordinates (i.e.~$\E_{v_t \sim \boldp} [v_t(i)v_t(j)]=0$ for $i\neq j$), then one can exploit \emph{anti-concentration} properties to essentially argue that $|\ip{d_{t-1}}{v_t}|$ is typically large when $\|d_{t-1}\|_2$ is somewhat big, and the greedy choice above works, as it gives a \emph{negative drift} for the $\ell_2$-norm. However, uncorrelated vectors satisfy provably weaker anti-concentration properties, by up to a $n^{1/2}$ factor ($s^{1/2}$ for $s$-sparse vectors), compared to those with independent coordinates. 
This leads up to an extra $n^{1/2}$ loss in general. 

Moreover, to ensure uncorrelation one has to work in the eigenbasis of the covariance matrix of $\boldp$, which could destroy sparsity in the input vectors and give bounds that scale polynomially with $n$. \cite{BJSS20} also show that one can combine the above high-level uncorrelation idea with a potential function that tracks a soft version of maximum discrepancy in any coordinate, 
\begin{align}\label{eq:potential}
    \  \Phi_{t-1} = \sum_{i=1}^n \exp( \lambda d_{t-1}(i)),
\end{align}
to even get bounds on the $\ell_\infty$-norm of $d_t$. However, this is also problematic as it might lead to another factor $n$ loss, due to a change of basis (twice). 




To achieve sparsity based bounds in the special case of online Tusn\'ady's problem, previous approaches use the above ideas and exploit the special problem structure. In particular, when the input distribution $\sfp$ is a product distribution, \cite{BJSS20}  (and \cite{DFGR19}) observe that one can work with the natural Haar basis which also has a product structure in $[0,1]^d$ --- this makes the input vectors uncorrelated, while simultaneously preserving the sparsity due to the recursive structure of the Haar basis. However, this severely restricts $\boldp$ to product distributions and previously, it was unclear how to even handle a mixture of two product distributions.


\paragraph{New potential: anti-concentration from exponential moments.} 
Our results are based on a new potential. Typical potential analyses for online problems show that no matter what the current state is, the potential does not rise much when the next input arrives. As discussed above, this is typically exploited in the online discrepancy setting using \emph{anti-concentration} properties of the incoming vector $v_t \sim \sfp$ --- one argues that no matter the current discrepancy vector $d_{t-1}$, the inner product $\ip{d_{t-1}}{v_t}$ is typically large so that a sign can be chosen to decrease the potential (recall \eqref{eqn:l2}).


However, as in \cite{BJSS20}, such a worst-case analysis is restrictive as it requires $\boldp$ to have additional desirable properties such as uncorrelated coordinates. A key conceptual idea in our work is that instead of just controlling a suitable proxy for the norm of the discrepancy vectors $d_t$, we also seek to control structural properties of the distribution $d_t$. Specifically, we also seek to evolve the distribution of $d_t$ so that it  has better anti-concentration properties with respect to the input distribution. In particular, one can get much better anti-concentration for a random variable if one also has control on the higher moments. For instance, if we can bound the fourth moment of the random variable $Y_t \equiv \ip{d_{t-1}}{v_t}$, in terms of its variance, say $\BE[Y_t^4] \ll \BE[Y_t^2]^2$, then the Paley-Zygmund inequality implies that $Y_t$ is far from zero. However, working with $\BE[Y_t^4]$ itself is too weak as an invariant and necessitates looking at even higher moments.

A key idea is that these hurdles can be handled cleanly by looking at another potential that controls the \emph{exponential moment} of $Y_t$. Specifically, all our results are based on an aggregate potential function based on combining a potential of the form \eqref{eq:potential}, which enforces \emph{discrepancy constraints}, together with variants of the following potential, for a suitable parameter $\lambda$, which enforces \emph{anti-concentration constraints}:
$$\Phi_t \sim \BE_v[\exp(\lambda |\ip{d_{t}}{v}|)].$$
 This clearly allows us to control higher moments of $\ip{d_t}{v}$, in turn allowing us to show strong anti-concentration properties without any assumptions on $\boldp$. We believe the above idea of controlling the space of possible states where the algorithm can be present in, could potentially be useful for other applications. 


To illustrate the idea in the concrete setting of $\ell_2$-discrepancy, let us consider the case when the input distribution $\sfp$ is mean-zero and $1/n$-\emph{isotropic}, meaning the covariance $\cov=\BE_{v\sim \sfp}[vv^\top]= I_n/n$. Here, if we knew that the exponential moment $\BE_{v\sim \sfp}[\exp(|\ip{d_{t-1}}{v}|)] \le T$, then  it implies that with high probability $|\ip{d_{t-1}}{v}| \le \log T$ for $v \sim \sfp$. To avoid technicalities, let us assume that $|\ip{d_{t-1}}{v}| \le \log T$ holds with probability one. Therefore, when $v_t$ sampled independently from $\sfp$ arrives, then since $\BE\big[|AB|\big] \ge {\BE[AB]}/{\|B\|_\infty}$ for any coupled random variables $A$ and $B$, taking $A=\ip{d_{t-1}}{v_t}$ and $B = \ip{d_{t-1}}{v_t}/\log T$, we get that 
\[ \BE[|\ip{d_{t-1}}{v_t}|] ~\ge~ \frac{1}{\log T}\cdot \BE_{v_t}[ d_{t-1}^\top v_tv_t^\top d_{t-1}] ~=~  \frac{1}{\log T} \cdot d_{t-1}^\top \cov d_{t-1} ~=~ \frac{\|d_{t-1}\|_2^2}{n\log T}.\]

Therefore, whenever $\|d_{t-1}\|_2 \gg (n\log T)^{1/2}$, then the drift in $\ell_2$-norm of the discrepancy vector $d_t$ is negative. Thus, we can obtain the optimal $\ell_2$-discrepancy bound of $O((n\log T)^{1/2})$.\\





\noindent {\bf Banasaczyk setting.} In the Banaszczyk setting, the algorithm uses a carefully chosen set of test vectors at different scales that come from \emph{generic chaining}. In particular, we use a potential function based on test vectors derived from the generic chaining decomposition of the polar $K^{\circ}$ of the body $K$. 

However, as there can now be exponentially many such test vectors, more care is needed. First, we use that the Gaussian measure of $K$ is large to control the number of test vectors at each scale in the generic chaining decomposition of $K^{\circ}$.
Second, to be able to perform a union bound over the test vectors at each scale, one needs substantially stronger tail bounds than in \thmref{thm:gen-disc}. To do this, we scale the test vectors to be quite large, but this becomes problematic with standard tools for potential analysis, such as Taylor approximation, as the update to each term in the potential can be much larger than potential itself, and hard to control. Nevertheless, we show that if the distribution has sub-exponential tails, then such an approximation holds ``on average'' and the growth in the potential can be bounded. 

\paragraph{Concurrent and Independent Work.}

 In a concurrent and independent work,  Alweiss, Liu, and Sawhney \cite{ALS-arXiv20} obtained online algorithms achieving poly-logarithmic discrepancy bound for the Koml\'os and Tusn\'ady’s problems in the more general setting where the adversary is oblivious. Their techniques, however, are completely different from the potential function based techniques of the present paper.  In fact, as noted by the authors of \cite{ALS-arXiv20}, a potential function analysis encounters significant difficulties here --- the algorithm is required to control the evolution of the discrepancy vectors and such an invariant is difficult to maintain with a potential function, even for stochastic inputs. With the techniques and conceptual ideas we introduce, we can overcome this barrier in the stochastic setting. We believe that our potential-based approach to control the state space of the algorithm could prove useful for other stochastic problems.

\newcommand{\SV}{\mathscr{V}}

\section{Proof Overview} \label{sec:proofOverview}

Recall the setting: the input vectors $(v_\tau)_{\tau\le T}$ are sampled i.i.d. from $\sfp$ and satisfy $\|v\|_2 \le 1$, and we need to assign signs $\chi_1,\ldots,\chi_T$ in an online manner so as to minimize some target norm of the discrepancy vectors $d_t = \sum_{\tau \leq t} \chi_\tau v_\tau$. Moreover, we may also assume, without loss of generality that the distribution is mean-zero as the algorithm can toss a coin and work with either $v$ or $-v$. This means that the covariance matrix $\cov = \BE_v[vv^\top]$ satisfying $0 \preccurlyeq \cov \preccurlyeq I_n$.



\subsection{Komlos Setting} Here our goal is to minimize $\|d_t\|_\infty$. First, consider the  potential function $\BE_{v\sim \sfp}[\cosh(\lambda ~{d_t^\top v})]$ where $\cosh(a)=\frac12\cdot({e^a+e^{-a}})$. This however only puts anti-concentration constraints on the discrepancy vector and does not track the discrepancy in the coordinate directions. It is natural to add a potential term to enforce discrepancy constraints. In particular, let $\sfp_x = \frac12 \sfp + \frac12 \sfp_y$, where $\sfp_y$ is uniform over the standard basis vectors $(e_i)_{i \le n}$, then the potential   \begin{align}
    \ \Phi_t = \BE_{x\sim \sfp_x}[\cosh(\lambda~ {d_t^\top x})],
\end{align}
allows us to control the exponential moments of $\ip{d_{t-1}}{v_t}$ as well as the discrepancy in the target test directions. In particular, if the above potential $\Phi_t \le \poly(T)$, then we get a bound of $O(\lambda^{-1}\log T)$ on $\|d_t\|_{\infty}$. Next we sketch a proof that for the greedy strategy using the above potential, one can take $\lambda = 1/\log T$, so that the potential remains bounded by $\poly(T)$ at all times.

\begin{claim}[Informal: Bounded Drift] If $\Phi_{t-1} \le T^2$, then $\BE_{v_t}[\Delta\Phi_t] := \BE_{v_t}[\Phi_t - \Phi_{t-1}] \le 2$.
\end{claim}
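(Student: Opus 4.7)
The plan is to analyze the greedy strategy that picks $\chi_t \in \{\pm 1\}$ to minimize $\Phi_t$ upon arrival of $v_t$. The algebraic starting point is the $\cosh$-addition identity $\cosh(a + \chi b) = \cosh(a)\cosh(b) + \chi \sinh(a)\sinh(b)$, applied with $a = \lambda d_{t-1}^\top x$ and $b = \lambda v_t^\top x$ for each $x$ in the support of $\sfp_x$. Averaging over $x \sim \sfp_x$ and taking the minimum over $\chi \in \{\pm 1\}$ yields
\[
\Phi_t \le \BE_x\!\left[\cosh(\lambda d_{t-1}^\top x)\cosh(\lambda v_t^\top x)\right] - \left|\BE_x\!\left[\sinh(\lambda d_{t-1}^\top x)\sinh(\lambda v_t^\top x)\right]\right|.
\]
Writing $\Delta\Phi_t \le P_t - N_t$ with $P_t := \BE_x[\cosh(\lambda d_{t-1}^\top x)(\cosh(\lambda v_t^\top x) - 1)]$ and $N_t := |\BE_x[\sinh(\lambda d_{t-1}^\top x)\sinh(\lambda v_t^\top x)]| \ge 0$, the claim reduces to showing $\BE_{v_t}[P_t - N_t] \le 2$.

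For the positive drift, I would use that $|\lambda v_t^\top x| \le \lambda \le 1$ (since both factors have Euclidean norm at most $1$) and the elementary inequality $\cosh(z) - 1 \le z^2$ for $|z| \le 1$, giving $\cosh(\lambda v_t^\top x) - 1 \le \lambda^2 (v_t^\top x)^2$. Taking $\BE_{v_t}$ inside and using $\BE_{v_t}[(v_t^\top x)^2] = x^\top \cov x \le \|\cov\|_{\op}\|x\|_2^2 \le 1$, we get
\[
\BE_{v_t}[P_t] \le \lambda^2 \BE_x\!\left[\cosh(\lambda d_{t-1}^\top x)\cdot x^\top\cov x\right] \le \lambda^2 \Phi_{t-1}.
\]
With $\lambda = 1/\log T$ and $\Phi_{t-1} \le T^2$, this is of order $T^2/\log^2 T$, far larger than the target $2$, so the negative drift $N_t$ is essential.

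The core of the argument is to prove $\BE_{v_t}[N_t] \ge \lambda^2 \Phi_{t-1} - O(1)$. I would Taylor-expand $\sinh(\lambda v_t^\top x) = \lambda v_t^\top x + r(\lambda v_t^\top x)$ where $|r(z)| \le |z|^3$ for $|z| \le 1$; using $|\sinh(\lambda d_{t-1}^\top x)| \le \cosh(\lambda d_{t-1}^\top x)$, this yields
\[
N_t \ge |\lambda v_t^\top w| - O(\lambda^3 \Phi_{t-1}), \qquad w := \BE_x[\sinh(\lambda d_{t-1}^\top x)\,x].
\]
Since $\lambda^3 \Phi_{t-1} \ll \lambda^2 \Phi_{t-1}$, it suffices to lower-bound $\BE_{v_t}|v_t^\top w|$ by $\lambda \Phi_{t-1}$ up to $O(1)$. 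Here the plan is to invoke $\BE|Y| \ge \BE[Y^2]/\|Y\|_\infty$ with $Y = v_t^\top w$: the numerator equals $w^\top \cov w$, and for the effective denominator the hypothesis $\Phi_{t-1} \le T^2$ combined with Markov's inequality forces $|\lambda d_{t-1}^\top v| \le O(\log T)$ with probability $\ge 1 - 1/T$ over $v \sim \sfp$, which (after splitting into a typical set and a tail controlled by the deterministic bound $|v^\top w| \le \|w\|_2$) gives a useful bound on $\|v_t^\top w\|_\infty$.

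The hard part will be the final quantitative step: establishing $w^\top \cov w$ divided by the effective bound on $|v^\top w|$ is at least $\lambda \Phi_{t-1}$. This hinges on the mixture structure $\sfp_x = \tfrac{1}{2}(\sfp + \sfp_y)$: the $\sfp$-piece of $\Phi_{t-1}$ feeds into $w^\top \cov w$ through an identity of the form $\sinh^2(z) = \tfrac12(\cosh(2z) - 1)$, while the $\sfp_y$-piece (basis vectors) both contributes coordinate-wise discrepancy control and prevents $w$ from aligning with a low-$\cov$ direction, since each $\sinh(\lambda d_{t-1}(i))$ appears in $w$ along $e_i$ with $e_i^\top \cov e_i$ supplying the needed variance. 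Combining the pieces gives $\BE_{v_t}[\Delta\Phi_t] \le \lambda^2 \Phi_{t-1} - (\lambda^2 \Phi_{t-1} - O(1)) = O(1)$, and tightening the constants in the Taylor bounds yields the claimed bound of $2$.
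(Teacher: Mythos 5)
Your opening algebra is correct and is essentially equivalent to the paper's Taylor step: the exact identity $\cosh(a+\chi b)=\cosh(a)\cosh(b)+\chi\sinh(a)\sinh(b)$ followed by $\cosh(z)-1\le z^2$ and $\sinh(z)=z+O(z^3)$ for $|z|\le 1$ yields the same linear/quadratic decomposition $\BE_{v_t}[\Delta\Phi_t]\le -\lambda\,\BE_{v_t}\bigl|\ip{v_t}{w}\bigr| + \lambda^2\BE_{v_t}[Q]$ with $w := \BE_x[\sinh(\lambda d_{t-1}^\top x)\,x]$. (One small miscalibration: in the isotropic sketch the quadratic term is $\le \lambda^2\Phi_{t-1}/n$, not $\lambda^2\Phi_{t-1}$, since $x^\top\cov x=\|x\|_2^2/n$; but this is just bookkeeping.)

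The genuine gap is in your plan for lower-bounding $\BE_{v_t}\bigl|\ip{v_t}{w}\bigr|$. You propose $\BE|Y|\ge \BE[Y^2]/\|Y\|_\infty$ with $Y=\ip{v_t}{w}$, which is the coupling inequality $\BE|L|\ge\BE[Lf]/\|f\|_\infty$ specialized to $f=L$. This choice fails on both ends of the fraction. On the denominator, the exponential-moment hypothesis $\Phi_{t-1}\le T^2$ controls $\|\ip{d_{t-1}}{v}\|_\infty$, \emph{not} $\|\ip{v}{w}\|_\infty$; the deterministic fallback $|\ip{v}{w}|\le\|w\|_2$ only gives a bound of order $\Phi_{t-1}\approx T^2$, which destroys the estimate. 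On the numerator, $\BE[Y^2]=w^\top\cov w=\|w\|_2^2/n$ is the squared norm of an \emph{average} over $x$, and this can be tiny even when $\Phi_{t-1}$ is large because the vectors $\sinh(\lambda d^\top x)\,x$ can cancel across $x$. Your proposed fix via $\sinh^2(z)=\tfrac12(\cosh(2z)-1)$ would relate $\BE_x[\sinh^2(\lambda d^\top x)]$ to the potential, but what you actually have is $(\BE_x[\sinh(\lambda d^\top x)\ip{v}{x}])^2$, with the expectation \emph{inside} the square---a fundamentally different and uncontrolled quantity.

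The paper avoids both problems by coupling $L$ with $f(v)=\ip{d_{t-1}}{v}$ (times a good-event indicator) rather than with $L$ itself. Then $\|f\|_\infty$ is exactly what the exponential moment controls ($\lesssim\lambda^{-1}\log T$), and $\BE_v[L\,f(v)]=\BE_x[\sinh(\lambda d^\top x)\,x^\top\cov\, d]=\tfrac1n\BE_x[\sinh(\lambda d^\top x)\,\ip{d}{x}]$ is a \emph{single-expectation} quantity that the pointwise inequality $\sinh(a)a\ge|\sinh(a)|-2$ converts directly into $\tfrac{1}{n\lambda}(\BE_x[|\sinh(\lambda d^\top x)|]-2)$, the precise expression needed to cancel the quadratic term. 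This choice of test function, not the Paley--Zygmund-style self-coupling, is the key idea you are missing.
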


The above implies using standard martingale arguments, that the potential remain bounded by $T^2$ with high probability and hence $\|d_t\|_\infty = \polylog(T)$ at all times $t \in [T]$.

Let us first make a simplifying assumption that $\cov = I_n/n$ and that at time $t$, the condition $ \lambda |{d^\top_{t-1}} {v_t}| \le 2\log T$ holds with probability $1$. We give an almost complete proof below under these conditions. The first condition can be dealt with by an appropriate decomposition of the covariance matrix as sketched below. The second condition only holds with high probability ($1-1/\poly(T)$), because we have a bound on the exponential moment, but the error event can be handled straightforwardly. 

By Taylor expansion, we have that for all $a$, 
\begin{equation}\label{eqn:taylor}
    \cosh(\lambda (a+\delta)) - \cosh(\lambda a) ~~\le~~   \lambda \sinh(\lambda a)\cdot\delta + \lambda^2|\sinh(\lambda a)|\cdot\delta^2 \qquad \text{ for all } |\delta| \le 1,
\end{equation}
where $\sinh(a) = \frac12\cdot({e^{a} - e^{-a}})$ and we used the approximation that $\cosh(a)\approx |\sinh(a)|$. Therefore, since $d_t = d_{t-1} + \chi_t v_t$, by the above inequality we have 
\begin{align*}
    \ \Delta \Phi_{t} ~~\le~~ \chi_t \cdot \lambda  \BE_x\left[\sinh(\lambda d_{t-1}^\top x)\cdot x^\top v_t\right] +  \lambda^2\BE_x\left[|\sinh(\lambda d_{t-1}^\top x)|\cdot |x^\top v_t|^2\right] ~~:=~~  \chi_t \lambda L +  \lambda^2Q.
\end{align*}

Since the algorithm chooses $\chi_t$ to minimize the potential, we have that $\BE_{v_t}[\Delta \Phi_{t}] \le -\lambda \BE_{v_t}[|L|] + \lambda^2 \BE_{v_t}[Q]$.

\paragraph{Upper bounding the quadratic term:} 

Using that $\cov = \BE_{v_t}[v_tv_t^\top]=I_n/n$, we have 
\begin{align*}
    \ \BE_{v_t}[Q] & ~~=~~ \BE_{v_tx}[|\sinh(\lambda d^\top_{t-1}x)|\cdot x^T v_t v_t^\top x] ~~=~~ \BE_{x}[|\sinh(\lambda d^\top_{t-1}x)| \cdot x^T \cov x] \\
    \ &~~=~~ \frac1n \cdot \BE_{x}[|\sinh(\lambda d^\top_{t-1}x)| \cdot \|x\|^2] ~~\le~~ \frac1n \cdot \BE_{x}[|\sinh(\lambda d^\top_{t-1}x)|],
\end{align*}
where the last inequality used that $\|x\|_2 \le 1$. 

\paragraph{Lower bounding the linear term:} For this we use the aforementioned coupling trick: $\BE_{v_t}[|L|] \ge \BE_{v_t}[LY]/\|Y\|_{\infty}$ for any coupled random variable $Y$ \footnote{Here $\|Y\|_\infty$ denotes the largest value of $Y$ in its support.}. Taking $Y=|d^\top_{t-1}v_t|$, we have that $\|Y\|_{\infty} \le \log T$. Therefore,
\begin{align*}
    \BE_{v_t}[|L|] &~~=~~ \BE_{v_t}\Big|\BE_x\left[\sinh(\lambda d^\top_{t-1}x)\cdot x^\top v_t\right]\Big|  ~\ge~ \frac{1}{\log T}\cdot \BE_{v_tx}\left[\sinh(\lambda d^\top_{t-1}x)|\cdot x^\top v_t v^\top d_{t-1}\right] \\
    \ &~~=~~ \frac1{2n\log T} \cdot \BE_{x}[\sinh(\lambda d^\top_{t-1}x) \cdot d^\top_{t-1}x] ~\ge~ \frac{1}{2n \lambda \log T}\cdot \BE_{x}[|\sinh(\lambda d^\top_{t-1}x)|] - 2,
\end{align*}
using that $\sinh(a)a \ge |\sinh(a)|-2$ for all $a \in \BR$.

Therefore, if $\lambda = 1/(2\log T)$, we can bound the drift in the potential 
\[ \BE_{v_t}[\Delta \Phi_{t}] ~~\le~~ -\frac{\lambda}{2n\log T}\cdot \BE_{x}[|\sinh(\lambda d^\top_{t-1}x)|] + \frac{\lambda^2}{n} \cdot \BE_{x}[|\sinh(\lambda d^\top_{t-1}x)|] + 2 ~~\le~~ 2.\]

\paragraph{Non-Isotropic Covariance.} 

To handle the general case when the covariance $\cov$ is not isotropic, let us assume that all the non-zero eigenvalues are of the form $2^{-k}$ for integers $k\ge 0$. One can always rescale the input vectors and any potential set of test vectors, so that the covariance satisfies the above, while the discrepancy is affected only by a constant factor. See Section \ref{sec:dyadiccov} for details.

With the above assumption $\cov = \sum_{k} 2^{-k}\Pi_k$ where $\Pi_k$ is the orthogonal projection on to the subspace with eigenvalues $2^{-k}$. Since, we only get $T$ vectors, we can ignore the eigenvalues smaller than $(nT)^{-4}$ and only need to consider $O(\log (nT))$ different scales. Then, one can work with the following potential which imposes the alignment constraint in each such subspace:
$$\Phi_t = \sum_{k} \BE_{x\sim \sfp_x}[\cosh(\lambda~ {d_t^\top \Pi_k x})].$$
As we have $O(\log (nT))$ pairwise orthogonal subspaces, we can still choose $\lambda=1/\polylog(nT)$ and with some care, the drift can be bounded using the aforementioned ideas. Once the potential is bounded, we can bound $\|d_t\|_\infty$ as before along with triangle inequality. 

\subsection{Banaszczyk Setting} 

Recall that here we are given a convex body $K$ with Gaussian volume at least $1/2$ and our goal is to bound $K$-norm of the discrepancy vector $\|d_t\|_K$. Here, $\|d\|_K$ intuitively is the minimum scaling $\gamma$ of $K$ so that $d \in \gamma K$. To this end, we will use the dual characterization of $K$: Let $K^\circ = \{y: \sup_{x \in K} |\ip{x}{y}| \leq 1\}$, then $\|d\|_K = \sup_{y \in K^\circ} |\ip{d}{y}|$. 

To approach this first note that the arguments from previous section allow us not only to bound $\|d_t\|_\infty$ but also $\max_{z \in \SE} \ip{d_t}{z}$ for an arbitrary set of \emph{test directions} $\SE$ (of norm at most $1$). As long as $|\SE| \leq \poly(nT)$, we can bound $\max_{z \in \SE} \ip{d_t}{z} = \poly(\log(nT))$. 


However, to handle a norm given by an arbitrary convex body $K$, one needs exponentially many test vectors, and the previous ideas are not enough. To design a suitable test distribution for an arbitrary convex body $K$, we use \emph{generic chaining} to bound $\|d_t\|_K = \sup_{z \in K^\circ} \ip{d_t}{z}$ by choosing epsilon-nets\footnote{We remark that one can also work with admissible nets that come from Talagrand's majorizing measures theorem and probably save a logarithmic factor, but for simplicity we work with epsilon-nets at different scales.} of $K^\circ$ at geometrically decreasing scales. Again let us assume that the $\cov=I_n/n$ for simplicity. 

First, assuming Gaussian measure of $K$ is at least $1/2$, it follows that $\diam(K^\circ)=O(1)$ (see \secref{sec:prelims}). So, one can choose the coarsest epsilon-net at $O(1)$-scale while the finest epsilon-net can be taken at scale $\approx 1/\sqrt{n}$ since by adding the standard basis vectors to the test set, one can control $\|d_t\|_2 \le \sqrt{n}$ (ignoring polylog factors) by using the previous ideas in the Koml\"os setting.

\begin{figure}[h!]
   \centering
   {\includegraphics[width=\textwidth]{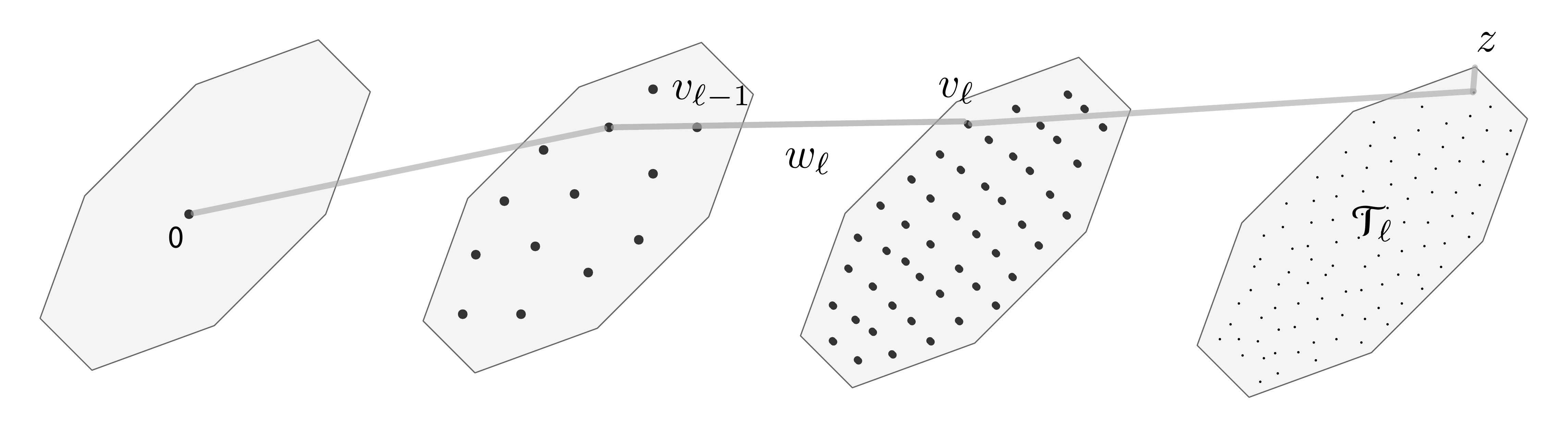}}%
    \caption{\footnotesize The chaining graph $\SG$ showing epsilon-nets of the convex body at various scales. The edges connect near neighbors at two consecutive scales. Note that any point $z \in K^\circ$ can be expressed as the sum of the edge vectors $w_\ell$ where $w_\ell = v_\ell - v_{\ell-1}$, and $(v_{\ell-1}, v_\ell)$ is an edge between two points at scale $2^{-(\ell-1)}$ and $2^{-\ell}$.
    }%
    \label{fig:chaining}%
\end{figure}

Now, one can use generic chaining as follows: define the directed layered graph $\SG$ (see \figref{fig:chaining}) where the vertices $\ST_\ell$ in layer $\ell$ are the elements of an optimal $\eps_\ell$-net of $K^\circ$ with $\eps_\ell=2^{-\ell}$. We add a directed edge from a vertex $u \in \ST_{\ell}$ to vertex $v \in \ST_{\ell+1}$ if $\|u-v\|_2 \le \eps_\ell$ and identify the corresponding edge with the vector $v-u$. The length of any such edge $v-u$, defined as $\|v-u\|_2$, is at most $\eps_\ell$.

Let us denote the set of edges between layer $\ell$ and $\ell+1$ by $\SE_\ell$. Now, one can express any $z \in K^\circ$ as $\sum_\ell w_\ell + w_\err$ where $w_\ell \in \SE_\ell$ and $\|w_\err\|_2 \le 1/\sqrt{n}$. Then, since we can control $\|d_t\|_2 \le \sqrt{n}$, we have
\[ \sup_{z \in K^\circ} \ip{d_t}{z} \le \sum_{\ell} \max_{w \in \SE_\ell} \ip{d}{w} + \max_{\|w\|_2 \le n^{-1/2}} \ip{d}{w_\err} = O(\log n) \cdot \max_\ell \max_{w \in \SE_\ell} \ip{d}{w}.\]
Thus, it suffices to control $\max_{w \in \SE_\ell} \ip{d}{w}$ for each scale using a suitable test distribution in the potential. 

For example, suppose we knew that $\BE_{\widetilde{w}}[\cosh(\lambda d^\top \widetilde{w})] \le T$ for $\widetilde{w}$ uniform in $r^2\cdot\SE_\ell$ for a scaling factor $r^2$. Then, it would follow that $\max_{w \in \SE_\ell} \ip{d}{w} = O(\lambda^{-1}r^{-2} \log |\SE_\ell| \cdot \log T)$.  Standard results in convex geometry (see \secref{sec:prelims}) imply that $|\SE_\ell| \le e^{O(1/\eps_\ell^2)}$, so to obtain a $\polylog(nT)$ bound, one needs to scale the vectors $w \in \SE_\ell$ by a factor of $r = 1/\eps_\ell$. This implies that the $\ell_2$-norm of scaled vector $r^2 \cdot w$ could be as large as $\sqrt{n}$.

This makes the drift analysis for the potential more challenging because now the Taylor expansion in \eqref{eqn:taylor} is not always valid as the update $\delta$ could be as large as $\sqrt{n}$. This is where the sub-exponential tail of the input distribution is useful for us. Since the input distribution is $1/n$-isotropic and sub-exponential tailed,  we know that if $\|w\|_2 \le \sqrt{n}$ , then for a typical choice of $v \sim \sfp$, the following holds
$$\ip{v_t}{w} \approx  \BE_{v_t}[\ip{v_t}{w}^2] = \BE_{v_t}[w^\top vv^\top w] = \frac{\|w\|_2^2}{n} \le 1.$$

Thus, with some work one can show that, the previous Taylor expansion essentially holds "on average" and the drift can be bounded. The case of general covariances can be handled by doing a decomposition as before. Although the full analysis becomes somewhat technical, all the main ideas are presented above.

\subsection{Multi-color Discrepancy}

For the multi-color discrepancy setting, we show that if there is an online algorithm that uses a greedy strategy with respect to a certain kind of potential $\Phi$, then one can adapt the same potential to the multi-color setting in a black-box manner. 

In particular, let the number of colors $R=2^h$ for an integer $h$ and all weights be unit. Let us identify the leaves of a complete binary tree $\calT$ of height $h$ with a color. Our goal is then to assign the incoming vector to one of the leaves. In the offline setting, this is easy to do with a logarithmic dependence of $R$ --- we start at the root and use the algorithm for the signed discrepancy setting to decide to which sub-tree the vector be assigned and then we recurse until the vector is assigned to one of the leaves. Such a strategy in the online stochastic setting is not obvious, as the distribution of the incoming vector might change as one decides which sub-tree it belongs to.

By exploiting the idea used in \cite{BJSS20} and \cite{DFGR19} of working with the Haar basis, we can implement such a strategy if the potential $\Phi$ satisfies certain requirements. Let us define $d_\ell(t)$ to be the sum of all the input vectors assigned to that leaf at time $t$. In the same way, for an internal node $u$ of $\calT$, we can define $d_u(t)$ to be the sum of the vectors $d_\ell(t)$ for all the leaves $\ell$ in the sub-tree rooted at $u$. The crucial insight is then, one can track the difference of the discrepancy vectors of the two children $d^{-}_u(t)$ for every internal node $u$ of the tree $\calT$. In particular, one can work with the potential 
$$\Psi_t = \sum_{u \in \calT} \Phi\big(\beta\:d^-_u(t)\big),$$ for some parameter $\beta$, and assign the incoming vector to the leaf that minimizes the increase in $\Psi_t$. Then, essentially we show that the analysis for the potential $\Phi$ translates to the setting of the potential $\Psi_t$ if $\Phi$ satisfies certain requirements (see \secref{sec:multicolor}).

\section{Preliminaries}

\subsection{Notation}

 Throughout this paper, $\log$ denotes the natural logarithm unless the base is explicitly mentioned. We use $[k]$ to denote the set $\{1,2,\dotsc, k\}$. Sets will be denoted by script letters (e.g. $\ST$).
 
Random variables are denoted by capital letters (e.g.\ $A$) and values they attain are denoted by lower-case letters possibly with subscripts and superscripts (e.g.\ $a,a_1,a'$, etc.). Events in a probability space will be denoted by calligraphic letters (e.g.\ $\CE$). We also use $\ind_\CE$ to denote the indicator random variable for the event $\CE$. We write $\lambda \sfp + (1-\lambda) \sfp'$ to denote the convex combination of the two distributions.

Given a distribution $\sfp$, we use the notation $x \sim \sfp$ to denote an element $x$ sampled from the distribution $\sfp$. For a real function $f$, we will write $\BE_{x \sim \sfp}[f(x)]$ to denote the expected value of $f(x)$ under $x$ sampled from $\sfp$. If the distribution is clear from the context, then we will abbreviate the above as $\BE_{x}[f(x)]$.

 For a symmetric matrix $M$, we use $\tM$ to denote the Moore-Penrose pseudo-inverse, $\|M\|_{\op}$ for the operator norm of $M$ and $\Tr(M)$ for the trace of $M$. 
 
 \subsection{Sub-exponential Tails}

Recall that a subexponential distribution $\sfp$ on $\BR$ satisfies the following for every $r>0$, $\BP_{x\sim \sfp}[|x - \mu| \ge \sigma r] \le e^{-\Omega(r)}$ where $\mu=\BE_x[x]$ and $\sigma^2=\BE_x[(x-\mu)^2]$. A standard property of a distribution with a sub-exponential tail is \emph{hypercontractivity} and a bound on the exponential moment (c.f. \S2.7 in~\cite{V18}). 

\begin{proposition}
\label{prop:logconcave}
Let $\sfp$ be a distribution on $\BR$ that has a sub-exponential tail with mean zero and variance $\sigma^2$. Then, for a constant $C>0$, we have that $\BE_{x \sim \sfp}[e^{s |x|}] \le C$ for all $|s| \le 1/2\sigma$. Moreover, for every $k>0$, we have $\BE_{x \sim \sfp}[|x|^k]^{1/k} \le C \cdot k \sigma$.
\end{proposition}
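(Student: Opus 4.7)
The plan is to derive the moment bound first directly from the sub-exponential tail via the layer-cake formula, and then to deduce the MGF bound by Taylor-expanding $e^{s|x|}$ and plugging in the moment estimate. These are the two standard equivalent characterizations of sub-exponentiality (see Vershynin \S2.7); the only adjustment is that since $\mu = 0$ by hypothesis, $|x-\mu|$ in the tail assumption is simply $|x|$.

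For the moment bound, the assumption gives absolute constants $c_1, c_2 > 0$ with $\BP[|x| \ge \sigma r] \le c_1 e^{-c_2 r}$ for every $r \ge 0$. Using $\BE[|x|^k] = k \int_0^\infty u^{k-1} \BP[|x|\ge u]\, du$ and substituting $u = \sigma r$, I get
\[
\BE[|x|^k] ~\le~ c_1 k \sigma^k \int_0^\infty r^{k-1} e^{-c_2 r}\, dr ~=~ \frac{c_1 \sigma^k}{c_2^k}\, k!.
\]
Taking $k$-th roots and applying $(k!)^{1/k} \le k$ yields $\BE[|x|^k]^{1/k} \le C_0 \, k \sigma$ for an absolute constant $C_0$, which is the second claim.

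For the exponential moment, I Taylor-expand and plug in the moment bound:
\[
\BE\bigl[e^{s|x|}\bigr] ~=~ 1 + \sum_{k=1}^\infty \frac{|s|^k}{k!}\BE[|x|^k] ~\le~ 1 + \sum_{k=1}^\infty \frac{|s|^k (C_0 k \sigma)^k}{k!}.
\]
Stirling's lower bound $k! \ge (k/e)^k$ dominates the $k$-th term by $(C_0 e \sigma |s|)^k$, so the tail becomes a geometric series in $C_0 e \sigma |s|$. For any $|s|$ bounded by $c/\sigma$ with a small enough absolute constant $c$, this sum is at most an absolute constant $C_1$.

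The only real obstacle is constant bookkeeping, so that a single $C$ works in both parts of the proposition. Concretely, the threshold ``$1/(2\sigma)$'' in the MGF bound must be compatible with the hidden constant $c_2$ coming from the $e^{-\Omega(r)}$ in the hypothesis: in the proof I would either assume the constant in $\Omega(\cdot)$ is large enough that $1/(2\sigma)$ falls inside the convergence window above, or equivalently absorb a rescaling of $s$ by a constant into the interpretation of the $\Omega$-notation, and then take $C = \max(C_0, C_1)$. No new ideas beyond the textbook equivalence are required.
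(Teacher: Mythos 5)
Your proof is correct and is exactly the standard tail-to-moments-to-MGF equivalence from Vershynin \S 2.7, which is precisely what the paper relies on (it states the proposition without proof, citing that reference). Your observation about the constant bookkeeping is also apt: the threshold $1/(2\sigma)$ is only valid if the implied constant in the $e^{-\Omega(r)}$ hypothesis is large enough (one can build mean-zero, variance-$\sigma^2$ distributions with a small exponent constant whose MGF diverges at $1/(2\sigma)$), so absorbing a rescaling into the $\Omega$-notation, as you propose, is the right resolution.
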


\subsection{Convex Geometry}
\label{sec:prelims}
Given a convex body $K \subseteq \BR^n$, its \emph{polar} convex body is defined as $K^\circ = \{ y \mid \sup_{x \in K} |\ip{x}{y}| \le 1\}$. If $K$ is symmetric, then it defines a norm $\|\cdot\|_K$ which is defined as $\|\cdot\|_K = \sup_{y \in K^\circ} \ip{\cdot}{y}$.

For a linear subspace $H \subseteq \BR^n$, we have that $(K \cap H)^\circ = \Pi_H(K^\circ)$ where $\Pi_H$ is the orthogonal projection on to the subspace $H$. 

\paragraph{Gaussian Measure.} We denote by $\gamma_n$ the $n$-dimensional standard Gaussian measure on $\BR^n$. More precisely,  for any measurable set  $\SA\subseteq \BR^n$, we have
\[ \gamma_n(\SA) = \frac{1}{(\sqrt{2\pi})^n} \int_\SA e^{-\|x\|_2^2/2} dx .\]

For a $k$-dimensional linear subspace $H$ of $\BR^n$ and a set $\SA \subseteq H$, we denote by $\gamma_k(\SA)$ the Gaussian measure of the set $\SA$ where $H$ is taken to be the whole space. For convenience, we will sometimes write $\gamma_H(\SA)$ to denote $\gamma_{\dim(H)}(\SA \cap H)$.

The following is a standard inequality for the Gaussian measure of slices of a convex  body. For a proof, see Lemma 14 in \cite{DGLN16}. 
\begin{proposition}\label{prop:slicemeasure}
Let $K \subseteq \BR^n$ with $\gamma_n(K) \ge 1/2$ and $H \subseteq \BR^n$ be a linear subspace of dimension $k$. Then, $\gamma_k(K \cap H) \ge \gamma_n(K)$.
\end{proposition}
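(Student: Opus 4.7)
The plan is to reduce the proposition to a symmetric slicing inequality and then apply a standard log-concavity argument based on Prékopa--Leindler. I will assume $K$ is symmetric (which is the case in every usage of this proposition in the paper); the $K$ containing $0$ case is what matters, and symmetry is needed for the argument.

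First, I will decompose the ambient space as $\BR^n = H \oplus H^\perp$ and foliate $K$ by translates of $H$. For each $y \in H^\perp$, define the slice
\[
K_y \;:=\; \{\, x \in H : x + y \in K \,\}.
\]
By Fubini applied to the product structure $\gamma_n = \gamma_H \otimes \gamma_{H^\perp}$, I get the identity
\[
\gamma_n(K) \;=\; \int_{H^\perp} \gamma_H(K_y)\, d\gamma_{H^\perp}(y).
\]

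The key observation is that $g(y) := \gamma_H(K_y)$ is a log-concave function on $H^\perp$. Indeed, the indicator of the convex set $K$ is log-concave on $\BR^n$, and partial integration of a log-concave function against a log-concave density (the Gaussian on $H$) yields a log-concave function on $H^\perp$; this is the standard Prékopa--Leindler marginalization. Moreover, since $K$ is symmetric, $K_{-y} = -K_y$, and since $\gamma_H$ is invariant under $x \mapsto -x$, we have $g(-y) = g(y)$. A log-concave function that is even on a Euclidean space attains its maximum at the origin, so
\[
g(y) \;\le\; g(0) \;=\; \gamma_H(K \cap H) \;=\; \gamma_k(K \cap H)
\quad\text{for all } y \in H^\perp.
\]

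Plugging this uniform upper bound into the Fubini identity and using that $\gamma_{H^\perp}$ is a probability measure yields
\[
\gamma_n(K) \;\le\; \gamma_k(K \cap H) \cdot \int_{H^\perp} d\gamma_{H^\perp}(y) \;=\; \gamma_k(K \cap H),
\]
which is the desired inequality. The only nontrivial step is the log-concavity of the marginal $g$, which is exactly the content of Prékopa--Leindler; the symmetry-plus-log-concavity implies maximum-at-origin step is elementary. I would expect to spend the bulk of the write-up carefully quoting Prékopa--Leindler (or citing a standard reference such as [DGLN16]) and verifying the even symmetry of $g$.
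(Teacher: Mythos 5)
Your proof is correct and is the standard argument for this fact. The paper itself does not prove the proposition but defers to Lemma~14 of the cited reference \cite{DGLN16}; the Prékopa/marginal-log-concavity route you follow is the textbook proof of that lemma, so you have essentially reconstructed the reference argument. You were right to supply the symmetry and convexity hypotheses that the statement as written omits: convexity is what makes $\ind_K$ log-concave, so that Prékopa's theorem (log-concavity is preserved under partial integration, a corollary of Prékopa--Leindler) applies to the marginal $g$; and symmetry is what makes $g$ even, so that its maximum is attained at the origin --- without it $g$ could peak away from $0$ and the argument breaks. The paper's preamble does say ``convex body'', and every application of the proposition is to a symmetric one, so this is loose wording rather than a substantive gap in the paper. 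One observation worth making explicit in your write-up: your argument never invokes the hypothesis $\gamma_n(K)\ge 1/2$, so you in fact establish $\gamma_k(K\cap H)\ge \gamma_n(K)$ for every symmetric convex body; the measure bound is present only because of how the proposition is later applied.
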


\paragraph{Gaussian Width.}
For a set $\ST \subseteq \BR^n$, let $w(\ST) = \BE_g[\sup_{x \in \ST} \ip{g}{x}]$ denote the \emph{Gaussian width} of $\ST$ where $g \in \BR^n$ is sampled from the standard normal distribution. Let $\diam(\ST) = \sup_{x,y \in \ST} \|x-y\|_2$ denote the diameter of the set $\ST$.

The following lemma is standard up to the exact constants. For a proof, see Lemmas 26 and 27 in \cite{DGLN16}.

\begin{proposition}\label{prop:width}
Let $K \subseteq \BR^n$ be a symmetric convex body with $\gamma_n(K) \ge 1/2$. Then, $w(K^\circ)  \le \frac{3}{2}$ and $\diam(K^\circ) \le 4$.
\end{proposition}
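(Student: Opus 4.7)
The plan is to establish the two bounds separately, both as consequences of Gaussian concentration applied to the Minkowski gauge of $K$ together with the polar duality identity $\|x\|_K = \sup_{y \in K^\circ}\langle x, y\rangle$, which is valid since $K$ is symmetric.

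For the diameter bound, since $K^\circ$ is symmetric it suffices to control $R := \sup_{y \in K^\circ}\|y\|_2$, as $\diam(K^\circ) = 2R$. For any fixed $y \in K^\circ$, the definition of polar yields $K \subseteq S_y := \{x \in \BR^n : |\langle x, y\rangle| \le 1\}$, and monotonicity of Gaussian measure gives $\gamma_n(S_y) \ge \gamma_n(K) \ge 1/2$. A one-dimensional computation using $\langle g, y\rangle \sim N(0,\|y\|_2^2)$ gives $\gamma_n(S_y) = 2\Phi(1/\|y\|_2) - 1$, so the measure condition forces $1/\|y\|_2 \ge \Phi^{-1}(3/4)$, i.e.\ $\|y\|_2 \le 1/\Phi^{-1}(3/4) < 2$. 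Taking a supremum and doubling yields $\diam(K^\circ) < 4$.

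For the Gaussian width, observe that $w(K^\circ) = \BE_g[\|g\|_K]$. The map $g \mapsto \|g\|_K$ is convex and $R$-Lipschitz in $\ell_2$, with $R < 2$ by the first bound, and $\{g : \|g\|_K \le 1\} = K$ has Gaussian measure at least $1/2$, so its median $M$ satisfies $M \le 1$. The one-sided Borell--Sudakov--Tsirelson inequality then gives $\BP[\|g\|_K > 1 + t] \le 1 - \Phi(t/R)$ for every $t \ge 0$. Integrating the tail yields
\[ w(K^\circ) = \BE[\|g\|_K] \le 1 + R\int_0^\infty \bigl(1 - \Phi(u)\bigr)\,du = 1 + \frac{R}{\sqrt{2\pi}}, \]
which is an absolute constant.

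The main obstacle is sharpening the constants to exactly $3/2$ and $4$ rather than a generic $O(1)$. The crude estimate above gives $\diam(K^\circ) \le 2/\Phi^{-1}(3/4) \approx 2.97$, which is comfortably below $4$, but the width bound only yields roughly $1 + 1.48/\sqrt{2\pi} \approx 1.59$, just above $3/2$. Closing this small gap requires either a two-sided tail estimate that exploits the symmetry of $K^\circ$ (both halves of $K$ contribute to the measure), or replacing the median-based concentration by an Ehrhard-type convexity argument tracking $\Phi^{-1}(\gamma_n(tK))$ as a concave function of $t$. Since the proposition is only invoked in the paper up to absolute constants and the stated constants are acknowledged to be non-essential, my plan is to present the $O(1)$ derivation above and absorb the slack, referencing the cited lemmas for the sharper numerics.
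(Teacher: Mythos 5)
Your plan is the standard one for this kind of estimate: bound $\sup_{y\in K^\circ}\|y\|_2$ by intersecting $K$ with slabs and invoking a one-dimensional Gaussian computation, then use Borell--Sudakov--Tsirelson concentration for the $R$-Lipschitz function $g\mapsto\|g\|_K$ to control $w(K^\circ)=\BE_g[\|g\|_K]$. The paper does not give its own proof of \pref{prop:width}; it cites Lemmas 26 and 27 of \cite{DGLN16} and remarks that the result is ``standard up to the exact constants,'' and the cited lemmas are proved by essentially this argument, so you are on the intended track.

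The diameter bound is complete and correct. The polar inclusion $K\subseteq S_y$ for $y\in K^\circ$, the identity $\gamma_n(S_y)=2\Phi(1/\|y\|_2)-1$, and the consequence $\|y\|_2\le 1/\Phi^{-1}(3/4)<3/2$ give $\diam(K^\circ)=2\sup_{y\in K^\circ}\|y\|_2<3$, comfortably inside the stated $4$ and in fact sharper than the proposition requires.

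The width bound, as you correctly observe, falls short of what the proposition asserts: the estimate $\BE[\|g\|_K]\le m+R/\sqrt{2\pi}\le 1+R/\sqrt{2\pi}$ with $R\approx 1.48$ gives roughly $1.59$, which exceeds $3/2$. This is a genuine gap in the proof as written --- the computation is honest, but the conclusion is not $w(K^\circ)\le 3/2$. Merely invoking the symmetry of $K$ will not close it: the distribution of $\|g\|_K$ has no additional symmetry to exploit beyond what the isoperimetric bound already uses, and the slack comes from the crude step of truncating the tail integral at the median bound $m\le 1$ rather than tracking the full distribution function. A rigorous route to a tighter constant is the Ehrhard-type argument you mention (concavity of $t\mapsto\Phi^{-1}(\gamma_n(tK))$), which you correctly flag as the likely fix but do not carry out. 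Since the paper treats the exact constants as inessential and only needs the $O(1)$ consequence, this slack does not affect anything downstream, but as a stand-alone proof of the proposition as stated the width half is incomplete and should be explicitly deferred to the cited source or patched with the Ehrhard argument.
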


To prevent confusion, we remark that the Gaussian width is $\Theta(\sqrt{n})$ factor larger than the \emph{spherical width} defined as $\BE_\theta[\sup_{x \in \ST} \ip{\theta}{x}]$ for a  randomly chosen  $\theta$ from the unit sphere $\BS^{n-1}$. So the above proposition implies that the spherical width of $K^\circ$  is $O(1/\sqrt{n})$. 

For a linear subspace $H \subseteq \BR^n$ and a subset $\ST \subseteq H$, we will use the notation $w_H(\ST) = \BE_g[\sup_{x \in \ST} \ip{g}{x}]$ to denote the Gaussian width of $\ST$ in the subspace $H$, where $g$ is sampled from the standard normal distribution on the subspace $H$. \pref{prop:slicemeasure} and \pref{prop:width} also imply that $w_H(\ST) \le 3/2$.

\paragraph{Covering Numbers.} For a set $\ST \subseteq \BR^n$, let $N(\ST, \eps)$ denote the size of the smallest $\eps$-net of $\ST$ in the Euclidean metric, \emph{i.e.}, the smallest number of closed Euclidean balls of radius $\eps$ whose union covers $\ST$. Then, we have the following inequality (c.f. \cite{W19}, \S5.5).

\begin{proposition}[Sudakov minoration] \label{prop:sudakov}
For any set $\ST \subseteq \BR^n$ and any $\eps > 0$
\[ w(\ST) \ge \frac{\eps}{2} \sqrt{\log N(\ST,\eps)}, ~\text{ or equivalently, } ~ N(\ST, \eps) \le e^{4w(\ST)^2/\eps^2}.\]
\end{proposition}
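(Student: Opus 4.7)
The plan is to use the classical Sudakov minoration argument based on the Sudakov--Fernique comparison for Gaussian processes. As a first step, I would reduce the statement from covering numbers to packing numbers. Choose a maximal subset $\{x_1, \dots, x_M\} \subseteq \ST$ that is $\eps$-separated, meaning $\|x_i - x_j\|_2 > \eps$ for all $i \neq j$; maximality forces the closed $\eps$-balls around the $x_i$ to cover $\ST$ (otherwise one could enlarge the set by adding any uncovered point), so $M \geq N(\ST, \eps)$ and it suffices to prove $w(\ST) \geq \frac{\eps}{2}\sqrt{\log M}$.

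Next I would introduce two Gaussian processes indexed by $[M]$: let $X_i = \langle g, x_i \rangle$ with $g \sim \CN(0, I_n)$ a standard Gaussian on $\BR^n$, and let $Y_i = (\eps/\sqrt{2})\,h_i$ where $h_1, \dots, h_M$ are i.i.d.\ standard normal. For every $i \neq j$ we have $\BE[(X_i - X_j)^2] = \|x_i - x_j\|_2^2 > \eps^2 = \BE[(Y_i - Y_j)^2]$, so the Sudakov--Fernique comparison inequality yields $\BE[\max_i X_i] \geq \BE[\max_i Y_i]$. By definition of the Gaussian width, $w(\ST) \geq \BE[\max_i X_i]$, which reduces the problem to a lower bound on the expected maximum of the scaled i.i.d.\ Gaussians $Y_i$.

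The remaining step is the standard lower bound $\BE[\max_{i \in [M]} h_i] \geq c\sqrt{\log M}$ for a universal constant $c > 0$, which can be proved by noting that for $t$ slightly below $\sqrt{2\log M}$ the probability that some $h_i$ exceeds $t$ is close to one, and then integrating the tail. Combining the three steps gives $w(\ST) \geq (c\eps/\sqrt{2})\sqrt{\log M} \geq (c/\sqrt{2})\,\eps\sqrt{\log N(\ST,\eps)}$, and with the sharp form of the Gaussian maximum estimate the constant in front can be made $1/2$. The only subtlety lies in tracking constants carefully enough to land exactly at $\eps/2$ rather than some slightly worse universal constant; there is no genuine conceptual obstacle, since both Sudakov--Fernique and the expected-maximum bound for i.i.d.\ Gaussians are textbook results, and any universal constant in place of $1/2$ would be sufficient for every application of this proposition in the paper.
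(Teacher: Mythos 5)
The paper does not prove this proposition at all: it is quoted as a standard fact with a pointer to \cite{W19}, \S 5.5. Your argument --- reduce covering numbers to a maximal $\eps$-separated (packing) set, compare the process $\ip{g}{x_i}$ against independent Gaussians scaled by $\eps/\sqrt{2}$ via Sudakov--Fernique, and finish with the lower bound on the expected maximum of i.i.d.\ Gaussians --- is precisely the textbook proof given in that reference, and it is correct. The only caveat is the one you already flag: landing exactly on the constant $1/2$ (rather than some universal $c>0$) requires the bound $\BE[\max_{i\le M} h_i]\ge \sqrt{(\log M)/2}$, which is delicate for very small $M$; this is immaterial here, since the proposition is only used to bound $|\SE(\ell,k)|\le 2^{O(1/\eps(\ell,k)^2)}$, where any universal constant would do.
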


Analogously, for a linear subspace $H \subseteq \BR^n$ and a subset $\ST \subseteq H$, we also have $w_H(\ST) \ge \frac\eps2 \sqrt{\log N_H(\ST,\eps)}$, where $N_H(\ST,\eps)$ denote the covering numbering of $\ST$ when $H$ is considered the whole space.


\section{Reduction to $\kappa$-Dyadic Covariance}\label{sec:dyadiccov}

For all our problems, we may assume without loss of generality that the distribution $\sfp$ has zero mean, i.e. $\BE_{v \sim \sfp}[v] = 0$, since our algorithm can toss an unbiased random coin and work with either $v$ or $-v$. Now the covariance matrix $\cov$ of the input distribution $\sfp$ is given by $\cov = \BE_{v \sim \sfp}[vv^\top]$. Since $\|v\|_2 \le 1$, we have that $0 \preccurlyeq \cov \preccurlyeq I$ and $\Tr(\cov)\le 1$.

However, it will be more convenient for the proof to assume that all the non-zero eigenvalues of the covariance matrix $\cov$ are of the form $2^{-k}$ for an integer $k$. In this section, by slightly rescaling the input distribution and the test vectors, we show that one can assume this without any loss of generality.

Consider the spectral decomposition of $\cov = \sum_{i=1}^n \sigma_i u_iu_i^\top$, where $0 \le \sigma_n \le \ldots \le \sigma_1 \le 1$ and $u_1, \ldots, u_n$ form an orthonormal basis of $\mathbb{R}^n$. Moreover, since we only get $T$ vectors, we can essentially ignore all eigenvalues smaller than, say $(nT)^{-8}$, as this error will not affect the discrepancy too much.

For a positive integer $\kappa$ denoting the number of different scales, we say that $\cov$ is $\kappa$-dyadic if every non-zero eigenvalue $\sigma$ is $2^{-k}$ for some $k \in [\kappa]$.
\begin{lemma} \label{lem:covariance_reduction}
Let $\SE \subseteq \BR^n$ be an arbitrary set of test vectors  with Euclidean norm at most $nT$ and $v \sim \sfp$ with covariance $\cov = \sum_i \sigma_i u_iu_i^\top$. Then, there exists a positive-semi definite matrix $M$ with $\|M\|_{\op}\le 1$ such that the covariance of $Mv$ is $\kappa$-dyadic for $\kappa = \lceil8\log (nT)\rceil$. Moreover, there exists a test set $\SE'$ consisting of vectors with Euclidean norm at most  $\max_{y \in \SE} \|y\|$, such that for any signs $(\chi_t)_{t \in T}$, the discrepancy vector $d_t = \sum_{\tau=1}^t \chi_\tau v_\tau$ satisfies 
\[ \max_{y\in \SE} |d_t^\top y| = 2\cdot\max_{z \in \SE'} |(Md_t)^\top z| + O(1).\]
\end{lemma}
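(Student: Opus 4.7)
The plan is to build $M$ explicitly from the spectral decomposition of $\cov$ by rounding each eigenvalue down to the nearest dyadic scale at least $2^{-\kappa}$, and to build $\SE'$ by pushing $\SE$ through the pseudo-inverse of $M$; the factor $2$ in the statement will absorb the rescaling, and the $O(1)$ will absorb the contribution from the discarded eigendirections. Concretely, write $\cov = \sum_{i=1}^n \sigma_i u_i u_i^\top$ and, for each $i$ with $\sigma_i \ge 2^{-\kappa}$, put $k_i := \max\{1,\lceil \log_2(1/\sigma_i)\rceil\}$ so that $2^{-k_i}\le \sigma_i < 2\cdot 2^{-k_i}$, and $m_i := \sqrt{2^{-k_i}/\sigma_i}\in[1/\sqrt{2}, 1]$; for every other $i$ put $m_i := 0$. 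Setting
\[ M := \sum_i m_i u_iu_i^\top, \qquad \Pi := \sum_{i:m_i>0} u_iu_i^\top, \qquad M^+ := \sum_{i:m_i>0} m_i^{-1} u_iu_i^\top,\]
one sees immediately that $M$ is PSD with $\|M\|_{\op}\le 1$ and that $M\cov M = \sum_{i:m_i>0} 2^{-k_i} u_iu_i^\top$ is $\kappa$-dyadic, its non-zero eigenvalues being $2^{-k_i}$ with $k_i\in[\kappa]$.

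Next, define $\SE' := \{z_y := M^+ y / \sqrt{2} : y\in\SE\}$. Since every non-zero $m_i^{-1}$ is at most $\sqrt{2}$, we have $\|z_y\|_2 \le \|\Pi y\|_2 \le \|y\|_2 \le \max_{y'\in\SE}\|y'\|_2$, so $\SE'$ meets the norm condition. Using $MM^+ = \Pi$, we obtain the identity $\sqrt{2}\,(Md_t)^\top z_y = d_t^\top \Pi y$. Splitting $y = \Pi y + (I-\Pi)y$ and applying the triangle inequality,
\[ |d_t^\top y| \;\le\; \sqrt{2}\,|(Md_t)^\top z_y| + |d_t^\top(I-\Pi)y| \;\le\; 2\max_{z\in\SE'}|(Md_t)^\top z| + \max_{y\in\SE}|d_t^\top(I-\Pi)y|.\]

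It remains to show $\max_{y\in \SE} |d_t^\top (I-\Pi) y| = O(1)$ for every $t \in [T]$. The key observation is that $\cov$ restricted to the discarded subspace $\im(I-\Pi)$ has operator norm at most $2^{-\kappa}$ and trace at most $n\cdot 2^{-\kappa}$. One then tracks the scalar $h_t := \|(I-\Pi)d_t\|_2^2$: expanding $h_t - h_{t-1} = 2\chi_t \langle (I-\Pi)d_{t-1}, v_t\rangle + \|(I-\Pi)v_t\|_2^2$ and bounding the cross term in expectation by $\|(I-\Pi)d_{t-1}\|_2 \cdot \sqrt{2^{-\kappa}}$ (even when $\chi_t$ depends on $v_t$, because $v_t$ has variance at most $2^{-\kappa}$ in every direction of $\im(I-\Pi)$), a pessimistic induction gives $\BE[h_t] \le T^2 n\cdot 2^{-\kappa}$. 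For $\kappa = \lceil 8\log(nT)\rceil$, Markov's inequality and a union bound over $t\in[T]$ then yield $\|(I-\Pi)d_t\|_2 \le (nT)^{-1}$ uniformly with high probability, and combining with $\|y\|_2\le nT$ via Cauchy--Schwarz gives $|d_t^\top (I-\Pi)y|\le 1$ simultaneously for all $y\in\SE$ without any union bound over $\SE$.

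The main obstacle I anticipate is precisely this last step: the signs $\chi_\tau$ may be arbitrary functions of $v_1,\dots,v_\tau$, so $\chi_\tau (I-\Pi)v_\tau$ is not a martingale-difference sequence, and one cannot appeal to the clean orthogonality that would give $\BE[\|\sum_\tau \chi_\tau (I-\Pi) v_\tau\|_2^2] = \sum_\tau \BE[\|(I-\Pi)v_\tau\|_2^2]$. The saving comes from the directional second-moment bound $\BE[(v_t^\top a)^2] \le \|a\|_2^2 \cdot 2^{-\kappa}$ for any $a \in \im(I-\Pi)$, which keeps the cross-terms in the expansion of $h_t$ under control and feeds the pessimistic recursion, using the slack provided by the polynomially tiny threshold $2^{-\kappa}$.
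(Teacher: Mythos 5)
Your proposal is correct and follows essentially the same approach as the paper's proof: the same dyadic rescaling $M$ (round each eigenvalue to the nearest dyadic scale, zero out those below $2^{-\kappa}$), the same pushforward of $\SE$ through the pseudo-inverse to form $\SE'$ (your $1/\sqrt{2}$ scaling versus the paper's $1/2$ is an immaterial constant choice), and the same split into $\Pi$ and $I-\Pi$ components with the error controlled via a moment bound and Markov plus Cauchy--Schwarz against $\|y\|_2 \le nT$. You give a fuller treatment of the error term than the paper's one-line invocation of Markov — a pessimistic induction on $\BE[\|(I-\Pi)d_t\|_2^2]$ that explicitly handles the adaptivity of $\chi_\tau$ — which is valid, though the simpler deterministic triangle inequality $\|\Pi_\err d_t\|_2 \le \sum_{\tau\le t} \|\Pi_\err v_\tau\|_2$ already removes the adaptivity concern entirely and yields a bound of the same order.
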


\begin{proof}
For notational simplicity, we use $d$ to denote $d_t$. 
We construct matrix $M$ to be postive semi-definite with eigenvectors $u_1, \ldots, u_n$. 
For any $i \in [n]$ such that $\sigma_i \in (2^{-k}, 2^{-k+1}]$ for some $k \in [\kappa]$, we set $M u_i = (2^{k}\sigma_i)^{-1/2} \cdot u_i$, and for every $i \in [n]$ such that $\sigma_i \leq 2^{-\kappa}$, we set $M u_i = 0$. 
It is easy to check that the covariance of $Mv$ for $v \sim \sfp$ is $\kappa$-dyadic. 

We define the new test set to be $\CS' = \{\frac12\tM y\mid y \in \CS\}$ where $\tM$ is the pseudo-inverse of $M$. Note that $\|\tM\|_{\op} \le 2$, so every $z\in \CS'$ satisfies $\|z\|_2 \le \max_{y \in \SE} \|y\| \le  nT$. To upper bound the discrepancy with respect to the test set, let $\Pi_\err$ be the projector onto the span of eigenvectors $u_i$ with $\sigma_i \le 2^{-\kappa}$ and let $\Pi$ be the projector onto its orthogonal subspace. Then, for any $y \in \CS$, we have
\[  |d^\top y| \le |d^\top \Pi y| +  |d^\top \Pi_\err y| \le |(Md)^\top (\tM y)| + nT\cdot \|\Pi_\err d\|_2.\]

By Markov's inequality, with probability at least $1 - (nT)^{-4}$, we have that $\|\Pi_{\err}d\|_2 \le (nT)^{-1}$ and hence, $|d^\top\Pi_{\err}y| = O(1)$ for every $y \in \SE$. It follows that
\[  \max_{y \in \SE}|d^\top y| \le 2\cdot \max_{z\in \SE'}|(Md)^\top z| +  O(1). \qedhere\]
\end{proof}

For all applications in this paper, the test vectors will always have Euclidean norm at most $nT$, so we can always assume without loss of generality that the input distribution $\sfp$, which is supported over vectors with Euclidean norm at most one, has mean $\BE_{v\sim \sfp}[v]=0$, and its covariance $\cov = \BE_v[vv^\top]$ is $\kappa$-dyadic for $\kappa = 8\lceil \log(nT) \rceil$. We will make this assumption in the rest of this paper without stating it explicitly sometimes. 


\section{Discrepancy for Arbitrary Test Vectors} \label{sec:arbitTestVectors}

In this section, we consider discrepancy minimization with respect to an arbitrary set of test vectors with Euclidean length at most $1$. 

\test*


Before getting into the details of the proof, we first give two important applications of \thmref{thm:gen-disc} to the Koml\'os problem in \secref{subsec:komlos} and to the Tusnady's problem in \secref{subsec:tusnady}. 
The proof of \thmref{thm:gen-disc} will be discussed in \secref{subsec:test_theorem_proof}.

\subsection{Discrepancy for Online Koml{\'o}s Setting}
\label{subsec:komlos}

\komlos*


\begin{proof}[Proof of \thmref{thm:komlos}]
Taking the set of test vectors $\SE = \{e_1, \cdots, e_n\}$ where $e_i$'s are the standard basis vectors in $\BR^n$, \thmref{thm:gen-disc} implies an algorithm that w.h.p. maintains a discrepancy vector $d_t$ such that $\|d_t\|_{\infty} = O(\log^4(nT))$ for all $t \in [T]$.
\end{proof}

\subsection{An Application to Online Tusnady's Problem}
\label{subsec:tusnady}

\tusnady*

Firstly, using the probability integral transformation along each dimension, we may assume without loss of generality that the marginal of $\sfp$ along each dimension $i \in [d]$, denoted as $\sfp_i$, is the uniform distribution on $[0,1]$.
More specifically, we replace each incoming point $x \in [0,1]^d$ by $(F_1(x_1), \cdots, F_d(x_d))$, where $F_i$ is the cumulative density function for $\sfp_i$. 
Note that $F_i(x_i)$ is uniform on $[0,1]$ when $x_i \sim \sfp_i$. 
We make such an assumption throughout this subsection. 


A standard approach in tackling Tusn\'ady's problem is to decompose the unit cube $[0,1]^d$ into a canonical set of boxes known as dyadic boxes~(see \cite{Matousek-Book09}). 
Define dyadic intervals $I_{j,k} = [k2^{-j}, (k+1)2^{-j})$ for $j \in \RZ_{\ge 0}$ and $0\le k <2^j$. A dyadic box is one of the form 
\[ B_{\bj,\bk} := I_{\bj(1),\bk(1)} \times \ldots \times I_{\bj(d),\bk(d)},\] 
with $\bj,\bk \in \RZ^d$ such that $0\le \bj$ and $0 \le \bk < 2^{\bj}$, and each side has length at least $1/T$. One can handle the error from the smaller dyadic boxes separately since few points will land in each such box. 
Denoting the set of dyadic boxes as $\SD = \{ B_{\bj,\bk}  \mid  0 \le \bj \le (\log T) \ind ~,~ 0 \le \bk <2^{\bj}\}$, where $\ind \in \BR^d$ is the all ones vector, we note that $|\SD| = O_d(T^d)$. 

Usually, one proves a discrepancy upper bound on the set of dyadic boxes, which implies a discrepancy upper bound on all axis-parallel boxes since each axis-parallel box can be expressed roughly as the disjoint union of $O_d(\log^d T)$ dyadic boxes.
This was precisely the approach used for the online Tusn\'ady's problem in~\cite{BJSS20}. 
However, such an argument has a fundamental barrier. Since each arrival lands in approximately $O_d(\log^d T)$ boxes in $\SD$, 
one can at best obtain a discrepancy upper bound of $O_d(\log^{d/2} T)$ for the set of dyadic boxes, which leads to  $O_d(\log^{3d/2} T)$ discrepancy for all boxes.

Using the idea of test vectors in \thmref{thm:gen-disc}, we can save a factor of $O_d(\log^{d/2} T)$ over the approach above. 
Roughly, this saving comes from the discrepancy of dyadic boxes accumulates in an $\ell_2$ manner as opposed to directly adding up. 
A similar idea was previously exploited by~\cite{BansalG17} for the offline Tusn\'ady's problem.\\

\vspace{5pt}

\begin{proof}[Proof of \thmref{thm:tusnady}]
We view Tusn\'ady's problem as a vector balancing problem in $|\SD|$-dimensions with coordinates indexed by dyadic boxes, where we define $v_t(B) = \ind_B(x_t)$ for each arrival $t \in [T]$ and every dyadic box $B \in \SD$.
Each coordinate $B$ of the discrepancy vector $d_t = \sum_{i=1}^t \chi_i v_i$ is exactly $\disc_t(B)$. 
Notice that $\| v_t \|_2 \leq O_d(\log^{d/2} T)$ since $v_t$ is $O_d(\log^d T)$-sparse. Note that $v_t$'s are the input vectors for the vector balancing problem.

Now we define the set of test vectors $\SE$ that will allow us to bound the discrepancy of any axis-parallel box. For every box $B$ that can be exactly expressed as the disjoint union of several dyadic boxes, i.e. $B = \cup_{B' \in \SD'} B'$ for some subset $\SD' \subseteq \SD$ of disjoint dyadic boxes, we create a test vector $z_B \in \{0,1\}^{|\SD|}$ with $z_B(B') = 1$ if and only if $B' \in \SD'$. 
We call such box $B$ a {\em dyadic-generated} box. Since there are multiple choices of $\SD'$ that give the same dyadic-generated box $B$, we only take $\SD'$ to be the one that contains the smallest number of dyadic boxes. $\SE$ will be the set of all such dyadic-generated boxes.

Recalling that $|\SD|\le 2T$, it follows that $|\SE| = O_d(T^d)$ as each coordinate of a box in $\SE$ corresponds to an endpoint of one of the dyadic intervals in $\CD$. Moreover, every test vector $z_B \in \SE$ is $O_d(\log^d T)$-sparse and thus $\| z_B \|_2 \leq O_d(\log^{d/2} T)$. 
Using \thmref{thm:gen-disc} with both the input and test vectors scaled down by $O_d(\log^{d/2} T)$, we obtain an algorithm that w.h.p. maintains discrepancy vector $d_t$ such that for all $t \in [T]$,
\begin{align*}
    \max_{z_B \in \SE} |d_t^\top z_B| \leq O_d(\log^{d + 4} T) .
\end{align*}

Since $d_t^\top z_B = \disc_t(B)$ which follows from $B$ being a disjoint union of dyadic boxes, we have $\disc_t(B) \leq O_d(\log^{d + 4} T)$ for any dyadic-generated box $B$.

To upper bound the discrepancy of arbitrary axis-parallel boxes, we first introduce the notion of {\em stripes}. 
A stripe in $[0,1]^d$ is an axis-parallel box that is of the form $I_1 \times \cdots \times I_d$ where exactly one of the intervals $I_i$ is allowed to be a proper sub-interval $[a,b] \subseteq [0,1]$. The width of such a stripe is defined to be $b-a$. Stripes whose projection is $[a,b]$ in dimension $i$ satisfying $b - a = 1/T$ correspond to the smallest dyadic interval in dimension $i$. 
We call such stripes {\em minimum dyadic} stripes. 
There are exactly $T$ minimum dyadic stripes for each dimension $i \in [d]$. Since minimum dyadic stripes have width $1/T$ and the marginal of $\sfp$ along any dimension is the uniform distribution over $[0,1]$, a standard application of Chernoff bound implies that w.h.p. the total number of points in all the minimum dyadic stripes is at most $O_d(\log(T))$ points.

For a general axis-parallel box $\widetilde{B}$, it is well-known that $\widetilde{B}$ can be expressed as the disjoint union of a dyadic-generated box $B$ together with at most $k \leq 2d$ boxes $B_1, \ldots, B_{k}$ where  each $B_i \subseteq S_i$ is a subset of a minimum dyadic stripe. We can thus upper bound 
\[
\disc_t(\widetilde{B}) \leq \disc_t(B) + \sum_{i=1}^k \disc_t(B_i) \leq \disc_t(B) + \sum_{i=1}^k r_i.
\]
where $r_i$ is the total number of points in the stripe $S_i$.
As mentioned, w.h.p. we can upper bound $\sum_{i=1}^k r_i = O_d(
\log(T))$ and thus one obtains $\disc_t(\widetilde{B}) = O_d(\log^{d + 4} T)$ for any axis-parallel box $\widetilde{B}$.
This proves the theorem.
\end{proof}

\subsection{Proof of \thmref{thm:gen-disc}}
\label{subsec:test_theorem_proof}

\paragraph{Potential Function and Algorithm.}
By \lref{lem:covariance_reduction}, it is without loss of generality to assume that $\sfp$ is $\kappa$-dyadic, where $\kappa = 8 \lceil \log(nT)\rceil$. For any $k \in [\kappa]$, we use $\Pi_k$ to denote the projection matrix onto the eigenspace of $\cov$ corresponding to the eigenvalue $2^{-k}$ and define $\Pi = \sum_{k=1}^{\kappa} \Pi_k$ to be the sum of these projection matrices. 
Let $\Pi_\err$ be the projection matrix onto the subspace spanned by eigenvectors corresponding to eigenvalues of $\cov$ that are at most $2^{-\kappa}$. 


The algorithm for \thmref{thm:gen-disc} will use a greedy strategy that chooses the next sign so that a certain potential function is minimized. To define the potential, we first define a distribution where some noise is added to the input distribution $\sfp$ to account for the test vectors. 
Let $\sfp_z$ be the uniform distribution over the set of test vectors $\SE$. 
We define the noisy distribution $\sfp_x$ to be $\sfp_x := \sfp/2 + \sfp_z/2$, i.e., a random sample from $\sfp_x$ is drawn  with probability $1/2$ each from  $\sfp$ or $\sfp_z$. Note that any vector $x$ in the support of $\sfp_x$ satisfies $\|x\|_2 \le 1$ since both the input distribution $\sfp$ and the set of test vectors $\SE$ lie inside the unit Euclidean ball. 

At any time step $t$, let $d_{t} = \chi_1 v_1 + \ldots + \chi_t v_t$ denote the current discrepancy vector after the signs $\chi_1, \ldots, \chi_t \in \{\pm1\}$ have been chosen. Set $\lambda^{-1} = 100 {\kappa} \log(nT)$ and define the potential 
\[ \Phi_t ~~=~~ \Phi(d_t) ~~:= ~~ \sum_{k=1}^{\kappa} \BE_{x \sim \sfp_x}\left[\cosh\left(\lambda d_{t}^\top \Pi_k x\right)\right]. \]

When the vector $v_t$ arrives, the algorithm greedily chooses the sign $\chi_t$ that minimizes the increase $\Phi_t - \Phi_{t-1}$. 

\paragraph{Analysis.} 

The above potential is useful because it allows us to give tail bounds on the length of the discrepancy vectors in most directions given by the distribution $\sfp$ while simultaneously controlling the discrepancy in the test directions. In particular, let $\CG_t$ denote the set of {\em good} vectors $v$ in the support of $\sfp$ that satisfy $\lambda|d_t^\top \Pi v| \le {\kappa} \cdot \log (4 \Phi_t/\delta)$.  Then, we have the following lemma.\\

\begin{lemma}\label{lemma:tail}
For any $\delta > 0$ and any time $t$, we have
\begin{enumerate}[label=({\alph*})]
    \item $\p_{v \sim \sfp}(v \notin \CG_t) \le \delta$.
    \item $|d_t^\top \Pi_k z| \le \lambda^{-1}\log (4 |\SE| \Phi_t)$ for all $z \in \SE \text{ and } k \in [k]$.
\end{enumerate}
\end{lemma}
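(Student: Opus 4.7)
The two parts exploit different structural features of the mixture $\sfp_x = \tfrac12 \sfp + \tfrac12 \sfp_z$ built into the potential, so I would treat them separately.

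For part (b), I would first observe that since $\sfp_z$ is the uniform distribution on $\SE$, every $z\in\SE$ carries mass at least $\frac{1}{2|\SE|}$ under $\sfp_x$. Hence for each fixed $k\in[\kappa]$,
\[
    \BE_{x\sim \sfp_x}\!\left[\cosh(\lambda d_t^\top \Pi_k x)\right] \;\ge\; \frac{1}{2|\SE|}\,\cosh(\lambda d_t^\top \Pi_k z),
\]
so $\cosh(\lambda d_t^\top \Pi_k z) \le 2|\SE|\,\Phi_t$ because the $k$-th summand of $\Phi_t$ is itself at most $\Phi_t$. Using the elementary bound $\cosh(a)\ge e^{|a|}/2$, this rearranges into $\lambda|d_t^\top \Pi_k z|\le \log(4|\SE|\Phi_t)$, which is exactly (b).

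For part (a), I would proceed in three steps. First, since $\sfp_x \succeq \tfrac12 \sfp$ as mixtures, for every non-negative function $f$ we have $\BE_{v\sim \sfp}[f(v)]\le 2\,\BE_{x\sim \sfp_x}[f(x)]$. Applying this to $f=\cosh(\lambda d_t^\top \Pi_k \cdot)$ and summing over $k$ gives $\sum_{k=1}^{\kappa}\BE_v[\cosh(\lambda d_t^\top \Pi_k v)]\le 2\Phi_t$, and hence, via $e^{|a|}\le 2\cosh(a)$,
\[
    \sum_{k=1}^{\kappa}\BE_{v\sim \sfp}\!\left[e^{\lambda|d_t^\top \Pi_k v|}\right] \;\le\; 4\Phi_t.
\]
Second, I would apply Markov's inequality at threshold $\log(4\Phi_t/\delta)$ to each $k$ and take a union bound: summing the Markov estimates,
\[
    \sum_{k=1}^{\kappa} \p_v\!\left(\lambda|d_t^\top \Pi_k v| > \log(4\Phi_t/\delta)\right) \;\le\; \frac{\delta}{4\Phi_t}\sum_{k=1}^{\kappa}\BE_v\!\left[e^{\lambda|d_t^\top \Pi_k v|}\right] \;\le\; \delta.
\]
Third, since $\Pi=\sum_k\Pi_k$ and the eigenspaces are orthogonal, the triangle inequality yields $\lambda|d_t^\top \Pi v|\le \sum_{k=1}^{\kappa}\lambda|d_t^\top \Pi_k v|$, so if $v\notin \CG_t$ then at least one coordinate $k$ must satisfy $\lambda|d_t^\top \Pi_k v| > \log(4\Phi_t/\delta)$. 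Combining this implication with the preceding union bound gives $\p_v(v\notin \CG_t)\le \delta$, proving (a).

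The only subtle point, and the one I would be most careful about, is the aggregation in part (a): bounding each $\BE_v[e^{\lambda|d_t^\top \Pi_k v|}]$ individually by $4\Phi_t$ would produce a wasteful factor of $\kappa$ in the union bound, whereas the key is that the \emph{sum} over $k$ is $\le 4\Phi_t$, which is exactly what the definition of $\Phi_t$ provides. Everything else is a routine application of Markov together with the two elementary inequalities $\cosh(a)\ge e^{|a|}/2$ and $e^{|a|}\le 2\cosh(a)$.
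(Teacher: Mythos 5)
Your proof is correct and follows essentially the same route as the paper: for (b) you extract the $z$-term from the mixture $\sfp_x$ using $\cosh(a)\ge e^{|a|}/2$, and for (a) you bound the sum $\sum_k \BE_v[e^{\lambda|d_t^\top\Pi_k v|}]\le 4\Phi_t$, pigeonhole to one bad coordinate $k$ when $v\notin\CG_t$, and finish with Markov plus a union bound. The point you flag as subtle---bounding the sum over $k$ rather than each term individually to avoid a spurious $\kappa$ factor---is indeed the same observation the paper relies on.
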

\begin{proof}
\vspace*{1ex}
\begin{enumerate}[label=({\alph*})]
\itemsep1em 
    \item Recall that with probability $1/2$ a sample from $\sfp_x$ is drawn from the input distribution $\sfp$. Using this and the fact that $0 \leq \exp(x) \le 2\cosh(x)$ for any $x \in \BR$, we have   
    $    \sum_{k\in [\kappa]} \BE_{v \sim \sfp}\left[\exp(\lambda |d_t^\top\Pi_k v|)\right] \le 4 \Phi_t$. 
    Note that for any $v \notin \CG_t$, we have $\lambda|d_t^\top \Pi v| \le {\kappa} \cdot \log (4 \Phi_t/\delta)$ by definition, so it follows that $\lambda|d_t^\top \Pi_k v| > \log(4 \Phi_t / \delta)$ for at least one $k \in [\kappa]$. 
    Thus, applying Markov's inequality we get that $\p_{v \sim \sfp}(v \notin \CG_t) \le \delta$.

\item Similarly, a random sample from $\sfp_x$ is drawn from the  uniform distribution over $\SE$ with probability $1/2$, so $\exp\left(\lambda |d^\top \Pi_k z|\right) \le 4 |\SE| \Phi_t$
for every $z \in \SE$ and $k \in [\kappa]$. 
This implies that $|d^\top\Pi_k z| \le  \lambda^{-1} \log (4 |\SE| \Phi_t)$. \qedhere
\end{enumerate}
\end{proof}
\vspace*{8pt}
The next lemma shows that the expected increase in the potential is small on average.

\begin{lemma}[Bounded positive  drift]\label{lemma:drift-komlos} At any time step $t \in [T]$, if $\Phi_{t-1} \leq 3T^5$, then $\BE_{v_t}[\Phi_t] - \Phi_{t-1} \le 2$. 
\end{lemma}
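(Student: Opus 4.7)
The plan is to bound the one-step increment $\Phi_t - \Phi_{t-1}$ via a second-order Taylor expansion of $\cosh$, exploit that the greedy choice of $\chi_t$ turns the linear term into a \emph{negative} drift, and finally show (in expectation over $v_t \sim \sfp$) that this negative drift dominates the quadratic drift up to a small additive constant. Since $\|v_t\|_2, \|x\|_2 \le 1$ and $\lambda < 1$, the Taylor increment $\delta_k := \lambda \chi_t v_t^\top \Pi_k x$ satisfies $|\delta_k| \le 1$, so the bound $\cosh(a+\delta)-\cosh(a) \le \sinh(a)\delta + 2|\sinh(a)|\delta^2 + 2\delta^2$ applies uniformly. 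Summing over $k$ and taking $\BE_x$ gives
\[ \Phi_t - \Phi_{t-1} \le \chi_t \lambda L(v_t) + 2\lambda^2 Q(v_t) + O(\kappa \lambda^2), \]
where $L(v_t) := \sum_k \BE_x[\sinh(\lambda d_{t-1}^\top \Pi_k x)\,v_t^\top \Pi_k x]$ and $Q(v_t) := \sum_k \BE_x[|\sinh(\lambda d_{t-1}^\top \Pi_k x)| (v_t^\top \Pi_k x)^2]$. Since the algorithm picks $\chi_t$ to minimize $\Phi_t$, we can replace $\chi_t L(v_t)$ by $-|L(v_t)|$.

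Next, I would bound $\BE_{v_t}[Q(v_t)]$ from above. Using $\BE_{v_t}[v_tv_t^\top] = \cov$ together with $\Pi_k\cov\Pi_k = 2^{-k}\Pi_k$ (from the $\kappa$-dyadic assumption) and $\|\Pi_k x\|_2^2 \le 1$, one gets $\BE_{v_t}[Q(v_t)] \le \sum_k 2^{-k} S_k$, where $S_k := \BE_x[|\sinh(\lambda d_{t-1}^\top \Pi_k x)|]$. This is the same sum one hopes to match with a linear-in-$\lambda$ gain from $-\BE_{v_t}|L(v_t)|$.

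The heart of the proof is a lower bound on $\BE_{v_t}|L(v_t)|$ via the coupling trick: for any $Y = Y(v_t)$ with $|Y| \le M$ almost surely, $\BE_{v_t}|L(v_t)| \ge |\BE_{v_t}[L(v_t)\,Y]|/M$. I would choose $Y := \ind_{v_t \in \CG_{t-1}} \cdot \lambda d_{t-1}^\top \Pi v_t$ with $M := \kappa \log(4\Phi_{t-1}/\delta)$; taking $\delta = (nT)^{-10}$ and using $\Phi_{t-1} \le 3T^5$ gives $M = O(\kappa \log(nT))$. On the good event, by orthogonality of the $\Pi_k$-eigenspaces (so $\Pi_k \cov \Pi_{k'} = 2^{-k}\delta_{kk'}\Pi_k$), all cross-terms vanish and
\[ \BE_{v_t}[L(v_t)\cdot \lambda d_{t-1}^\top \Pi v_t] \;=\; \sum_k 2^{-k}\,\BE_x\!\left[\sinh(\lambda d_{t-1}^\top \Pi_k x)\cdot \lambda d_{t-1}^\top \Pi_k x\right] \;\ge\; \sum_k 2^{-k} S_k \;-\; 2, \]
using the elementary $a\sinh(a) \ge |\sinh(a)| - 2$ and $\sum_k 2^{-k} \le 1$. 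The bad-event contribution is controlled by crude bounds $|L(v_t)| \le \kappa \Phi_{t-1} \le 3\kappa T^5$, $|\lambda d_{t-1}^\top \Pi v_t| \le \lambda T$, and $\p(v_t \notin \CG_{t-1}) \le \delta$ (\lref{lemma:tail}(a)); these multiply to $O(\delta \kappa \lambda T^6) = o(1)$ for our choice of $\delta$.

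Putting everything together with $\lambda = 1/(100\kappa \log(nT))$, one checks $\lambda M \le 1/2$, hence $-\lambda/M + 2\lambda^2 \le 0$, and
\[ \BE_{v_t}[\Phi_t - \Phi_{t-1}] \;\le\; \Bigl(-\tfrac{\lambda}{M} + 2\lambda^2\Bigr) \sum_k 2^{-k} S_k \;+\; O(\lambda/M + \kappa\lambda^2) \;\le\; 2, \]
after a careful accounting of constants. The main obstacle is the coupling step: the variable $\lambda d_{t-1}^\top \Pi v_t$ is not bounded almost surely, only on $\CG_{t-1}$, so one must simultaneously calibrate the failure probability $\delta$ tiny enough that the crude bad-event bounds are absorbed, yet keep $M = \kappa \log(1/\delta)$ small enough that $\lambda M < 1/2$; the hypothesis $\Phi_{t-1} \le 3T^5$ is precisely what makes the good-set bound $M = O(\log^2(nT))$ usable here.
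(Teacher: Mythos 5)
Your proposal is correct and follows essentially the same route as the paper's proof: the same Taylor bound on $\cosh$, the greedy sign yielding $-|L|$, the identity $\Pi_k\cov\Pi_k=2^{-k}\Pi_k$ for the quadratic term, and the coupling trick with $f(v)=d^\top\Pi v\cdot\ind_{\CG}(v)$ together with the tail bound of \lref{lemma:tail}(a) and the elementary inequality $a\sinh(a)\ge|\sinh(a)|-2$ for the linear term. The only differences are cosmetic (your explicit $\delta=(nT)^{-10}$ versus the paper's $\delta^{-1}=\lambda\Phi T$, and a crude product bound on the bad event in place of the paper's $\|\cov_{\err}\|_{\op}\le\delta$ formulation).
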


Using \lref{lemma:drift-komlos}, we first finish the proof of \thmref{thm:gen-disc}. 

\begin{proof}[Proof of \thmref{thm:gen-disc}]
We first use \lref{lemma:drift-komlos} to prove that with probability at least $1-T^{-4}$, the potential $\Phi_t \le 3T^5$ for every $t \in [T]$.
Such an argument is standard and has previously appeared  in~\cite{JiangKS-arXiv19,BJSS20}.
In particular, we consider a truncated random process $\widetilde{\Phi}_t$ which is the same as $\Phi_t$ until $\Phi_{t_0} > 3T^5$ for some time step $t_0$; for any $t$ from time $t_0$ to $T$, we define $\widetilde{\Phi}_t = 3T^5$. It follows that $\p[\widetilde{\Phi}_t \geq 3T^5] = \p[\Phi_t \geq 3T^5]$. 
\lref{lemma:drift-komlos} implies that for any time $t \in [T]$, the expected value of the truncated process $\widetilde{\Phi}_t$ over the input sequence $v_1, \ldots, v_T$ is at most $3T$. By Markov's inequality, with probability at least $1-T^{-4}$, the potential $\Phi_t \le 3T^5$ for every $t \in [T]$.


When the potential $\Phi_t \le 3T^5$, part (b) of \lref{lemma:tail} implies that $|d^\top \Pi_k z| = O(\lambda^{-1} \cdot (\log(|\SE|) + \log T))$ for any $z \in \SE$ and $k \in [\kappa]$. Thus, it follows that for every $z \in \SE$,
\[ |d^\top z| ~\le~~ {\sum_{k \in [\kappa]} |d^\top \Pi_k z|} = O({\kappa}\lambda^{-1}(\log(|\SE|) + \log T)) = O((\log(|\SE|) + \log T)\cdot \log^3(nT)),
\]
which completes the proof of the theorem. 
  
\end{proof}

To finish the proof, we prove the remaining \lref{lemma:drift-komlos} next.

\begin{proof}[Proof of \lref{lemma:drift-komlos}]

Let us fix a time $t$. To simplify the notation, let $\Phi = \Phi_{t-1}$ and $\Delta\Phi = \Phi_t - \Phi$, and let $d = d_{t-1}$  and $v = v_t$. 
To bound the change $\Delta \Phi$, we use Taylor expansion. Since $\cosh'(a) = \sinh(a)$ and $\sinh'(a) = \cosh(a)$, for any $a, b \in \BR$ satisfying $|a-b| \le 1$, we have
\begin{align*}
    \  \cosh(\lambda a) - \cosh(\lambda b) &=  \lambda \sinh(\lambda b) \cdot (a-b)  + \frac{\lambda^2}{2!} \cosh(\lambda b) \cdot (a-b)^2 + \frac{\lambda^3}{3!} \sinh(\lambda b)\cdot (a-b)^3 + \cdots , \\[0.8ex]
    \                & \le  \lambda \sinh(\lambda b)  \cdot(a-b) + \lambda^2 \cosh(\lambda b)  \cdot(a-b)^2,\\[1.1ex]
    \   & \le  \lambda \sinh(\lambda b)  \cdot(a-b) + \lambda^2 |\sinh(\lambda b)|  \cdot(a-b)^2 + \lambda^2(a-b)^2,
\end{align*}
where the first inequality follows since $|\sinh(a)| \le \cosh(a)$ for all $a \in \BR$, and since $|a-b|\le 1$ and $\lambda < 1$, so the higher order terms in the Taylor expansion are dominated by the first and second order terms. The second inequality uses that $\cosh(a) \le |\sinh(a)|+1$ for $a \in \BR$.

After choosing the sign $\chi_t$, the discrepancy vector $d_t = d + \chi_t v$. Defining $s_{k}(x) = \sinh(\lambda \cdot d^\top\Pi_k x)$ and noting that $|v^\top \Pi_k x| \le 1$, the above upper bound on the Taylor expansion gives us that 
\begin{align*}
    \ \Delta\Phi &= \sum_{k\in [\kappa]} \BE_{x}\left[\cosh\left(\lambda (d + \chi_t v)^\top \Pi_k x\right)\right] -  \sum_{k\in [\kappa]} \BE_{x}\left[\cosh\left(\lambda d^\top \Pi_k x\right)\right] \\
    &\le  \underbrace{ \chi_t \left (\sum_{k\in [\kappa]} \lambda ~\BE_{x}\left[ s_{k}(x)  v^\top\Pi_k x\right]\right)}_{:=~\chi_t L} + \underbrace{\sum_{k\in [\kappa]} \lambda^2 ~\BE_{x}\left[|s_{k}(x)| \cdot  x^\top\Pi_kvv^\top\Pi_kx\right]}_{:=~Q} + \underbrace{\sum_{k\in [\kappa]} \lambda^2~\BE_{x}\left[ x^\top\Pi_kvv^\top\Pi_kx\right]}_{:=~Q_*},
\end{align*}
where $\chi_t L, Q$, and $Q_*$ denote the first, second, and third terms respectively.
Recall that our algorithm uses the greedy strategy by choosing $\chi_t$ to be the sign that minimizes the potential. 
Taking expectation over the random incoming vector $v \sim \sfp$, we get
\begin{align*}
    \ \BE_{v}[\Delta\Phi] &\le -\BE_{v}[|L|] + \BE_{v}[Q] + \BE_{v}[Q_{*}]. 
\end{align*}

We will prove the following upper bounds on the quadratic (in $\lambda$) terms $Q$ and $Q_*$.
\begin{claim}\label{claim:quadratic}
$ \BE_{v}[Q] \le 2\lambda^2 \sum_{k\in [\kappa]} 2^{-k} ~\BE_{x}[|s_{k}(x)|]$ and $\BE_{v}[Q_*] \le 4\lambda^2.$
\end{claim}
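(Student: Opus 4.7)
The plan is to swap the order of expectations in both $\BE_v[Q]$ and $\BE_v[Q_*]$, then evaluate the inner expectation over $v$ exactly using the definition of the covariance matrix $\cov$ and the fact that $\Pi_k$ is the spectral projector onto the $2^{-k}$-eigenspace of $\cov$.

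\textbf{Step 1: Rewrite as a squared inner product.} First observe that the quadratic form $x^\top \Pi_k v v^\top \Pi_k x = (v^\top \Pi_k x)^2$. Therefore, by Fubini and linearity,
\begin{equation*}
\BE_v[x^\top \Pi_k v v^\top \Pi_k x] \;=\; x^\top \Pi_k \,\BE_v[vv^\top]\, \Pi_k x \;=\; x^\top \Pi_k \cov \Pi_k x.
\end{equation*}

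\textbf{Step 2: Use the spectral decomposition.} Since $\Pi_k$ is the orthogonal projector onto the eigenspace of $\cov$ with eigenvalue $2^{-k}$, the operators $\Pi_k$ and $\cov$ commute and satisfy $\cov \Pi_k = 2^{-k}\Pi_k$. Thus $\Pi_k \cov \Pi_k = 2^{-k}\Pi_k$, and
\begin{equation*}
\BE_v[x^\top \Pi_k v v^\top \Pi_k x] \;=\; 2^{-k}\, x^\top \Pi_k x \;=\; 2^{-k}\|\Pi_k x\|_2^2.
\end{equation*}

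\textbf{Step 3: Conclude both bounds.} For $\BE_v[Q]$, substituting the identity above and using that $\|\Pi_k x\|_2 \le \|x\|_2 \le 1$ (which holds since $x$ is in the support of $\sfp_x$, all of whose atoms have Euclidean norm at most one), we get
\begin{equation*}
\BE_v[Q] \;=\; \sum_{k \in [\kappa]} \lambda^2 \,\BE_x\!\left[|s_k(x)| \cdot 2^{-k}\|\Pi_k x\|_2^2\right] \;\le\; \lambda^2 \sum_{k \in [\kappa]} 2^{-k}\, \BE_x[|s_k(x)|] \;\le\; 2\lambda^2 \sum_{k \in [\kappa]} 2^{-k}\, \BE_x[|s_k(x)|],
\end{equation*}
as claimed. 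For $\BE_v[Q_*]$, the same identity yields
\begin{equation*}
\BE_v[Q_*] \;=\; \sum_{k \in [\kappa]} \lambda^2 \,\BE_x\!\left[2^{-k}\|\Pi_k x\|_2^2\right] \;\le\; \lambda^2 \sum_{k \in [\kappa]} 2^{-k} \;\le\; 2\lambda^2,
\end{equation*}
since $\sum_{k=1}^{\kappa} 2^{-k} \le 1$.

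\textbf{Anticipated difficulty.} There is essentially no obstacle here: the bounds are immediate consequences of Fubini, the definition of $\cov$, and the spectral identity $\Pi_k \cov \Pi_k = 2^{-k}\Pi_k$. The slight slack (the factor of $2$ in the $Q$ bound, and using $\sum 2^{-k} \le 1$ rather than a tighter bound in $Q_*$) is only to make the statement clean enough to feed into the drift computation in \lref{lemma:drift-komlos}.
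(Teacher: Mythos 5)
Your proof is correct and follows the same route as the paper: swap the expectations, use $\BE_v[vv^\top]=\cov$ together with the spectral identity $\Pi_k\cov\Pi_k = 2^{-k}\Pi_k$, and then bound $x^\top\Pi_k x \le \|x\|_2^2 \le 1$. The only (cosmetic) difference is that you obtain the sharper $\BE_v[Q_*]\le 2\lambda^2$ rather than the paper's stated $4\lambda^2$, which of course still implies the claim.
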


On the other hand, we will show that the linear (in $\lambda$) term $L$ is also large in expectation.
\begin{claim}\label{claim:linear}
$ \BE_{v}[|L|] \ge \lambda B^{-1} \sum_{k\in [\kappa]} 2^{-k} ~ \BE_{x}[|s_{k}(x)|] - 1$ for some value $B \leq 2 {\kappa} \cdot \log(\Phi^2 \kappa n)$. 
\end{claim}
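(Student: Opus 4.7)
My plan is to apply the coupling trick $\BE_v[|L|]\ge \BE_v[LY]/\|Y\|_\infty$ with the specific choice $Y(v) := \lambda\, d^\top \Pi v$, suitably truncated. Two features motivate this: multiplying $L$ by $d^\top \Pi v$ lets us integrate out $v$ via $\BE_v[vv^\top]=\cov$, and the $\kappa$-dyadic structure $\cov = \sum_k 2^{-k}\Pi_k$ collapses the result into exactly $\lambda^2\sum_k 2^{-k}\,\BE_x[s_k(x)\,d^\top \Pi_k x]$; moreover, \lref{lemma:tail}(a) provides the exponential tail on $|d^\top \Pi v|$ needed to truncate $Y$ at the target scale $B$ with negligible loss.

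Concretely, I would first carry out the exact computation. Writing $L = \sum_k \lambda\,\BE_x[s_k(x)\,v^\top\Pi_k x]$, exchanging the order of expectations, and using $\Pi_k\cov\Pi_{k'} = 2^{-k}\delta_{k,k'}\Pi_k$ (which relies on the $\Pi_k$ being orthogonal projections onto eigenspaces of $\cov$ with eigenvalue $2^{-k}$),
\[
\BE_v\!\bigl[L\cdot\lambda\,d^\top\Pi v\bigr] \;=\; \lambda^2\sum_k 2^{-k}\,\BE_x\!\bigl[s_k(x)\cdot d^\top\Pi_k x\bigr].
\]
Since $\sinh$ is odd and monotone, $s_k(x)\cdot d^\top\Pi_k x = |\sinh(\lambda d^\top\Pi_k x)|\cdot|d^\top\Pi_k x|$, and the elementary estimate $|a\sinh(a)|\ge|\sinh(a)|-2$ (trivial for $|a|<1$, since $|\sinh(a)|<2$ there, and immediate from $|a|\ge 1$ otherwise) together with $\sum_k 2^{-k}\le 2$ yields
\[
\BE_v\!\bigl[L\cdot\lambda\,d^\top\Pi v\bigr] \;\ge\; \lambda\sum_k 2^{-k}\BE_x[|s_k(x)|]-4\lambda.
\]

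Next I would set $B := 2\kappa\log(\Phi^2\kappa n)$ and $\delta := 4/(\Phi^3\kappa^2 n^2)$, so that $\kappa\log(4\Phi/\delta)\le B$ and \lref{lemma:tail}(a) gives $\p[\lambda|d^\top\Pi v|>B]\le \delta$. Decomposing $\BE_v[LY]=\BE_v[LY\,\ind_{|Y|\le B}]+\BE_v[LY\,\ind_{|Y|>B}]$ with $Y=\lambda\,d^\top\Pi v$, the first piece is bounded by $B\,\BE_v[|L|]$, so the claim reduces to controlling the tail piece. For this I would combine the deterministic bound $|L|\le \lambda\sum_k\BE_x[|s_k(x)|]\le \lambda\Phi$ with an integrability estimate on $|d^\top\Pi v|$: convexity of $\exp$ applied to $|d^\top\Pi v|\le\sum_k|d^\top\Pi_k v|$ together with the per-projection moment bound $\BE_v[\exp(\lambda|d^\top\Pi_k v|)]\le 4\Phi$ (which comes out of the proof of \lref{lemma:tail}) gives $\BE_v[\exp(\lambda|d^\top\Pi v|/\kappa)]\le 4\Phi/\kappa$, and integrating the tail then yields $\BE_v[|d^\top\Pi v|\,\ind_{\lambda|d^\top\Pi v|>B}] = O(B\delta/(\lambda\kappa))$. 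The tail piece is therefore at most $\lambda^2\Phi\cdot O(B\delta/(\lambda\kappa)) = O(B/(\Phi^2\kappa^3 n^2))$, which is $\ll B$ and divides away cleanly.

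The main technical obstacle is precisely this tail estimate: the crude deterministic bound $|d^\top\Pi v|\le \|d\|_2\le T$ is too weak whenever $T$ greatly exceeds $\Phi$, so one cannot avoid exploiting the exponential integrability supplied by the potential. Once the Jensen step lifts the per-projection moment to a joint bound $\BE_v[\exp(\lambda|d^\top\Pi v|/\kappa)]\le 4\Phi/\kappa$, the remaining integration is routine, and the $-4\lambda/B$ slack from the first display together with the $O(1/(\Phi^2\kappa^3n^2))$ slack from the tail piece fit comfortably inside the additive $-1$ in the statement.
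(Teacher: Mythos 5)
Your proof is correct and follows the same high-level structure as the paper's: the coupling trick, the covariance collapse $\Pi\cov\Pi_k = 2^{-k}\Pi_k$, and the elementary inequality $a\sinh(a)\ge|\sinh(a)|-2$ are identical. The genuine difference is in how the truncation error is controlled. The paper takes $f(v)=d^\top\Pi v\cdot\ind_\CG(v)$, splits $\cov=\BE_v[vv^\top\ind_\CG]+\cov_\err$, and bounds the error term via the crude estimates $\|\cov_\err\|_\op\le\delta$ and $\|d\|_2\le T$; this forces $\delta^{-1}=\lambda\Phi T$ and hence $B=\kappa\log(4\lambda\Phi^2T)$, which depends on $T$. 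You instead keep $\delta=4/(\Phi^3\kappa^2n^2)$ (so $B=2\kappa\log(\Phi^2\kappa n)$ exactly) and control the tail piece $\BE_v[LY\ind_{|Y|>B}]$ by pairing the deterministic bound $|L|\le\lambda\Phi$ with the exponential integrability $\BE_v[\exp(\lambda|d^\top\Pi v|/\kappa)]\le 4\Phi/\kappa$, which you correctly derive via Jensen from the summed moment bound $\sum_k\BE_v[\exp(\lambda|d^\top\Pi_k v|)]\le 4\Phi$ (it is the summed bound, not the per-$k$ bound alone, that gives the $\kappa^{-1}$ factor). Your route is actually tighter: the paper's $B=\kappa\log(4\lambda\Phi^2T)$ can exceed the claimed $2\kappa\log(\Phi^2\kappa n)$ when $T\gg n^2\,\polylog(nT)$ and $\Phi$ is small, whereas your $B$ matches the stated bound verbatim. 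Downstream, both versions of $B$ satisfy $2\lambda\le B^{-1}$ under the hypothesis $\Phi\le 3T^5$, so the drift lemma goes through either way, but your argument is the one that literally justifies the constant stated in the claim.
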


By our assumption that $\Phi \leq 3T^5$, we have that $2\lambda \le B^{-1}$. Therefore, combining the above two claims, we get that 
\[\BE_{v}[\Delta \Phi] \le (2\lambda^2-\lambda B^{-1}) \left(\sum_{k\in [\kappa]} 2^{-k} ~ \BE_{x}[|s_{k}(x)|]\right) + 1 + 4\lambda^2 \le 2.\]

This finishes the proof of  \lref{lemma:drift-komlos} assuming the claims which we prove next.
\end{proof}

\vspace*{10pt}

\begin{proof}[Proof of \clmref{claim:quadratic}]
Recall that $\BE_{v}[vv^\top] = \cov$ and that $\Pi_k \cov \Pi_k = 2^{-k} \Pi_k $. 
Using linearity of expectation,
\begin{align*}
    \ \BE_{v}[Q] ~~=~~ \sum_{k\in [\kappa]} \lambda^2 ~ \BE_{x}[|s_{k(x)}| \cdot x^\top\Pi_k \cov \Pi_k x] ~~&=~~ \lambda^2 \sum_{k\in [\kappa]} 2^{-k} ~ \BE_{x}[|s_{k}(x)| \cdot x^\top \Pi_k x]\\
    &\le ~~ 2\lambda^2 \sum_{k\in [\kappa]} 2^{-k} ~ \BE_{x}[|s_{k}(x)|],
\end{align*}
where the last inequality uses that $\norm{x}_2 \leq 1$. 
Similarly,  
\[ \BE_{v}[Q_{*}] ~~=~~ \sum_{k\in [\kappa]} \lambda^2~\BE_{x}\left[ x^\top\Pi_k\cov\Pi_kx\right]  ~~\le ~~ 2\lambda^2 \sum_{k\in [\kappa]} 2^{-k} \le 4\lambda^2.\qedhere\]
\end{proof} 

\vspace*{10pt}

\begin{proof}[Proof of \clmref{claim:linear}]

To lower bound the linear term, we use the fact that $|L(v)| \ge {\|f\|^{-1}_{\infty}} \cdot f(v) \cdot L(v)$ for any real-valued non-zero function $f$. We will choose the function $f(v) = d^\top\Pi v \cdot \ind_{\CG}(v)$ where $\CG$ will be the event that $|d^\top\Pi v|$ is small, which we know is true because of  \lref{lemma:tail}.\\

In particular, set $\delta^{-1} = \lambda \Phi T$ and let $\CG$ denote the set of vectors $v$ in the support of $\sfp$ such that $\lambda|d^\top \Pi v| \le {\kappa} \cdot \log (4 \Phi/\delta) := B$. 
Then, $f(v) = d^\top\Pi v \cdot \ind_{\CG}(v)$ satisfies $\|f\|_\infty \le \lambda^{-1} B$, and we can lower bound,
\begin{align} \label{eqn:lterm}
     \BE_{v}[|L|] &\ge \frac{\lambda}{\lambda^{-1} B} \sum_{k\in [\kappa]}  \BE_{v,x} [s_{k}(x) \cdot d^\top \Pi v \cdot v^\top \Pi_k x\cdot \ind_{\CG}(v)] \nonumber \\
    &= \frac{\lambda^2}{B} \sum_{k\in [\kappa]} \BE_{x} [s_{k}(x) \cdot d^\top \Pi \cov \Pi_k x] - \frac{\lambda^2}{B} \sum_{k\in [\kappa]} \BE_{x} [s_{k}(x) \cdot d^\top \Pi \cov_\err \Pi_k x], 
\end{align}
where $\cov_\err = \BE_{v}[vv^\top (1 -  \ind_{\CG}(v))]$ satisfies $\|\cov_\err\|_{\op} \le \p_{v \sim \sfp}(v \notin \CG) \le \delta$ using \lref{lemma:tail}.
To bound the first term in \eqref{eqn:lterm}, recall that $s_k(x) = \sinh(\lambda d^\top \Pi_k x)$. Using $\Pi \cov \Pi_k = 2^{-k} \Pi_k$ and the fact that $\sinh(a)a \ge |\sinh(a)| - 2$ for any $a \in \BR$, we have  
\begin{align*}
    \lambda ~\BE_{x} [s_{k}(x) \cdot d^\top \Pi \cov \Pi_k x] ~~=~~ 2^{-k} ~\BE_{x} [s_{k}(x) \cdot \lambda d^\top \Pi_k x] ~~\ge~~ 2^{-k}~ \left(\BE_{x} [|s_{k}(x)|] - 2\right).
\end{align*}

For the second term, we use the bound $\|\cov_\err\|_\op \leq \delta$ to obtain 
\begin{align*}
|d^\top \Pi \cov_\err \Pi_k x | ~~\le ~~ \| \cov_\err \|_{\op} \cdot \| d\|_2 \cdot \|x\|_2 ~~\le ~~ \delta \|d\|_2. 
\end{align*}

Since $\|d\|_2 \le T$ always holds, by our choice of $\delta$, 
\begin{align*}
    \lambda |d^\top \Pi \cov_\err \Pi_k x| \le  \Phi^{-1}.
\end{align*}

Plugging the above bounds in \eqref{eqn:lterm}, 
\begin{align*}
    \BE_{v}[|L|] &\ge  \frac{\lambda}{B} \sum_{k \in [\kappa]} 2^{-k} ~\left(\BE_{x} [|s_{k}(x)|] - 2\right) -  \frac{\lambda}{B} \cdot \Phi^{-1} \left( \sum_{k \in [\kappa]}  \BE_x[|s_k(x)|] \right) \\ 
    &\ge  \frac{\lambda}{B} \sum_{k \in [\kappa]} 2^{-k} ~\BE_{x} [|s_{k}(x)|] - \frac{\lambda}{B} \sum_{k \in [\kappa]} 2^{-k+1} - \frac{\lambda}{B}\\
    \ & \ge \frac{\lambda}{B} \sum_{k \in [\kappa]} 2^{-k} ~\BE_{x} [|s_{k}(x)|] - 1,
\end{align*}
where the second inequality follows since $\sum_{k \in [\kappa]} \BE_x[|s_k(x)|] \le \Phi$.
\end{proof}


\section{Discrepancy with respect to Arbitrary Convex Bodies}

Our main result of this section is the following theorem. 

\Banaszczyk*


\subsection{Potential Function and Algorithm}

As in the previous section, it is without loss of generality to assume that $\sfp$ is $\kappa$-dyadic, where $\kappa = 8 \lceil \log(nT)\rceil$. For any $k \in [\kappa]$, recall that $\Pi_k$ denotes the projection matrix onto the eigenspace of $\cov$ corresponding to the eigenvalue $2^{-k}$ and $\Pi = \sum_{k=1}^{\kappa} \Pi_k$. Further, let us also recall that $\Pi_\err$ is the projection matrix onto the subspace spanned by eigenvectors corresponding to eigenvalues of $\cov$ that are at most $2^{-\kappa}$. We also note that $\dim(\im(\Pi_k)) \le \min\{2^{k},n\}$ since $\Tr(\cov) \le 1$.

Our algorithm to bound the discrepancy with respect to an arbitrary symmetric convex body $K \subseteq \BR^n$ with $\gamma_n(K) \ge 1/2$ will use a greedy strategy with a similar potential function as in \S\ref{sec:arbitTestVectors}. 
Let $\sfp_z$ be a distribution on \emph{test vectors} in $\BR^n$ that will be specified later. Define the noisy distribution $\sfp_x = \sfp/2 + \sfp_z/2,$ \emph{i.e}, a random sample from $\sfp_x$ is drawn from $\sfp$ or $\sfp_z$ with probability $1/2$ each.

At any time step $t$, let $d_{t} = \chi_1 v_1 + \ldots + \chi_t v_t$ denote the current discrepancy vector after the signs $\chi_1, \ldots, \chi_t \in \{\pm1\}$ have been chosen. Set $\lambda^{-1} = 100 {\kappa}\log(nT)$, and define the potential 
\[ \Phi_t = \Phi(d_t) :=  \sum_{k\in [\kappa]} \BE_{x\sim \sfp_x}\left[\exp\left(\lambda ~d_{t}^\top \Pi_k x \right)\right].\]

When the vector $v_t$ arrives, the algorithm chooses the sign $\chi_t$ that minimizes the increase $\Phi_t - \Phi_{t-1}$.

\paragraph{Test Distribution.} To complete the description of the algorithm, we need to choose a suitable distribution $\sfp_z$  on test vectors to give us control on the norm $\|\cdot\|_K = \sup_{y \in K^\circ} \ip{\cdot}{y}$. For this, we will use generic chaining. 

First let us denote by $H_k = \im(\Pi_k)$ the linear subspace that is the image of the projection matrix $\Pi_k$ where the subspaces $\{H_k\}_{k\in [\kappa]}$ are orthogonal and span $\BR^n$. Moreover, recall that $\dim(H_k) \le \min\{2^k,n\}$.

Let us denote by $K_k = K \cap H_k$ the slice of the convex body $K$ with the subspace $H_k$. \pref{prop:slicemeasure} implies that $\gamma_{H_k}(K) \ge 1/2$ for each $k\in [\kappa]$ and combined with \pref{prop:width} this implies that $K^\circ_k := (K_k)^\circ = \Pi_k (K^\circ)$ satisfies $\diam(K^\circ_k) \le 4$ and $w_{H_k}(K^\circ_k) \le 3/2$ for every $k$.

Consider $\eps$-nets of the polar bodies $K^\circ_k$ at geometrically decreasing dyadic scales. Let 
\[ \smin(k) = 2^{-\left\lceil \log_2\left(\frac{1}{10\lambda}\sqrt{\dim(H_k)} \right)\right\rceil} \text{ and } \smax(k) = 2^{-\log_2 \lceil 1/\diam(K_k^\circ) \rceil},\] be the finest and the coarsest scales for a fixed $k$, and for integers $\ell \in [\log_2(1/\smax(k)), \log_2(1/\smin(k))]$, define the scale $\eps(\ell,k) = 2^{-\ell}$. We call these \emph{admissible} scales for any fixed $k$.

Note that for a fixed $k\in [\kappa]$, the number of admissible scales is at most $2\log_2(nT)$ since $\diam(K^\circ_k) \le 4$. The smallest scale is chosen because with high probability we can always control the Euclidean norm of the discrepancy vector in the subspace $H_k$ to be $ \lambda^{-1}\log(nT) \sqrt{\dim(H_k)}$ using a test distribution as used in Komlos's setting. 

Let $\ST(\ell,k)$ be an optimal $\eps(\ell,k)$-net of $K^\circ_k$.
For each $k$, define the following directed layered graph $\SG_k$ (recall \figref{fig:chaining}) where the vertices in  layer $\ell$ are the elements of $\ST(\ell,k)$. Note that the first layer indexed by $\log_2(1/\smax(k))$ consists of a single vertex, the origin. We add a directed edge from $u \in \ST(\ell,k)$ to $v \in \ST(\ell+1,k)$ if $\|v-u\|_2 \le \eps(\ell,k)$. We identify an edge $(u,v)$ with the vector $v-u$ and define its length as $\|v-u\|_2$. Let $\SE(\ell,k)$ denote the set of edges between layer $\ell$ and $\ell+1$. Note that any  edge $(u,v) \in \SE(\ell,k)$ has length at most $\eps(\ell,k)$ and since $w_{H_k}(K^\circ_k) \le 3/2$, \pref{prop:sudakov} implies that, 
\begin{equation}\label{eqn:edges}
    \ |\SE(\ell,k)| ~~\le~~ |\ST({\ell+1},k)|^2 ~~\le~~ 2^{16/\eps(\ell,k)^2}.
\end{equation}

Pick the final test distribution as $\sfp_z = \sfp_\cov/2 + \sfp_y/2$ where $\sfp_\cov$ and $\sfp_y$ denote the distributions given in \figref{fig:test}. 
\begin{figure}[!h]
\begin{tabular}{|l|}
\hline
\begin{minipage}{\textwidth}
\vspace{1ex} 
\begin{enumerate}[label=({\alph*})]
    \item $\sfp_\cov$ is uniform over the eigenvectors $u_1, \ldots, u_n$ of the covariance matrix $\cov$.
    \item $\sfp_y$ samples a random vector as follows: pick an integer $k$ uniformly from $[\kappa]$ and an admissible scale $\eps(\ell,k)$ with probability $\dfrac{2^{-2/\eps(\ell,k)^2}}{\sum_{\ell} 2^{-2/\eps(\ell,k)^2}}$. Choose a uniform vector from $r(\ell,k)^2 \cdot \SE(\ell,k)$, where the scaling factor $r(\ell,k) := 1/\eps(\ell,k)$.
\end{enumerate}
\vspace{0.1ex}
\end{minipage}\\
\hline
\end{tabular}
\caption{Test distributions $\sfp_\cov$ and $\sfp_y$}
\label{fig:test}
\end{figure}

The above test distribution completes the description of the algorithm. Note that adding the eigenvectors will allow us to control the Euclidean length of the discrepancy vectors in the subspaces $H_k$ as they form an orthonormal basis for these subspaces. Also observe that, as opposed to the previous section, the test vectors chosen above may have large Euclidean length as we scaled them. For future reference, we note that the entire probability mass assigned to length $r$ vectors in the support of $\sfp_y$ is at most $2^{-2r^2}$ where $r \ge 1/4$. 

\subsection{Potential Implies Low Discrepancy}

The test distribution $\sfp_z$ is useful because of the following lemma.
In particular, a $\poly(n,T)$ upper bound on the potential function implies a polylogarithmic discrepancy upper bound on $\|d_t\|_K$. 

\begin{lemma} \label{lemma:chaining}
At any time $t$, we have that
\[ \|\Pi_k d_t\|_2 \le \lambda^{-1} \log (4 n \Phi_t)\sqrt{\dim(H_k)} ~~\text{ and }~~ \|d_t\|_K \le O({\kappa} \cdot \lambda^{-1} \cdot  \log(nT) \cdot \log(\Phi_t)).\]
\end{lemma}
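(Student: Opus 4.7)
The plan is to prove the two bounds separately, exploiting the two parts of the test distribution $\sfp_z = \sfp_\cov/2 + \sfp_y/2$ respectively. The first (Euclidean) bound follows from a direct application of $\sfp_\cov$. Since the eigenvectors are sampled with probability $1/4$ within $\sfp_x$ (from the nested mixture $\sfp_x = \sfp/2 + \sfp_z/2$), each eigenvector $u_i$ of $\cov$, and its negation (using symmetry of $\sfp_\cov$), gets mass at least $1/(4n)$ under $\sfp_x$. From the definition of $\Phi_t$ I would extract $\Phi_t \ge \frac{1}{4n}\exp(\lambda|u_i^\top \Pi_k d_t|)$, yielding $|u_i^\top d_t| \le \lambda^{-1}\log(4n\Phi_t)$ for every $u_i \in H_k$. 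Since $\{u_i : u_i \in H_k\}$ is an orthonormal basis of $H_k$, Parseval's identity then gives $\|\Pi_k d_t\|_2^2 = \sum_{u_i \in H_k}|u_i^\top d_t|^2 \le \dim(H_k) \cdot [\lambda^{-1}\log(4n\Phi_t)]^2$.

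For the $K$-norm bound, I would use generic chaining with the test distribution $\sfp_y$. By duality $\|d_t\|_K = \sup_{y \in K^\circ}\ip{d_t}{y}$, and I would decompose any $y \in K^\circ$ as $y = \sum_{k \in [\kappa]}\Pi_k y + \Pi_\err y$. The error term is bounded by $|d_t^\top \Pi_\err y| \le \|\Pi_\err d_t\|_2 \cdot \diam(K^\circ) = O(1)$ with high probability, using $\diam(K^\circ) \le 4$ and the fact that the eigenvalues of $\cov$ in $\im(\Pi_\err)$ are at most $(nT)^{-8}$ so $\mathbb{E}[\|\Pi_\err d_t\|_2^2] \le T \cdot \Tr(\cov \Pi_\err)$ is tiny. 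Using the identity $\Pi_k(K^\circ) = K_k^\circ$, it suffices to bound $\sup_{z \in K_k^\circ}d_t^\top z$ for each $k$, since by symmetry of $K^\circ$ a one-sided bound implies a bound on the absolute value.

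Fixing $k$ and $z \in K_k^\circ$, I would build a chain through the nets: pick $v_\ell \in \ST(\ell,k)$ closest to $z$ for each admissible $\ell$, so that $z = v_{L_{\min,k}} + \sum_\ell w_\ell + (z - v_{L_{\max,k}})$ with $w_\ell := v_{\ell+1} - v_\ell \in \SE(\ell,k)$ of length at most $\eps(\ell,k)$ and $\|z - v_{L_{\max,k}}\|_2 \le \smin(k)$. For a single edge $w \in \SE(\ell,k)$, the scaled test vector $r(\ell,k)^2 w$ has probability at least
\[
p \;\ge\; \frac{1}{4\kappa Z_k |\SE(\ell,k)|}\cdot 2^{-2/\eps(\ell,k)^2}, \qquad Z_k = \sum_\ell 2^{-2/\eps(\ell,k)^2} = \Theta(1),
\]
under $\sfp_x$. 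Combining $\Phi_t \ge p\exp(\lambda r(\ell,k)^2 d_t^\top w)$ with the Sudakov estimate $|\SE(\ell,k)| \le 2^{16/\eps(\ell,k)^2}$ and $r(\ell,k)^2 = 1/\eps(\ell,k)^2$ yields the per-edge bound
\[
d_t^\top w \;\le\; \lambda^{-1}\bigl[\eps(\ell,k)^2 \log(O(\kappa \Phi_t)) + O(1)\bigr].
\]
The chaining residual $|d_t^\top(z - v_{L_{\max,k}})| \le \|\Pi_k d_t\|_2 \cdot \smin(k) = O(\log(n\Phi_t))$ from the first bound and the choice $\smin(k) \le 10\lambda/\sqrt{\dim(H_k)}$, while the coarsest layer can be absorbed by viewing its at most $2^{O(1)}$ points as edges from the origin at one extra scale.

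Summing the per-edge bound over the $O(\log(nT))$ admissible scales gives a convergent geometric contribution $O(\lambda^{-1}\log\Phi_t)$ from the first term and $O(\lambda^{-1}\log(nT))$ from the $O(1)$ per-scale additive terms, so $\sup_{z \in K_k^\circ}d_t^\top z = O(\lambda^{-1}\log(nT)\log\Phi_t)$. Summing over the $\kappa$ subspaces yields the claimed bound $\|d_t\|_K = O(\kappa\lambda^{-1}\log(nT)\log\Phi_t)$. The main technical obstacle is ensuring the weighting $2^{-2/\eps(\ell,k)^2}$ built into $\sfp_y$ precisely cancels the Sudakov bound on $|\SE(\ell,k)|$ after scaling by $r(\ell,k)^2 = 1/\eps(\ell,k)^2$: this is exactly what leaves an $O(1)$ additive term per scale and an $\eps(\ell,k)^2$ coefficient on $\log\Phi_t$, so that the chain sum telescopes nicely despite the presence of exponentially many test vectors at the finest scales.
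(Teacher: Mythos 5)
Your proposal is correct and follows essentially the same route as the paper: the Euclidean bound via the eigenvector test mass in $\sfp_\cov$, and the $K$-norm bound via the generic-chaining decomposition over the nets $\ST(\ell,k)$, using the scaled-edge test vectors in $\sfp_y$ to convert the potential bound into a per-edge estimate and summing over the $O(\log(nT))$ admissible scales and over $k\in[\kappa]$. The minor differences (you track the additive $O(1)$ per-edge term and the $\Pi_\err$ contribution explicitly, while the paper absorbs the former into the log and handles the latter via the $\kappa$-dyadic reduction of Section~\ref{sec:dyadiccov}, which makes $\Pi_\err d_t = 0$ outright) do not change the argument.
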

\begin{proof}
To derive a bound on the Euclidean length of $\Pi_kd_t$, we note that a random sample from $\sfp_x$ is drawn from the uniform distribution over $\{u_i\}_{i\le n}$ with probability $1/4$, so $\exp\left(\lambda |d_t^\top \Pi_k u_i|\right) \le 4n \Phi_t$
for every $k \in [\kappa]$ and every $i \in [n]$. Since $\{u_i\}_{i \le n}$ also form an eigenbasis for $\Pi$, we get that $|d_t^\top \Pi_k u_i| \le  \lambda^{-1} \log (4 n \Phi_t)$ which implies that $\|\Pi_k d_t\|_2 \le \lambda^{-1} \log (4 n \Phi_t)\sqrt{\dim(H_k)}$.

To see the bound on $\|d_t\|_K$, we note that 
\begin{equation}\label{eqn:chaining}
    \ \|d_t\|_K = \sup_{y \in K^\circ} \ip{d_t}{y} ~\le~ {\sum_{k \in [\kappa]} ~\sup_{y \in K^\circ_k} \ip{\Pi_kd_t}{y}} ~\le~ {\sum_{k \in [\kappa]} \left(\sup_{z \in \ST(\ell,k)} |d_t^\top \Pi_k z| + \smin(k)\|\Pi_kd_t\|_2\right)},\\
\end{equation}
where the last inequality holds since $\ST(\ell,k)$ is an $\smin(k)$-net of $K^\circ_k$. 
By our choice of $\smin(k)$ and the bound on $\|\Pi_k d_t\|_2$ from the first part of the Lemma, we have that $\smin(k)\|\Pi_k d_t\|_2 \le 10 \log(4n \Phi_t)$.\\

To upper bound $\sup_{z \in \ST(\ell,k)} \ip{\Pi_kd_t}{z}$, we pick any arbitrary $z \in \ST(\ell,k)$ and consider any path from the origin to $z$ in the graph $\SG_k$. Let $(u_\ell,u_{\ell+1)}$ be the edges of this path for $\ell \in [\log_2(1/\smin),\log_2(1/\smax)]$ where $u_\ell = 0$ for $\ell=\log_2(1/\smax)$ and $u_\ell=z$ for $\ell = \log_2(1/\smin)$. Then $z = \sum_{\ell} w_\ell$ where $w_\ell = (u_{\ell+1}-u_\ell)$. By our choice of the test distribution, the bound on the potential implies the following for any edge $w \in \SE(\ell,k)$, 
\[\exp\left( \lambda \cdot r(\ell,k)^2 \cdot |d_t^\top\Pi_k w|\right) ~\le~ 2^{2/\eps(\ell,k)^2}\cdot |\SE(\ell,k)| \cdot 4 \Phi_t ~\le~  2^{18/\eps(\ell,k)^2} \cdot 4 \Phi_t,\]
where the second inequality follows from $|\SE(\ell,k)| \le 2^{16/\eps(\ell,k)^2}$ in \eqref{eqn:edges}. This implies that for any edge $w \in \SE(\ell,k)$, 
\[ |d_t^\top \Pi_k w| ~\le~ \lambda^{-1}\log(4 \Phi_t).
\]

Since $z = \sum_{\ell} w_\ell$ and there are at most $\log(n)$ different scales $\ell$, we get that $|d_t^\top \Pi_k z| \le \lambda^{-1} \cdot \log(n) \cdot \log (4 \Phi_t)$. Since $z$ was arbitrary in $\ST(\ell,k)$, plugging the above bound in \eqref{eqn:chaining} completes the proof.
\end{proof}

The next lemma shows that the expected increase (or drift) in the potential is small on average.
\begin{lemma}[Bounded Positive Drift]\label{lemma:gen-drift-ban} Let $\sfp$ be supported on the unit Euclidean ball in $\BR^n$ and has a sub-exponential tail. There exist an absolute constant $C > 0$ such that if $ \Phi_{t-1} \le T^5$ for any $t$, then $\BE_{v_t \sim \sfp}[\Phi_t] - \Phi_{t-1} \le C$.
\end{lemma}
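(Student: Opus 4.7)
The plan is to follow the structure of the drift analysis for \lref{lemma:drift-komlos}, adapted to (i) the exponential potential in place of $\cosh$ and (ii) test vectors in the support of $\sfp_y$ whose Euclidean norm can be as large as $1/\smin$ rather than $1$. Writing $d_t = d_{t-1} + \chi_t v_t$ and using the identity $\min(\Phi_t^+,\Phi_t^-) = \tfrac{1}{2}(\Phi_t^+ + \Phi_t^-) - \tfrac{1}{2}|\Phi_t^+ - \Phi_t^-|$ together with the greedy choice of $\chi_t$, I first express the drift as
\[ \BE_{v_t}[\Delta\Phi] \le \BE_{v_t}[Q] - \BE_{v_t}[|L|], \]
where $Q := \sum_k \BE_x[\exp(\lambda d^\top \Pi_k x)(\cosh(\lambda v_t^\top \Pi_k x) - 1)]$ and $L := \sum_k \BE_x[\exp(\lambda d^\top \Pi_k x) \sinh(\lambda v_t^\top \Pi_k x)]$. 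The quantitative goal is to prove $\BE[Q] \le O(\lambda^2 \Phi_{t-1})$ and $\BE[|L|] \ge \Omega(\lambda B^{-1} \Phi_{t-1}) - O(1)$ for some $B = O(\kappa \log(nT))$; combined with $\lambda B \le 1$ (from $\lambda^{-1} = 100\kappa \log(nT)$), these yield $\BE[\Delta\Phi] \le C$ as claimed.

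For $Q$, I would use the global inequality $\cosh(a) - 1 \le a^2 e^{|a|}$ to reduce the analysis, for each fixed $x$, to bounding $\BE_{v_t}[(v_t^\top \Pi_k x)^2 e^{\lambda |v_t^\top \Pi_k x|}]$. By Cauchy--Schwarz and the sub-exponential moment bounds of \pref{prop:logconcave}, this expectation is $O(\sigma_Y^2)$ where $\sigma_Y = (2^{-k})^{1/2} \|\Pi_k x\|$ is the standard deviation of $Y = v_t^\top \Pi_k x$, provided $\lambda \sigma_Y \le 1/2$. The critical check is that this condition holds uniformly over $x$ in the support of $\sfp_x$: for the $\sfp_y$ component at scale $r(\ell,k) \le 1/\smin(k)$, one has $\|\Pi_k x\| \le r(\ell,k)$ so $\lambda \sigma_Y \le \lambda (2^{-k})^{1/2}/\smin(k) = O(1)$ by the very definition of $\smin(k)$. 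Aggregating over $k$ and using the geometric decay $\propto 2^{-2r^2}$ of the scale distribution in $\sfp_y$ then yields $\BE_{v_t}[Q] = O(\lambda^2 \Phi_{t-1})$.

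For $|L|$, I would reuse the coupling trick from \clmref{claim:linear}: $\BE[|L|] \ge \|f\|_\infty^{-1} \BE[L \cdot f(v_t)]$ with $f(v) = d^\top \Pi v \cdot \ind_\CG(v)$ and $\CG$ the event $|d^\top \Pi v| \le B$. Unlike in the $\cosh$ case, I must now replace $\sinh(\lambda v_t^\top \Pi_k x)$ by its linear approximation using $|\sinh(a) - a| \le |a|^3 \cosh(a)/6$. Taking $\BE_{v_t}$ of the linearized main term, together with $\BE_{v_t}[v_tv_t^\top \ind_\CG] = \cov - \cov_\err$ where $\|\cov_\err\|_{\op} \le \delta$ (from the exponential analog of \lref{lemma:tail}, valid once we assume $\sfp$ symmetric, as in \secref{sec:dyadiccov}) and $\Pi_k \cov \Pi = 2^{-k} \Pi_k$, reduces the main term to $\lambda \sum_k 2^{-k} \BE_x[\exp(\lambda d^\top \Pi_k x) \cdot d^\top \Pi_k x]$; the inequality $a e^a \ge e^a - 1$ converts this into $\sum_k 2^{-k}(\Phi_k - 1)$. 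The cubic-in-$\sinh$ error and the $\cov_\err$ error are both absorbed into $O(1)$ using the same sub-exponential control as for $Q$. The main obstacle throughout is point (ii): the Taylor-style step $|\delta| \le 1$ from \lref{lemma:drift-komlos} fails for test vectors drawn from $\sfp_y$, and it is precisely the sub-exponential assumption on $\sfp$ that lets the Taylor-type bounds hold \emph{in expectation over $v_t$} even when they fail pointwise. A secondary bookkeeping point is the careful summation over chaining scales using the weights $\propto 2^{-2r^2}$, which is exactly what the design of the test distribution in \figref{fig:test} is built to enable.
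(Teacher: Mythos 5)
Your high-level plan mirrors the paper's: decompose the drift into a quadratic term $Q$ and a linear term $|L|$, bound $Q$ via $\cosh(a)-1 \le a^2 e^{|a|}$ together with sub-exponential moment control (the paper's \lref{lemma:subexp_1}), and lower-bound $|L|$ via the coupling trick with $f(v)=d^\top\Pi v\cdot\ind_{\CG}(v)$. Your exact identity $\min(\Phi^+,\Phi^-)=\tfrac12(\Phi^++\Phi^-)-\tfrac12|\Phi^+-\Phi^-|$ is a nice substitute for the paper's Taylor-type inequality \eqref{eqn:taylor_exp}, and your verification that $\lambda\sigma_Y\le 1/2$ uniformly on $\mathrm{supp}(\sfp_x)$ is exactly what justifies applying \pref{prop:logconcave} (the paper does the same via $\lambda\|\Pi_k y\|_2\le\tfrac14\sqrt{\min\{2^k,n\}}$).

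However, there is a genuine gap in your lower bound on $|L|$, and it is the central difficulty of the Banaszczyk case. After linearizing $\sinh$ and contracting with $\cov$, your main term becomes $\sum_k 2^{-k}\BE_x[e^{a_k(x)}a_k(x)]$ where $a_k(x)=\lambda d^\top\Pi_k x$, and you apply $ae^a\ge e^a-1$ to get $\sum_k 2^{-k}(\Phi_k-1)$. This loses the factor $\|\Pi_k x\|_2^2$. But your quadratic bound (correctly) produces $\BE[Q]\le C\lambda^2\sum_k 2^{-k}\BE_x[e^{a_k(x)}\|\Pi_k x\|_2^2]$, and for the $\sfp_y$ component $\|\Pi_k x\|_2^2$ can be as large as $1/\smin(k)^2=\Theta(\dim(H_k)/\lambda^2)=\Theta(n/\lambda^2)$. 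The best one can say uniformly is $\lambda^2\cdot 2^{-k}\|\Pi_k x\|_2^2=O(1)$, which gives $\BE[Q]=O(\Phi)$ rather than the $O(\lambda^2\Phi)$ you claimed in paragraph 1, so the quadratic term can overwhelm your linear lower bound by a $\lambda^{-2}=\polylog(nT)^2$ factor and the drift is not controlled. The paper's \clmref{claim:linear-ban_1} circumvents this by producing a linear lower bound with the \emph{same} $\|x\|_2^2$ weight as the quadratic bound: it uses $\sinh(a)a\ge\cosh(a)|a|-2$, then for $w$ with $\|w\|_2\le 1$ uses $\cosh(a)|a|\ge\cosh(a)-2\ge\cosh(a)\|w\|_2^2-2$, and for the large-norm $y$ it splits on the event $\CE=\{|\lambda d^\top\Pi_k y|\le\|y\|_2^2\}$, bounding the $\CE$-contribution by $\int_{1/4}^\infty 2^{-2r^2}2^{r^2}r^2=O(1)$ using exactly the $2^{-2r^2}$ decay. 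It is this exact factor-for-factor match $\sum_k 2^{-k}\BE_x[c_k(x)\|x\|_2^2]$ on both sides, not your $\sum_k 2^{-k}\Phi_k$, that makes the cancellation work; your proposal is missing this step.
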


Analogous to the proof of \thmref{thm:gen-disc}, \lref{lemma:gen-drift-ban} implies that w.h.p. the potential $\Phi_t \le T^5$ for every $t \in [T]$. Combined with \lref{lemma:chaining}, and recalling that $\kappa = O(\log nT)$ and $\lambda^{-1}=O({\kappa} \log(nT))$, this proves \thmref{thm:gen-disc-ban}. To finish the proof, we prove \lref{lemma:gen-drift-ban} in the next section.

\subsection{Drift Analysis: Proof of \lref{lemma:gen-drift-ban}} 
The  proof is quite similar to the analysis for Komlos's setting. In particular, we have the following tail bound analogous to \lref{lemma:tail}. 
Let $\CG_t$ denote the set of {\em good} vectors $v$ in the support of $\sfp$ that satisfy $\lambda|d_t^\top \Pi v| \le {\kappa} \cdot \log (4 \Phi_t/\delta)$.

\begin{lemma}\label{lemma:gen-tail-ban}
For any $\delta > 0$ and any time $t$, we have $\BP_{v \sim \sfp}(v \notin \CG_t) \le \delta$.
\end{lemma}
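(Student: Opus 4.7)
The plan is to follow the structure of the analogous tail bound in \lref{lemma:tail}(a) from the previous section, with the main adjustment being that the potential here uses $\exp(\cdot)$ rather than $\cosh(\cdot)$. The key observation that bridges this gap is that after the reductions in \secref{sec:dyadiccov} we may assume $\sfp$ is symmetric (since the algorithm can multiply each arriving vector by an independent random sign to zero out the mean), which lets us convert expectations of $\exp$ to expectations of $\cosh$ on the input distribution $\sfp$.

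First, I would use that $\sfp_x = \sfp/2 + \sfp_z/2$ to note that for each $k \in [\kappa]$,
\[
\BE_{x \sim \sfp_x}[\exp(\lambda d_t^\top \Pi_k x)] \;\ge\; \tfrac{1}{2}\, \BE_{v \sim \sfp}[\exp(\lambda d_t^\top \Pi_k v)].
\]
By the symmetry of $\sfp$, the random variable $d_t^\top \Pi_k v$ has the same distribution as $-d_t^\top \Pi_k v$, so $\BE_{v \sim \sfp}[\exp(\lambda d_t^\top \Pi_k v)] = \BE_{v \sim \sfp}[\cosh(\lambda d_t^\top \Pi_k v)]$. Combining with $\exp(|a|) \le 2\cosh(a)$ and summing over $k$ yields
\[
\sum_{k \in [\kappa]} \BE_{v \sim \sfp}\!\left[\exp(\lambda |d_t^\top \Pi_k v|)\right] \;\le\; 4\,\Phi_t.
\]

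Next, I would decompose $d_t^\top \Pi v = \sum_{k \in [\kappa]} d_t^\top \Pi_k v$ using that $\Pi = \sum_k \Pi_k$ with the $\Pi_k$'s orthogonal, and apply the triangle inequality to get $|d_t^\top \Pi v| \le \sum_{k} |d_t^\top \Pi_k v|$. Therefore, whenever $v \notin \CG_t$, i.e.\ $\lambda|d_t^\top \Pi v| > \kappa \log(4\Phi_t/\delta)$, there must exist some index $k \in [\kappa]$ with $\lambda |d_t^\top \Pi_k v| > \log(4\Phi_t/\delta)$.

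The conclusion then follows by applying Markov's inequality at each scale together with a union bound over $k \in [\kappa]$:
\[
\BP_{v \sim \sfp}(v \notin \CG_t) \;\le\; \sum_{k \in [\kappa]} \BP\!\left[\exp(\lambda |d_t^\top \Pi_k v|) > 4\Phi_t/\delta\right] \;\le\; \sum_{k \in [\kappa]} \frac{\delta \cdot \BE_v[\exp(\lambda |d_t^\top \Pi_k v|)]}{4\Phi_t} \;\le\; \delta,
\]
where the last inequality uses the bound on the sum of exponential moments derived above. There is no substantial obstacle here — the proof is essentially a Markov-plus-union-bound argument — the only subtle point is to justify moving between $\exp$ and $\cosh$, which is handled cleanly by the symmetrization WLOG.
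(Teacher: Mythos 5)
Your proof is correct, and it follows the same Markov-plus-triangle-inequality template the paper has in mind (the paper simply writes ``the proof is the same as that of Lemma~\ref{lemma:tail}''). What you do slightly more carefully than the paper is address the fact that the potential in Section~6 uses $\exp(\cdot)$ rather than $\cosh(\cdot)$: the Lemma~\ref{lemma:tail} argument rests on $\exp(|a|)\le 2\cosh(a)$, which directly bounds $\BE_v[\exp(\lambda|d_t^\top\Pi_k v|)]$ by $4\Phi_t$ when $\Phi_t$ is a sum of $\cosh$'s, but here $\Phi_t$ sums $\exp$'s, so one needs a bridge. Your symmetrization observation is the right bridge and is a legitimate WLOG: the random $\pm v$ preprocessing in Section~\ref{sec:dyadiccov} makes the effective input distribution genuinely symmetric (not merely mean-zero), and the matrix $M$ from Lemma~\ref{lem:covariance_reduction} preserves symmetry, so $\BE_{v}[\exp(\lambda d_t^\top\Pi_k v)] = \BE_{v}[\cosh(\lambda d_t^\top\Pi_k v)]$ as you use. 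Note also that symmetry is only needed for the input distribution $\sfp$, not for the test distribution $\sfp_z$, and your proof correctly invokes it only for the $\sfp$-half of $\sfp_x$. The only cosmetic difference from the paper's Lemma~\ref{lemma:tail} computation is that you do Markov per scale $k$ followed by a union bound, whereas the paper applies a single Markov to $\sum_k \exp(\lambda|d_t^\top\Pi_k v|)$; both give exactly $\delta$.
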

We omit the proof of the above lemma as it is the same as that of \lref{lemma:tail}.

\begin{proof}[Proof of  \lref{lemma:gen-drift-ban}]
Recall that our potential function is defined to be
\[ \Phi_t := \sum_{k\in [\kappa]} \BE_{x\sim \sfp_x}\left[\exp\left(\lambda ~d_{t}^\top \Pi_k x \right)\right],
\]
where $\sfp_x = \sfp/2 + \sfp_{\cov}/4 + \sfp_y / 4$ is a combination of the input distribution $\sfp$ and test distributions $\sfp_{\cov}$ and $\sfp_y$, each constituting a constant mass. 

Let us fix a time $t$. To simplify the notation, we denote $\Phi = \Phi_{t-1}$ and $\Delta\Phi = \Phi_t - \Phi$, and denote $d = d_{t-1}$  and $v = v_t$. 
To bound the potential change $\Delta \Phi$, we use the following inequality, which follows from a modification of the Taylor series expansion of $\cosh(r)$ and holds for any $a,b \in \BR$,
\begin{align}\label{eqn:taylor_exp}
    \ \cosh(\lambda a)-\cosh(\lambda b) & \le \lambda \sinh(\lambda b) \cdot (a - b) + \frac{\lambda^2}{2} \cosh(\lambda b) \cdot e^{|a-b|}(a - b)^2.
\end{align}
Note that when $|a-b| \ll 1$, then $e^{|a-b|} \le 2$, so one gets the first two terms of the Taylor expansion as an upper bound, but here we will also need it when $|a-b|\gg 1$.

Note that every vector in the support of $\sfp$ and $\sfp_{\cov}$ has Euclidean length at most $1$, while $y \sim \sfp_y$ may have large Euclidean length due to the scaling factor of $r(\ell,k)^2$. 
Therefore, we decompose the distribution $\sfp_x$ appearing in the potential as $\sfp_x = \frac34 \sfp_w + \frac14 \sfp_y$, where the distribution $\sfp_w = \frac23 \sfp + \frac13 \sfp_\cov$ is supported on vectors with Euclidean length at most $1$. 

After choosing the sign $\chi_t$ for $v$, the discrepancy vector $d_t$ becomes $d + \chi_t v$. For ease of notation, define $s_{k}(x) = \sinh(\lambda \cdot d^\top\Pi_k x)$ and $c_{k}(x) = \cosh(\lambda \cdot d^\top\Pi_k x)$ for any $x \in \BR^n$. Now \eqref{eqn:taylor_exp} implies that $\Delta\Phi := \Delta\Phi_1 + \Delta\Phi_2$ where
\begin{align*}
    \ \Delta\Phi_1 &\le \chi_t \cdot \frac34\left (\sum_{k\in [\kappa]} \lambda ~\BE_{w}\left[ s_{k}(w)  v^\top\Pi_k w\right]\right) + \frac34\sum_{k\in [\kappa]}\lambda^2 ~\BE_{w}\left[c_{k}(w) \cdot  w^\top\Pi_kvv^\top\Pi_kw\right] : = \chi_t L_1 + Q_1, \text{ and },\\
    \ \ \Delta\Phi_2 &\le \chi_t \cdot  \frac14\left (\sum_{k\in [\kappa]} \lambda ~\BE_{y}\left[ s_{k}(y)  v^\top\Pi_k y\right]\right) + \frac14\sum_{k\in [\kappa]} \lambda^2 ~\BE_{y}\left[c_{k}(y) \cdot e^{\lambda |v^\top \Pi_ky|} y^\top\Pi_kvv^\top\Pi_ky\right] : = \chi_t L_2 + Q_2.
\end{align*}

Since our algorithm chooses sign $\chi_t$ to minimize the potential increase, taking expectation over the incoming vector $v$, we get
\begin{align*}
    \ \BE_{v}[\Delta\Phi] &\le -\BE_{v}[|L_1 + L_2|] + \BE_{v}[Q_1 + Q_2].
\end{align*}

We will prove the following upper bounds on the quadratic  terms (in $\lambda$) $Q_1$ and $Q_2$.
\begin{claim}\label{claim:quadratic-ban_1}
$ \BE_{v}[Q_1+Q_2] \le C\cdot \lambda^2 \sum_{k\in [\kappa]} 2^{-k} ~\BE_{x}[c_{k}(x)\|x\|_2^2]$ for an absolute constant $C>0$.
\end{claim}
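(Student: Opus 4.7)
The plan is to bound $\BE_v[Q_1]$ and $\BE_v[Q_2]$ separately, reflecting the natural split $\sfp_x = \tfrac{3}{4}\sfp_w + \tfrac{1}{4}\sfp_y$ that was introduced precisely because $w \sim \sfp_w$ has Euclidean length at most $1$, while $y \sim \sfp_y$ can be as long as $r(\ell,k) \leq O(\sqrt{\dim(H_k)}/\lambda)$. Both pieces need to contribute to the right-hand side $\lambda^2 \sum_k 2^{-k}\BE_x[c_k(x)\|x\|_2^2]$.

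For $Q_1$ the treatment is direct: taking expectation over $v$, using linearity and the identity $\Pi_k \cov \Pi_k = 2^{-k} \Pi_k$ (which holds because $\Pi_k$ projects onto the $2^{-k}$-eigenspace of the $\kappa$-dyadic covariance), I get
\[ \BE_v[Q_1] = \frac{3}{4}\lambda^2 \sum_{k \in [\kappa]} 2^{-k}\BE_w[c_k(w)\|\Pi_k w\|_2^2] \leq \frac{3}{4}\lambda^2 \sum_{k\in [\kappa]} 2^{-k}\BE_w[c_k(w)\|w\|_2^2].\]

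The main work is for $Q_2$, where the factor $e^{\lambda |v^\top \Pi_k y|}$ together with a possibly large $\|y\|_2$ means we cannot simply commute the expectation as in the Koml\'os setting. I would fix $y$ in the support of $\sfp_y$ and $k \in [\kappa]$ and study the scalar random variable $X := v^\top \Pi_k y$. By the mean-zero assumption $\BE[X] = 0$, and $\mathrm{Var}(X) = y^\top \Pi_k \cov \Pi_k y = 2^{-k}\|\Pi_k y\|_2^2 =: \sigma^2$. The assumption that $\sfp$ has sub-exponential tails means exactly that $X$, being a one-dimensional projection, is sub-exponential with parameter $\sigma$. The crucial quantitative step is to verify $\lambda \sigma = O(1)$: since $\|y\|_2 \leq r(\ell,k)^2 \cdot \eps(\ell,k) = r(\ell,k) \leq 1/\smin(k) = O(\sqrt{\dim(H_k)}/\lambda)$, and $\dim(H_k) \leq 2^k$, we get $\sigma = 2^{-k/2}\|\Pi_k y\|_2 \leq 2^{-k/2}\|y\|_2 = O(1/\lambda)$. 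This is precisely what motivated the scaling factor $r(\ell,k)^2$ in the definition of $\sfp_y$.

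With $\lambda \sigma = O(1)$ in hand, \pref{prop:logconcave} gives $\BE[e^{2\lambda|X|}] = O(1)$ and the fourth-moment bound $\BE[X^4] = O(\sigma^4)$. Cauchy--Schwarz then yields
\[ \BE_v\bigl[e^{\lambda|v^\top \Pi_k y|}\,(v^\top \Pi_k y)^2\bigr] \leq \sqrt{\BE[e^{2\lambda|X|}]}\cdot \sqrt{\BE[X^4]} = O(\sigma^2) = O\bigl(2^{-k}\|\Pi_k y\|_2^2\bigr).\]
Plugging this inside $\BE_y[\,\cdot\,]$ and summing over $k$ gives $\BE_v[Q_2] \leq O(\lambda^2)\sum_k 2^{-k}\BE_y[c_k(y)\|y\|_2^2]$. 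Combining with the $Q_1$ bound and recalling the mixture $\sfp_x = \tfrac{3}{4}\sfp_w + \tfrac{1}{4}\sfp_y$ yields the claim with an absolute constant $C$. The main obstacle I anticipate is exactly the scale verification above: one must confirm that $r(\ell,k)^2$, which was chosen large enough so that the $2^{-2/\eps^2}$ probability weights in $\sfp_y$ translate into useful discrepancy control via \lref{lemma:chaining}, is nevertheless small enough that the sub-exponential moment generating function of $v^\top \Pi_k y$ remains bounded. This tight balance between the chaining scale, the dyadic decomposition of $\cov$, and the choice of $\lambda$ is the conceptual crux of the proof.
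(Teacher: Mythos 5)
Your proposal is correct and follows essentially the same route as the paper: the $Q_1$ term is handled by commuting the expectation over $v$ through the $\kappa$-dyadic identity $\Pi_k\cov\Pi_k = 2^{-k}\Pi_k$, and the $Q_2$ term is handled by observing that $v^\top\Pi_k y$ is a scalar mean-zero sub-exponential variable of standard deviation $\sigma = 2^{-k/2}\|\Pi_k y\|_2$, checking that $\lambda\sigma$ stays below the threshold $1/4$ forced by the choice $\smin(k)$, and then applying Cauchy--Schwarz with \pref{prop:logconcave}. The only cosmetic difference is that you inline the sub-exponential moment estimate where the paper packages it as a separate Lemma 6.6 and applies it to the pre-scaled vector $\lambda\Pi_k y$; the underlying computation is identical.
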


On the other hand, we will show that the linear (in $\lambda$) terms $L_1 + L_2$ is also large in expectation.
\begin{claim}\label{claim:linear-ban_1} 
  $\BE_{v}[|L_1 + L_2|] \ge  \lambda B^{-1} \sum_{k\in [\kappa]} 2^{-k}~~\BE_{x} [c_{k}(x) \|x\|_2^2] - O(1)$ for some $B \le 4{\kappa} \log(\Phi^2 n \kappa)$.
\end{claim}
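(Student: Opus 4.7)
The plan is to mirror the proof of \clmref{claim:linear} in the Koml\'os setting, adapted to the fact that the noisy distribution $\sfp_x = \frac{3}{4}\sfp_w + \frac{1}{4}\sfp_y$ now contains the $\sfp_y$ component supported on test vectors of Euclidean norm as large as the scaling factor $r(\ell,k)$. First, I would apply the conditioning trick with auxiliary function $f(v) = d^\top \Pi v \cdot \ind_{\CG_t}(v)$, where $\CG_t$ is the good set from \lref{lemma:gen-tail-ban} with $\delta$ chosen so that $B := \kappa \log(4\Phi/\delta) \leq 4\kappa \log(\Phi^2 n \kappa)$. By that lemma $\|f\|_\infty \leq \lambda^{-1} B$, so $\BE_v[|L_1+L_2|] \geq (\lambda/B)\cdot\BE_v[f(v)(L_1 + L_2)]$. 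Interchanging the expectations and using $\BE_v[v v^\top \ind_{\CG_t}(v)] = \cov - \cov_\err$ with $\|\cov_\err\|_\op \leq \delta$ together with $\Pi \cov \Pi_k = 2^{-k}\Pi_k$, the dominant contribution reduces to $\sum_k 2^{-k} \BE_x[\sinh(\lambda d^\top \Pi_k x) \cdot \lambda d^\top \Pi_k x]$, which by the elementary inequality $\sinh(a)\cdot a \geq \cosh(a) - 1$ is at least $\sum_k 2^{-k}(\BE_x[c_k(x)] - 1)$.

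The hardest step will be producing the $\|x\|_2^2$ factor on the right-hand side. For $w \sim \sfp_w$ this is immediate since $\|w\|_2 \leq 1$ gives $c_k(w) \geq c_k(w)\|w\|_2^2$. The subtlety is the $\sfp_y$ component, for which $\|y\|_2$ can be as large as $r(\ell,k)$. Here I would leverage the geometric decay of the probability weights $2^{-2r^2}$ assigned to length-$r^2$ scaled edges: since $r^2 \cdot 2^{-2r^2}$ is uniformly bounded across scales, the $\|y\|_2^2$ factor can be absorbed into the mixture structure with only an $O(1)$ loss, paralleling what \clmref{claim:quadratic-ban_1} already achieves on the quadratic side.

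Finally, I would control the error term via $|d^\top \Pi \cov_\err \Pi_k x| \leq \delta \|d\|_2 \|x\|_2 \leq \delta T r_{\max}$, where $r_{\max} \leq \sqrt{n}/\lambda$ bounds the Euclidean norm of any test vector in the support of $\sfp_y$. Choosing $\delta^{-1} = \lambda T r_{\max} \Phi$ then ensures the aggregate error contributes only $O(1)$, while keeping $B \leq 4\kappa \log(\Phi^2 n \kappa)$ as required. Combining the main-term lower bound with the error control yields the claimed inequality.
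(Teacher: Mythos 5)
Your outline correctly mirrors the conditioning trick ($f(v) = d^\top\Pi v\,\ind_{\CG}(v)$), the decomposition into $\cov$ and $\cov_\err$, and the error-term control. However, there is a genuine gap at the step where you produce the $\|x\|_2^2$ factor on the $\sfp_y$ component.

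You bound $\sinh(a)\cdot a \geq \cosh(a) - 1$, which drops the $|a| = |\lambda d^\top\Pi_k x|$ factor and leaves you needing to lower-bound $\BE_y[c_k(y)]$ by $\BE_y[c_k(y)\|y\|_2^2] - O(1)$. This does not follow from the weight decay $2^{-2r^2}$ alone: you claim ``$r^2\cdot 2^{-2r^2}$ is uniformly bounded, so the $\|y\|_2^2$ factor can be absorbed,'' but what actually needs to be small is $\BE_y[c_k(y)(\|y\|_2^2 - 1)]$, and the factor $c_k(y) = \cosh(\lambda d^\top\Pi_k y)$ is \emph{uncontrolled} --- it can be exponentially large (of order $\Phi$ times the reciprocal weight), so the extra $\|y\|_2^2$ multiplier can blow up the expectation by a factor that is polynomial in $n$, not $O(1)$. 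The analogy you draw to the quadratic claim is misleading, because there the $\|y\|_2^2$ comes from the sub-exponential tail of $\sfp$ (Lemma~\ref{lemma:subexp_1}), not from the chaining weights of $\sfp_y$.

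The paper avoids this by \emph{keeping} the $|\lambda d^\top\Pi_k x|$ factor, using $\sinh(a)a \geq \cosh(a)|a| - 2$ instead. For the $\sfp_y$ component one then compares $|\lambda d^\top\Pi_k y|$ against $\|y\|_2^2$: on the event $\CE^c$ where $|\lambda d^\top\Pi_k y| \geq \|y\|_2^2$, the retained factor already dominates $\|y\|_2^2$; on the complementary event $\CE$, the crucial point is that $c_k(y) \leq e^{\|y\|_2^2}$ is now bounded in terms of $\|y\|_2$, and the $\sfp_y$ weight $2^{-2r^2}$ decays fast enough that the tail sum $\int_{1/4}^\infty 2^{-2r^2}e^{r^2}r^2\,dr$ is $O(1)$. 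This case split on $\CE$ is the essential idea your proposal is missing; without keeping the $|\lambda d^\top\Pi_k y|$ factor there is no handle on $c_k(y)$ in the regime where $\|y\|_2$ is large.

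(A secondary, more minor point: to bound the error term the paper uses $\|\Pi d\|_2 \leq \lambda^{-1}\log(4n\Phi)\sqrt{n}$ from Lemma~\ref{lemma:chaining} rather than the crude $\|d\|_2 \leq T$, which keeps the stated $B \leq 4\kappa\log(\Phi^2 n\kappa)$ free of $T$. Your $\delta$ choice introduces a $\log T$ into $B$; this is fixable, but the paper's route is cleaner.)
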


By our assumption of $\Phi \le T^5$, so it follows that $2\lambda \le B^{-1}$. Therefore, combining the above two claims,
$$\BE_{v}[\Delta \Phi] ~~\le~~ (2\lambda^2-\lambda B^{-1}) \left(\sum_{k\in [\kappa]} 2^{-k} ~ \BE_{x}\left[c_{k}(x)\|x\|_2^2\right]\right) + C ~~\le~~ C ,$$
which finishes the proof of  \lref{lemma:gen-drift-ban} assuming the claims.
\end{proof}
\vspace*{8pt}

To prove the missing claims, we need the following property that follows from the sub-exponential tail of the input distribution $\sfp$.

\begin{lemma}\label{lemma:subexp_1}
There exists a constant $C>0$, such that for every integer $k\in [\kappa]$, and any $y \in \im(\Pi_k)$ satisfying $\|y\|_2 \le \frac14\sqrt{\min\{2^k, n\}}$, the following holds 
\[\BE_{v \sim \sfp}\left[e^{\lambda |v^\top  y|} \cdot |v^\top y|^2\right] \le C\cdot 2^{-k}\cdot\|y\|_2^2 \text { for all } \lambda \le 1.\] 
\end{lemma}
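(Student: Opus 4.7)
\textbf{Proof proposal for Lemma \ref{lemma:subexp_1}.}
The plan is to exploit two things: (i) the covariance of $\sfp$ restricted to the eigenspace $H_k = \im(\Pi_k)$ is $2^{-k}\Pi_k$ (by the $\kappa$-dyadic assumption), and (ii) the one-dimensional projection $v^\top y$ is a mean-zero sub-exponential random variable, so Proposition \ref{prop:logconcave} bounds its exponential moments and polynomial moments in terms of its standard deviation. The hypothesis $\|y\|_2 \le \tfrac14 \sqrt{\min\{2^k,n\}}$ is exactly calibrated so that this standard deviation is at most a small absolute constant, making the Taylor-type estimate clean.

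First I would compute the variance of $v^\top y$. Since $y \in H_k$, we have $\Pi_k y = y$, and so
\[
\sigma^2 \;:=\; \BE_{v\sim \sfp}[(v^\top y)^2] \;=\; y^\top \cov\, y \;=\; 2^{-k}\, y^\top \Pi_k y \;=\; 2^{-k}\|y\|_2^2.
\]
Therefore the hypothesis on $\|y\|_2$ gives
\[
\sigma \;=\; 2^{-k/2}\|y\|_2 \;\le\; \tfrac14\cdot 2^{-k/2}\sqrt{\min\{2^k,n\}} \;\le\; \tfrac14,
\]
in both cases $2^k \le n$ and $2^k > n$.

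Next I would apply Cauchy--Schwarz to decouple the exponential factor from the quadratic factor:
\[
\BE_{v}\!\left[e^{\lambda |v^\top y|}\cdot |v^\top y|^2\right] \;\le\; \sqrt{\BE_v\!\left[e^{2\lambda|v^\top y|}\right]}\cdot \sqrt{\BE_v[(v^\top y)^4]}.
\]
For the first factor, since $\sigma \le 1/4$ and $\lambda \le 1$ we have $2\lambda \le 2 \le 1/(2\sigma)$, so Proposition \ref{prop:logconcave} applied to the mean-zero sub-exponential scalar $v^\top y$ gives $\BE_v[e^{2\lambda|v^\top y|}] \le C_1$ for an absolute constant. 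For the second factor, the moment bound in Proposition \ref{prop:logconcave} gives $\BE_v[(v^\top y)^4]^{1/4} \le C_2\cdot 4 \sigma$, so $\BE_v[(v^\top y)^4] \le (4C_2)^4 \sigma^4$. Multiplying the two gives
\[
\BE_v\!\left[e^{\lambda |v^\top y|}\cdot |v^\top y|^2\right] \;\le\; \sqrt{C_1}\cdot (4C_2)^2\,\sigma^2 \;=\; O(\sigma^2) \;=\; O\!\left(2^{-k}\|y\|_2^2\right),
\]
which is the required bound.

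The only subtlety is verifying that the scalar $v^\top y$ really is a sub-exponential random variable with the right parameter: this uses the definition given in the introduction (every unit direction $\theta$ projection $\ip{v}{\theta}$ is sub-exponential with scale $\sigma_\theta$), applied to $\theta = y/\|y\|_2$ and rescaled by $\|y\|_2$, which yields sub-exponential tails on scale $\sigma_\theta\|y\|_2 = \sigma$. I do not anticipate a serious obstacle: the computation is a direct moment estimate, and the hypothesis on $\|y\|_2$ has been chosen precisely so that $\sigma \le 1/4$, which is exactly the condition needed to invoke both parts of Proposition \ref{prop:logconcave} uniformly in $\lambda \le 1$.
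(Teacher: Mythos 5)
Your proposal is correct and follows essentially the same route as the paper's proof: compute (or bound) $\sigma^2 = \BE_v[(v^\top y)^2] = 2^{-k}\|y\|_2^2 \le 1/16$ using the dyadic covariance structure, apply Cauchy--Schwarz to decouple the exponential and quadratic factors, and invoke Proposition~\ref{prop:logconcave} (using $\sigma \le 1/4$ and $\lambda \le 1$) to bound the exponential moment by a constant and the fourth moment by $O(\sigma^4)$. Your writeup is a bit more explicit about why the scalar $v^\top y$ inherits a sub-exponential tail at scale $\sigma$ and why $2\lambda \le 1/(2\sigma)$ holds, but the substance is identical.
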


We remark that this is the only step in the proof which requires the sub-exponential tail, as otherwise the exponential term above may be quite large. It may however be possible to exploit some more structure from the test vectors $y$ and the discrepancy vector to prove the above lemma without any sub-exponential tail requirements from the input distribution. 


\begin{proof}
    As $y \in \im(\Pi_k)$, we have that $v^\top y = v^\top \Pi_k y$ which is a scalar sub-exponential random variable with zero mean and variance at most 
    \[\sigma_y^2 ~~:=~~ \BE_v[|v^\top \Pi_k y|^2] ~~\le~~ \|\Pi_k \cov \Pi_k\|_{\op}\|y\|_2^2 ~~\le~~ 2^{-k}\|y\|_2^2 ~~\le~~ 1/16.\] 
    
   Using Cauchy-Schwarz and  \pref{prop:logconcave}, we get that
    \begin{align*}
        \BE_v\left[e^{\lambda |v^\top  y|} \cdot |v^\top y|^2\right] ~~\le~~ \sqrt{\BE_v\left[e^{2\lambda |v^\top  y|}\right]} \cdot \sqrt{ \BE_v\left[|v^\top y|^4\right]} ~~\le~~ C \cdot \BE_v\left[|v^\top \Pi_k y|^2\right] ~~\le~~ C \cdot 2^{-k}~\|y\|_2^2,
    \end{align*}
    where the exponential term is bounded since $\sigma_y \le 1/4$.
\end{proof}
\vspace*{8pt}

\begin{proof}[Proof of  \clmref{claim:quadratic-ban_1}]
Recall that $\BE_{v}[vv^\top] = \cov$ which satisfies $\Pi_k \cov \Pi_k = 2^{-k} \Pi_k $. Therefore, using linearity of expectation,
\begin{align}\label{eqn:int1_1}
    \ \BE_{v}[Q_1] ~=~ \frac34 \sum_{k\in [\kappa]} \lambda^2 ~ \BE_{w}[c_{k}(w) \cdot w^\top\Pi_k \cov \Pi_k w] ~&=~ \lambda^2 \cdot \frac34 \sum_{k\in [\kappa]}  2^{-k} ~ \BE_{w}[c_{k}(w) \cdot w^\top \Pi_k w] \notag\\
    \ &\le~ 2\lambda^2 \cdot \frac34  \sum_{k\in [\kappa]} 2^{-k} ~ \BE_{w}[c_{k}(w)\|w\|_2^2].
\end{align}

We next use \lref{lemma:subexp_1} to  bound the second quadratic term 
\[
\BE_v[Q_2] ~=~ \frac14\sum_{k\in [\kappa]} \lambda^2 ~\BE_{y}\left[c_{k}(y) \cdot e^{\lambda |v^\top \Pi_ky|} y^\top\Pi_kvv^\top\Pi_ky\right] .\] 
For any $k\in [\kappa]$ and any $y \in \im(\Pi_k)$ that is in the support of $\sfp_y$, we have that 
\[\lambda \|\Pi_k y\|_2 \le \lambda\cdot \|y\|_2 ~\le~  \lambda / \smin(k) ~\le~ \lambda \cdot \frac{1}{10\lambda} \cdot \sqrt{\dim(H_k) }  ~\le~ \frac14\sqrt{\min\{n,2^k\}}.\]
On the other hand, if $y \in \im(\Pi_{k'})$ for $k'\neq k$, then the above quantity is zero.
\lref{lemma:subexp_1} then implies that for any $y$ in the support of $\sfp_y$,
\[ \BE_{v}[e^{|\lambda v^\top \Pi_k y|} \cdot |\lambda v^\top\Pi_k y|^2] ~\le~ C_1 \cdot 2^{-k} \|\lambda \Pi_k y\|_2^2 ~\le~ C_1 \lambda^2 \cdot 2^{-k} \|y\|_2^2,\]
where $C_1$ is some absolute constant. 
Therefore, we obtain the following bound
\begin{align}\label{eqn:int2_1}
    \ \BE_{v}[Q_2] ~\le~ C_1 \cdot \lambda^2 \cdot\sum_{k \in [\kappa]}  2^{-k}~\BE_{y}[c_{k}(y) \|y\|_2^2] .
\end{align}

Summing up \eqref{eqn:int1_1} and \eqref{eqn:int2_1} finishes the proof of the claim. 
\end{proof} 
\vspace*{8pt}

\begin{proof}[Proof of  \clmref{claim:linear-ban_1}]
Let $L = L_1 + L_2$. To lower bound the linear term, we proceed similarly as in the proof of \clmref{claim:linear} and use the fact that $|L(v)| \ge {\|f\|^{-1}_{\infty}} \cdot f(v) \cdot L(v)$ for any real-valued non-zero function $f$. We will choose the function $f(v) = d^\top\Pi v \cdot \ind_{\CG}(v)$ where $\CG$ will be the event that $|d^\top\Pi v|$ is small which we know is true because of \lref{lemma:gen-tail-ban}.


In particular, set $\delta^{-1} = \lambda^{-2} n \cdot\Phi \cdot \log(4n \Phi)$ and let $\CG$ denote the set of vectors $v$ in the support of $\sfp$ such that $\lambda|d^\top \Pi v| \le {\kappa} \cdot \log (4 \Phi/\delta) := B$. 
Then, $f(v) = d^\top\Pi v \cdot \ind_{\CG}(v)$ satisfies $\|f\|_\infty \le \lambda^{-1} B$, and we can lower bound,
\begin{align}\label{eqn:lterm_1}
    \ \BE_{v}[|L|] &\ge \frac{\lambda}{\lambda^{-1} B} \cdot \frac34 \sum_{k \in [\kappa]} \BE_{vw} [s_{k}(w) \cdot d^\top \Pi v \cdot v^\top \Pi_k w\cdot \ind_{\CG}(v)] \notag\\
    \ & ~~~~~~~ + \frac{\lambda}{\lambda^{-1}B} \cdot \frac14\sum_{k \in [\kappa]} \BE_{vy} [s_{k}(y) \cdot d^\top \Pi v \cdot v^\top \Pi_k y\cdot \ind_{\CG}(v)] \notag \\
    \ &= \frac{\lambda^2}{B} \cdot \frac34 \sum_{k \in [\kappa]} \BE_{w} [s_{k}(w) \cdot d^\top \Pi \cov \Pi_k w] ~-~ \frac{\lambda^2}{B} \cdot \frac34 \sum_{k \in [\kappa]} \BE_{w} [s_{k}(w) \cdot d^\top \Pi \cov_\err \Pi_k w] \notag \\
    \ &\qquad + \frac{\lambda^2}{B}  \cdot \frac14 \sum_{k \in [\kappa]} \BE_{y} [s_{k}(y) \cdot d^\top \Pi \cov \Pi_k y] ~-~\frac{\lambda^2}{B}\cdot \frac14 \sum_{k \in [\kappa]} \BE_{y} [s_{k}(y) \cdot d^\top \Pi \cov_\err \Pi_k y],
\end{align}
where $\cov_\err = \BE_{v}[vv^\top (1-\ind_{\CG}(v))]$ satisfies $\|\cov_\err\|_{\op} \le \BP_{v \sim \sfp}(v \notin \CG) \le \delta$ using  \lref{lemma:tail}.

To bound the terms involving $\cov$ in \eqref{eqn:lterm_1}, we recall that $s_k(x) = \sinh(\lambda d^\top \Pi_k x)$ and $c_k(x) = \cosh(\lambda d^\top \Pi_k x)$. Using $\Pi \cov \Pi_k = 2^{-k} \Pi_k$ and the fact that $\sinh(a)a \ge \cosh(a)|a| - 2$ for any $a \in \BR$, we have  
\begin{align*}
    \lambda ~\BE_{w} [s_{k}(w) \cdot d^\top \Pi \cov \Pi_k w] ~=~ 2^{-k} ~\BE_{w} [s_{k}(w) \cdot \lambda d^\top \Pi_k w] ~\ge~ 2^{-k}~ \left(\BE_{w} [c_{k}(w)|\lambda d^\top\Pi_k w|] - 2\right) ,
\end{align*}
and similarly for $y$.

The terms with $\cov_\err$ can be upper bounded using $\| \cov_\err \|_\op \leq \delta$. 
In particular, we have  
\[|d^\top \Pi \cov_\err \Pi_k x | \le \|\Pi d\|_2 \|\cov_\err\|_{\op}\|x\|_2 \le \delta \|\Pi d\|_2 \|x\|_2.\]

Since $\Pi = \sum_{k\in[\kappa]} \Pi_k$ and $(\Pi_k)_{k \in [\kappa]}$ are orthogonal projectors, \lref{lemma:chaining} implies that  $\|\Pi d\|_2 \le \lambda^{-1}  \log(4n\Phi) \sqrt{ n}$. 
Moreover, we have $\|w\|_2 \leq 1$ and $\|y\|_2 \le \min_k \{1/\smin(k)\} \le \frac1{10\lambda}\cdot \sqrt{n}$. Then, by our choice of $\delta^{-1} = \lambda^{-2} n\Phi \cdot \log(4n \Phi)$, we have 
\[
\lambda ~|d^\top \Pi \cov_\err \Pi_k x| ~~\le~~ \delta \lambda^{-1} n \log(4n \Phi) ~~=~~ \Phi^{-1}.
\]

Plugging the above bounds in \eqref{eqn:lterm_1}, we obtain 
\begin{align}\label{eq:int1_1}
    \BE_{v}[|L|] &~~\ge~~  \frac{\lambda}{B} \cdot \frac34 \sum_{k \in [\kappa]} 2^{-k} ~\BE_{w} [c_{k}(w)|\lambda  d^\top\Pi_k w|] +  \frac{\lambda}{B} \cdot \frac14 \sum_{k \in [\kappa]}  2^{-k} ~ \BE_{y} [c_{k}(y)|\lambda  d^\top\Pi_k y|] - 4
\end{align}
where we used the upper bound $\sum_{k \in [\kappa]} \BE_x[|s_k(x)|] \le \Phi$ to control the error term involving $\cov_{\err}$.

To finish the proof, we bound the two terms in \eqref{eq:int1_1} separately. We first use the inequality that $\cosh(a)a \ge \cosh(a) - 2$ for all $a \in \BR$ and the fact that $\|w\|_2 \le 1$ for every $w$ in the support of $\sfp_w$ to get that 
\begin{equation}\label{eq:int2_1}
    \BE_{w} [c_{k}(w)|\lambda  d^\top\Pi_k w|] ~~\ge~~ \BE_{w} [c_{k}(w)] - 2 ~~\ge~~ \BE_{w} [c_{k}(w)\|w\|_2^2] - 2.
\end{equation}

To bound the second term in \eqref{eq:int1_1}, we recall that the entire probability mass assigned to length $r$ vectors (i.e. $\epsilon(\ell,k) = 1/r$) in the support of $\sfp_y$ is at most $2^{-2r^2}$, where $r \ge 1/4$. Let $\CE$ be the event that $|\lambda  d^\top\Pi_k y| \le \|y\|_2^2$. 
Note that $c_k(y) \|y\|_2^2 \le 2^{r^2}r^2$ if $\|y\|_2=r$. This implies that 
\begin{align}\label{eq:int3_1}
    \BE_{y} [c_{k}(y)|\lambda  d^\top\Pi_k y|] &~\ge~ \BE_{y} [c_{k}(y)\|y\|_2^2] - \BE_{y} [c_{k}(y) \|y\|_2^2 \cdot \ind_\CE(y)] \notag \\
    \ &~\ge~ \BE_{y} [c_{k}(y)\|y\|_2^2] -  \int_{1/4}^{\infty} 2^{-2r^2} 2^{r^2}r^2 ~\ge~ \BE_{y} [c_{k}(y)\|y\|_2^2] - 1.
\end{align}

Since $\sfp_x = \frac34 \sfp_w + \frac14 \sfp_y$, plugging \eqref{eq:int2_1} and \eqref{eq:int3_1} into \eqref{eq:int1_1} give that 
 $\BE_{v}[|L|] \ge  \lambda B^{-1} \sum_{k \in [\kappa]} 2^{-k}~~\BE_{x} [c_{k}(x) \|x\|_2^2] - C$, for some constant $C>0$,
which completes the proof of the claim. 
\end{proof}


\renewcommand{\CT}{\mathcal{T}}

\section{Generalization to Weighted Multi-Color Discrepancy}
\label{sec:multicolor}

In this section, we prove \thmref{thm:multicolor-intro} which we restate below for convenience. 

\multicolor*

\thmref{thm:multicolor-intro} follows from a black-box way of converting an algorithm for the signed discrepancy setting to the multi-color setting. 

In particular, for a parameter $0\le \lambda \le 1$, let $\Phi: \BR^n \to \BR_+$ be a potential function satisfying 
\begin{equation}\label{eqn:potential}
    \begin{aligned}
    \ & \Phi(d+\alpha v) \le \Phi(d) + \lambda\alpha L_d(v) + \lambda^2\alpha^2 Q_d(v) ~~~~~~~\text{ for every } d,v \in \BR^n \text{ and }  |\alpha|\le 1, \text{ and}, \\
    \ & -\lambda \cdot\BE_{v \sim \sfp}[|L_d(v)|] +  \lambda^2\cdot \BE_{v \sim \sfp}[Q_d(v)] = O(1) ~~~\text{ for any } d \text{ such that } \Phi(d) \le 3T^5,
\end{aligned}
\end{equation}
where $L_d: \BR^n \to \BR$ and $Q_d: \BR^n \to \BR_+$ are arbitrary functions of $v$ that depend on $d$. 

One can verify that the first condition is always satisfied for the potential functions used for proving \thmref{thm:gen-disc} and \thmref{thm:gen-disc-ban}, while the second condition holds for $\lambda = O(1/\log^{2}(nT))$ because of \lref{lemma:drift-komlos} and \lref{lemma:gen-drift-ban}. 

Moreover, for parameters $n$ and $T$, let $B_{\arbnorm{\cdot}}$ be such that if the potential  $\Phi(d) = \Phi$, then the corresponding norm $\arbnorm{d} \le B_{\arbnorm{\cdot}} \log(nT\Phi)$. Part (b) of \lref{lemma:tail} implies that for any test set $\SE$ of $\poly(nT)$ vectors contained in the unit Euclidean ball, if the norm $\|\cdot\|_* = \max_{z \in \SE} |\ip{\cdot}{z}|$, then $B_{\arbnorm{\cdot}} =  O(\log^3(nT))$. Similarly, if $\arbnorm{\cdot}$ is given by a symmetric convex body with Gaussian measure at least $1/2$, then \lref{lemma:chaining} implies that $B_{\arbnorm{\cdot}}= O(\log^4(nT))$. 

We will use the above properties of the potential $\Phi$ to give a greedy algorithm for the multi-color discrepancy setting.



\subsection{Weighted Binary Tree Embedding}

We first show how to embed the weighted multi-color discrepancy problem into a binary tree $\mathcal{T}$ of height $O(\log(R\eta)$. 
For each color $c$, we create $\lfloor w_c \rfloor $ nodes with weight $w_c/\lfloor w_c \rfloor \in [1,2]$ each.
The total number of nodes is thus $M_{\ell} = \sum_{c \in [R]} \lfloor w_c \rfloor = O(R \eta)$. 
In the following, we place these nodes as the leaves of an (incomplete) binary tree.

Take the height $h = O(\log (R \eta))$ to be  the smallest exponent of $2$ such that  $2^h \geq M_{\ell}$.
We first remove $2^h - M_{\ell} < 2^{h-1}$ leaves from the complete binary tree of height $h$ such that none of the removed leaves are siblings. 
Denote the set of remaining leaves as $\SL(\mathcal{T})$.
Then from left to right, assign the leaves in $\SL(\mathcal{T})$ to the $R$ colors so that leaves corresponding to the same color are consecutive. 
For each leaf node $\ell \in \SL(\mathcal{T})$ that is assigned the color $c \in [R]$, we assign it the weight $w_\ell = w_c/\lfloor w_c \rfloor$. 

We index the internal nodes of the tree as follows: for integers $0 \le j \le h-1$ and $0\le k \le 2^{j}$, we use $(j,k)$ to denote the $2^k$-th node at depth $j$. Note that the left and right children of a node $(j,k)$ are the nodes $(j+1,2k)$ and $(j+1,2k+1)$. The weight $w_{j,k}$ of an internal node $(j,k)$ is defined to be sum of weights of all the leaves in the sub-tree rooted at $(j,k)$. This way of embedding satisfies certain desirable properties which we give in the following lemma.

\begin{lemma}[Balanced tree embedding] \label{lem:balanced_tree}
For the weighted (incomplete) binary tree $\mathcal{T}$ defined above, for any two nodes $(j,k)$ and $(j,k')$ in the same level,
\begin{align*}
    1/4 \leq w_{j,k}/ w_{j,k'} \leq 4.
\end{align*} 
\end{lemma}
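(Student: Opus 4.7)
The plan is to reduce the weight ratio bound to a purely combinatorial statement about leaf counts in subtrees, using the fact that each leaf has weight in a narrow range.

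First, I would observe that by construction every leaf $\ell \in \SL(\CT)$ is assigned the weight $w_\ell = w_c/\lfloor w_c \rfloor$ of its corresponding color $c \in [R]$, and since $w_c \in [1,\eta]$ we have $\lfloor w_c \rfloor \ge 1$, hence $w_\ell \in [1,2]$. Let $L_{j,k}$ denote the number of leaves in the subtree of $\CT$ rooted at $(j,k)$. Summing leaf weights gives
\[
L_{j,k} \;\le\; w_{j,k} \;\le\; 2\,L_{j,k}.
\]
Thus it suffices to prove $L_{j,k}/L_{j,k'} \in [1/2, 2]$ for any two nodes at the same level $j$, since this yields $w_{j,k}/w_{j,k'} \in [1/4,4]$.

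Second, I would invoke the construction: $\CT$ is obtained from the complete binary tree of height $h$ (with $2^h$ leaves) by deleting $2^h - M_\ell < 2^{h-1}$ leaves, no two of which are siblings (this last inequality holds because $h$ was chosen as the smallest exponent with $2^h \ge M_\ell$, forcing $M_\ell > 2^{h-1}$). In the complete tree, the subtree rooted at any level-$j$ node contains exactly $2^{h-j}$ leaves, grouped into $2^{h-j-1}$ sibling pairs. Since no two removed leaves are siblings, at most one leaf per pair is removed in this subtree, so the subtree loses at most $2^{h-j-1}$ leaves. Hence for every $j < h$,
\[
2^{h-j-1} \;\le\; L_{j,k} \;\le\; 2^{h-j},
\]
which immediately gives the ratio bound $L_{j,k}/L_{j,k'} \in [1/2, 2]$. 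The edge case $j = h$ (both nodes are leaves) is handled directly by $w_\ell \in [1,2]$, giving $w_{j,k}/w_{j,k'} \in [1/2,2] \subseteq [1/4,4]$.

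I do not anticipate any real obstacle here; the argument is essentially a pairing/counting observation combined with the $[1,2]$ range of leaf weights. The only subtlety worth double-checking is that the deletion pattern described in the construction can actually be realized, i.e., that one can indeed choose $2^h - M_\ell$ leaves no two of which are siblings, but this is immediate since $2^h - M_\ell < 2^{h-1}$ is strictly smaller than the number of sibling pairs at the leaf level.
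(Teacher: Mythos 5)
Your proof is correct and follows essentially the same approach as the paper: observe each leaf weight lies in $[1,2]$, then use the ``no two removed leaves are siblings'' property to conclude that a level-$j$ subtree retains between $2^{h-j-1}$ and $2^{h-j}$ leaves, forcing $w_{j,k}$ into the range $[2^{h-j-1},\, 2^{h-j+1}]$. The paper phrases this more compactly as $w_{j,k}=a_{j,k}2^{h-j}$ with $a_{j,k}\in[1/2,2]$, while you route through the intermediate leaf-count ratio; the content is identical.
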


\begin{proof}
Observe that each leaf node $\ell \in \SL(\mathcal{T})$ has weight $w_\ell \in [1,2]$. Moreover, for each internal node $(h-1,k)$ in the level just above the leaves, at least one of its children is not removed in the construction of $\mathcal{T}$. Therefore, it follows that $w_{j,k} = a_{j,k} 2^{h-j}$ for some $a_{j,k} \in [1/2,2]$ and similarly for $(j,k')$. The lemma now immediately follows from these observations. 
\end{proof}

\paragraph{Induced random walk on the weighted tree.} 
Randomly choosing a leaf with probability proportional to its weight induces a natural random walk on the tree $\mathcal{T}$: the walk starts from the root and moves down the tree until it reaches one of the leaves. Conditioned on the event that the walk is at some node $(j,k)$ in the $j$-th level, it goes to left child $(j+1, 2k)$ with probability $\qlt = w_{j+1,2k}/w_{j,k}$ and to the right child $(j+1,2k+1)$ with probability $\qrt = w_{j+1,2k+1}/w_{j,k}$. Note that by Lemma~\ref{lem:balanced_tree} above, we have that both $\qlt, \qrt \in [1 / 20 , 19/20]$ for each internal node $(j,k)$ in the tree. Note that $w_{j,k}/w_{0,0}$ denotes the probability that the random walk passes through the vertex $j,k$.


\subsection{Algorithm and Analysis}

Recall that each leaf $\ell \in \SL(\CT)$ of the tree $\CT$ is associated with a color. Our online algorithm will assign each arriving vector $v_t$ to one of the leaves $\ell \in \SL(\CT)$ and its color will then be the color of the corresponding leaf.

For a leaf $\ell \in \SL(\CT)$, let $d_\ell(t)$ denote the sum of all the input vectors that are associated with the leaf $\ell$ at time $t$. For an internal node $(j,k)$, we define $d_{j,k}(t)$ to be the sum  $\sum_{\ell \in \SL(\CT_{j,k})} d_\ell(t)$ where $\SL(
\CT_{j,k})$ is the set of all the leaves in the sub-tree rooted at $(j,k)$. Also, let $\dlt(t) = d_{j+1,2k}(t)$ and $\drt(t) = d_{j+1,2k+1}(t)$ be the vectors associated with the left and right child of the node $(j,k)$. 

Finally let, 
$$d^-_{j,k}(t) =  \frac{\dlt(t)/\qlt - \drt(t)/\qrt}{1/\qlt + 1/\qrt} = \qrt \dlt(t) - \qlt\drt,$$ 
denote the weighted difference between the two children vectors for the $(j,k)$-th node of the tree. 

\paragraph{Algorithm.} For $\beta = 1/(400h)$, consider the following potential function
\begin{align*}
\Psi_t = \sum_{j,k \in \CT} \Phi(\beta~ d^-_{j,k}(t)),
\end{align*}
where the sum is over all the internal nodes $(j,k)$ of $\CT$.

The algorithm assigns the incoming vector $v_t$ to the leaf $\ell \in \SL(\CT)$, so that the increase in the potential $\Psi_t - \Psi_{t-1}$ is minimized. The color assigned to the vector $v_t$ is then the color of the corresponding leaf $\ell$.

We show that if the potential $\Phi$ satisfies \eqref{eqn:potential}, then the drift for the potential $\Psi$ can be bounded.
\begin{lemma} \label{clm:driftmulti}
If at any time $t$, if $\Psi_{t-1} \le T^5$, then the following holds
$$ \BE_{v_t \sim \sfp}[\Delta\Psi_t] := \BE_{v_t \sim \sfp}[\Psi_t - \Psi_{t-1}] = O(1).$$ 
\end{lemma}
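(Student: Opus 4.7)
The plan is to bound the greedy drift by the expected drift under a carefully designed randomized leaf-selection strategy. Since the greedy algorithm picks the leaf $\ell$ minimizing $\Delta\Psi_t(\ell)$, for any distribution $q(\cdot \mid v_t)$ over leaves (possibly depending on $v_t$) one has
\[
\BE_{v_t}[\Delta\Psi_t^{\mathrm{greedy}}] \;\le\; \BE_{v_t}\,\BE_{\ell \sim q(\cdot\mid v_t)}\bigl[\Delta\Psi_t(\ell)\bigr].
\]
I will take $q$ to be an independent, per-node biased randomization that pushes each local linear term in the direction that decreases $\Phi$, so that the aggregated drift comes out negative up to lower-order error.

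First I decompose the drift along the root-to-leaf path. When $v_t$ is assigned to a leaf $\ell$ with path $\pi(\ell)$, only the vectors $d^-_{j,k}$ for $(j,k)\in\pi(\ell)$ change, and each such $d^-_{j,k}$ is incremented by $\beta\alpha_{j,k}(\ell)\,v_t$ where $\alpha_{j,k}(\ell)\in\{\qrt,-\qlt\}$ depending on the turn at $(j,k)$. Since $|\beta\alpha_{j,k}|\le\beta\le 1$, property~(1) of~\eqref{eqn:potential} applied at state $\beta d^-_{j,k}$ with increment $\beta\alpha_{j,k}v_t$ yields
\[
\Delta\Psi_t(\ell) \;\le\; \sum_{(j,k)\in\pi(\ell)} \Bigl[\lambda\beta\,\alpha_{j,k}(\ell)\,L_{j,k} \,+\, \lambda^2\beta^2\,\alpha_{j,k}(\ell)^2\,Q_{j,k}\Bigr],
\]
where $L_{j,k}:=L_{\beta d^-_{j,k}}(v_t)$ and $Q_{j,k}:=Q_{\beta d^-_{j,k}}(v_t)$.

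Next I specify the randomized strategy: at every internal node $(j,k)\in\CT$, \emph{independently}, go left with probability $\qlt - \epsilon\,\sign(L_{j,k})$ and right with probability $\qrt + \epsilon\,\sign(L_{j,k})$, for $\epsilon := 1/(40h)$. By \lref{lem:balanced_tree}, $\qlt+\qrt=1$ and $\min(\qlt,\qrt)\ge 1/5$, so these are valid probabilities. A direct computation using $\qlt+\qrt=1$ shows that, conditional on $\pi(\ell)\ni(j,k)$, the expected linear contribution from $(j,k)$ is exactly $-\beta\lambda\epsilon|L_{j,k}|$ and the expected quadratic contribution is at most $\beta^2\lambda^2 Q_{j,k}$. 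The probability that $\pi(\ell)$ passes through $(j,k)$ equals $\prod_{\text{ancestors}}(q^{\mathrm{correct}}\pm\epsilon)$, which lies in $[\tfrac12,2]\cdot w_{j,k}/w_{0,0}$, because $\min q^{\mathrm{correct}}\ge 1/5$ and so $(1\pm 5\epsilon)^h \le e^{1/8} \le 2$ by our choice of $\epsilon h = 1/40$.

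Putting the pieces together and taking expectation over $v_t$,
\[
\BE_{v_t}[\Delta\Psi_t^{\mathrm{greedy}}] \;\le\; \sum_{(j,k)\in\CT} O\!\bigl(\tfrac{w_{j,k}}{w_{0,0}}\bigr)\Bigl[-\beta\lambda\epsilon\,\BE[|L_{j,k}|] \,+\, \beta^2\lambda^2\,\BE[Q_{j,k}]\Bigr].
\]
Since $\Psi_{t-1}\le T^5$ forces $\Phi(\beta d^-_{j,k})\le T^5\le 3T^5$ for every $(j,k)$, property~(2) of~\eqref{eqn:potential} yields $\lambda^2\BE[Q_{j,k}] \le \lambda\BE[|L_{j,k}|] + O(1)$. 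Substituting, the bracket becomes $(\beta^2-\beta\epsilon)\lambda\BE[|L_{j,k}|] + O(\beta^2)$, and since $\beta = 1/(400h) \le \epsilon$, the first summand is non-positive. Finally, $\sum_k w_{j,k} = w_{0,0}$ at each of the $h$ internal levels gives $\sum_{(j,k)\in\CT} w_{j,k}/w_{0,0} = h$, so the total drift is $O(\beta^2 h) = O(1/h) = O(1)$, as required.

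The main obstacle is verifying the path-probability bound under the biased independent strategy with clean constants, together with handling the corner cases arising from the incomplete tree, namely nodes at level $h-1$ whose sibling leaf has been removed and where the ``randomization'' is therefore forced; at such nodes only the quadratic contribution $\beta^2\lambda^2 Q_{j,k}$ is present and absorbs into the same aggregated bound. The remainder is a careful but mechanical aggregation that plugs in property~(2) of $\Phi$ node-by-node.
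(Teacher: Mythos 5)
Your proposal is correct, and it proves the lemma by a genuinely different randomized coupling than the one the paper uses. The paper picks a single uniformly random level $j^*$, follows the \emph{exact} weighted random walk at every other level, and deviates greedily only at level $j^*$; the $1/h$ weight from sampling $j^*$ delivers the negative linear term, and the quadratic term is controlled by the observation that this process is pointwise dominated by a constant ($\min\{\qlt,\qrt\}^{-1} \le 20$) times the unperturbed random walk. Your strategy instead introduces a small bias $\epsilon$ at \emph{every} node independently, which gives an exact conditional linear gain of $-\epsilon|L_{j,k}|$ at every node and requires tracking how the product of perturbed branching probabilities distorts $p_{j,k}$ away from $w_{j,k}/w_{0,0}$; your $(1\pm 5\epsilon)^h \le e^{1/8}$ bound handles this cleanly. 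Both couplings yield the same order of decrease per node, $\Theta\bigl(\tfrac{1}{h}\cdot\tfrac{w_{j,k}}{w_{0,0}}\cdot\BE[|L_{j,k}|]\bigr)$, and the rest of the argument (Taylor bound from property~(1) along the root–leaf path, then property~(2) applied node-by-node with $\beta=1/(400h)$) is identical in both.

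One small point of hygiene: you write the bound as $\sum_{(j,k)} O(w_{j,k}/w_{0,0})\cdot[\text{bracket}]$ where the bracket is sign-varying, and then substitute property~(2) inside the bracket as though the $O(\cdot)$ prefactor were a fixed constant. Strictly, the two terms of the bracket should be given the path-probability bound of the matching sign: $p_{j,k}\ge \tfrac12 w_{j,k}/w_{0,0}$ for the negative linear part and $p_{j,k}\le 2\,w_{j,k}/w_{0,0}$ for the positive quadratic part. Doing this yields the constraint $\beta \le \epsilon/4$ rather than $\beta\le\epsilon$; your choice $\beta = \epsilon/10$ satisfies the tighter requirement, so nothing breaks, but the slack should be made explicit. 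You are also right to flag the degenerate level-$(h-1)$ nodes with a removed sibling leaf: there $d^-_{j,k}\equiv 0$ and $X_{j,k}$ is identically zero, so such nodes contribute nothing to the drift and can be ignored.
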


Using standard arguments as used in the proof of \thmref{thm:gen-disc}, this implies that with high probability $\Psi_t \le T^5$ at all times $t$.

Moreover, the above potential also gives a bound on the discrepancy because of the following lemma.

\begin{lemma} \label{lem:node_prop_to_weights}
If $\Psi_t \le T^5$, then
$\disc_t = O(\beta^{-1} h \cdot B_{\arbnorm{\cdot}} \cdot \log (nT \Psi_t)) = O(h^2 \cdot B_{\arbnorm{\cdot}} \cdot \log (nT))$.
\end{lemma}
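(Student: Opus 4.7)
The plan is to express $\disc_t(c,c')$ in terms of a ``normalized discrepancy'' $\sigma_v := d_v(t)/w_v$ at nodes of $\mathcal{T}$, bound this quantity along tree paths using the potential assumption, and finally lift the node-level bound to colors via a canonical decomposition of the consecutive leaves of each color. The assumption $\Psi_t \le T^5$ forces every summand $\Phi(\beta\,d^-_{j,k}(t)) \le T^5$, so the defining property of $B_{\arbnorm{\cdot}}$ immediately gives
\[
\arbnorm{d^-_v(t)} ~\le~ M ~:=~ \beta^{-1}\, B_{\arbnorm{\cdot}}\, \log(nT\Psi_t)
\]
for every internal node $v$ of $\mathcal{T}$.

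Next I would prove a per-edge identity: using $\qlt + \qrt = 1$ together with the definition $d^-_v(t) = \qrt \dlt(t) - \qlt \drt(t)$, a direct computation gives $\sigma_{v_L} - \sigma_v = d^-_v(t)/w_L$ and $\sigma_{v_R} - \sigma_v = -d^-_v(t)/w_R$ for an internal node $v$ with children $v_L, v_R$ of weights $w_L, w_R$. So each parent--child edge contributes at most $M/w_{\text{child}}$ to the $\arbnorm{\cdot}$-distance between adjacent $\sigma$'s. \lref{lem:balanced_tree}, together with the observation $w_{j,k} = \Theta(2^{h-j})$ proven within it, implies that child-weights roughly double going up the tree, so telescoping along the path from any $u$ up to the LCA of $u,u'$ and then down to $u'$ yields a geometric sum and gives
\[
\arbnorm{\sigma_u - \sigma_{u'}} ~=~ O\!\left(M\!\left(\tfrac{1}{w_u} + \tfrac{1}{w_{u'}}\right)\right)
\]
for any two tree nodes $u, u'$.

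Finally I would lift this to colors. Since the leaves belonging to any color $c$ are consecutive in $\SL(\mathcal{T})$, they admit a canonical segment-tree-style covering by $k = O(h)$ maximal subtrees rooted at $u_1,\ldots,u_k$ with $\sum_i w_{u_i} = w_c$, and similarly for $c'$. As $\sigma_c = \sum_i (w_{u_i}/w_c)\,\sigma_{u_i}$ is a weighted average, the triangle inequality and the node-level bound give
\[
\arbnorm{\sigma_c - \sigma_{c'}} ~\le~ \sum_{i,j} \frac{w_{u_i} w_{u'_j}}{w_c w_{c'}}\, \arbnorm{\sigma_{u_i} - \sigma_{u'_j}} ~\le~ O(M)\!\left(\frac{k}{w_c} + \frac{k'}{w_{c'}}\right) ~=~ O(Mh)\!\left(\frac{1}{w_c} + \frac{1}{w_{c'}}\right),
\]
where the middle step uses $\sum_i w_{u_i} = w_c$ and $\sum_j w_{u'_j} = w_{c'}$ after distributing the $1/w_{u_i} + 1/w_{u'_j}$ factor. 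Multiplying by $w_c w_{c'}/(w_c+w_{c'})$ then gives $\disc_t(c,c') = O(Mh) = O(\beta^{-1} h\, B_{\arbnorm{\cdot}} \log(nT\Psi_t))$, and substituting $\beta^{-1} = 400 h$ and $\Psi_t \le T^5$ yields the claimed $O(h^2\,B_{\arbnorm{\cdot}} \log(nT))$ bound.

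The main obstacle is obtaining the node-level bound in the precise $(1/w_u + 1/w_{u'})$ form, since this is exactly what cancels the prefactor $w_c w_{c'}/(w_c + w_{c'})$ in the definition of $\disc_t(c,c')$ and prevents losing a spurious $\min(w_c, w_{c'}) = O(\eta)$ factor; this in turn crucially requires the geometric weight decay from \lref{lem:balanced_tree} rather than a naive path-length argument. A secondary point is justifying $k = O(h)$, which is standard for consecutive leaves in a balanced binary tree but should be checked carefully for the incomplete tree $\mathcal{T}$ constructed in \secref{sec:multicolor}.
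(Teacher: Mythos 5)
Your proposal is correct and reaches the claimed bound. The overall skeleton is the same as the paper's --- control $\arbnorm{d^-_{j,k}}$ per node from the potential, telescope along tree paths using the geometric weight structure, and sum over the $O(h)$ maximal subtree roots per color --- but the bookkeeping is genuinely different and somewhat cleaner. The paper tracks $\arbnorm{d_{j,k} - d_{0,0}\,w_{j,k}/w_{0,0}}$ by induction from the root, using $\qlt,\qrt \le 19/20$ to drive a geometric series with $\beta_j \le 20$, and then needs the separate observation $w_c \ge 1$ to pass from $d_c$ to $d_c/w_c$. You instead work with the normalized quantity $\sigma_v = d_v/w_v$, which is the object actually appearing in $\disc_t(c,c')$, and derive the tidy per-edge identity $\sigma_{v_L} - \sigma_v = d^-_v/w_L$ (and its mirror). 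Telescoping via the LCA then directly produces the $(1/w_u + 1/w_{u'})$ form, which cancels the denominator $1/w_c + 1/w_{c'}$ in $\disc_t(c,c')$ without any appeal to $w_c \ge 1$; the geometric decay comes from $w_{j,k} = \Theta(2^{h-j})$ established inside \lref{lem:balanced_tree}'s proof, rather than from the $19/20$ random-walk bound. The weighted-average step $\sigma_c = \sum_i (w_{u_i}/w_c)\sigma_{u_i}$ and the double sum over maximal nodes are verified correctly: the cross terms collapse to $k/w_c + k'/w_{c'}$ exactly as you wrote. Your flagged secondary point ($|\SM_c| = O(h)$ for the incomplete tree) is indeed needed; the paper asserts $|\SM_c| \le 2h$ from the consecutiveness of same-color leaves, and the removal of non-sibling leaves in the construction does not disrupt that segment-tree covering argument, so it holds as you expect. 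Both routes buy essentially the same constants; yours is arguably more transparent because the quantity being telescoped is exactly the one that appears in the discrepancy definition.
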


Combined with part (b) of \lref{lemma:tail} and \lref{lemma:chaining}, the above implies \thmref{thm:multicolor-intro}. Next we prove \lref{lem:node_prop_to_weights} and \lref{clm:driftmulti} in that order. 

\subsubsection*{Bounded Potential Implies Low Discrepancy}

For notational simplicity, we fix a time $t$ and drop the time index below. 

\begin{proof}[Proof of \lref{lem:node_prop_to_weights}]
First note that $\Phi(\beta \cdot d^-_{j,k}) \le \Psi$, and therefore, $\arbnorm{d^-_{j,k}} \le \beta^{-1} B(\arbnorm{\cdot}) := U$ for every internal node $(j,k)$. 

We next claim by induction that the above implies the following for every internal node $(j,k)$,
\begin{align} \label{eq:disc_from_root}
\arbnorm{d_{j,k} - d_{0,0} \cdot \frac{w_{j,k}}{w_{0,0}}} \leq  \beta_j U,
\end{align}
where $\beta_j = 1+19/20+\cdots+(19/20)^j$.

The claim is trivially true for the root. For an arbitrary node  $(j+1,2k)$ at depth $j$ that is the left child of some node $(j,k)$, we have that 
\begin{align*}
    \arbnorm{{d_{j+1,2k}} - d_{0,0}\cdot \frac{w_{j+1,2k}}{w_{0,0}}} &\le  \arbnorm{{d_{j+1,2k}} - d_{j,k}\cdot \frac{w_{j+1,2k}}{w_{j,k}}} + \qlt \cdot \arbnorm{d_{j,k} - d_{0,0} \cdot \frac{w_{j,k}}{w_{0,0}}} \\
    \ & \le \arbnorm{{\dlt} - d_{j,k}\cdot \qlt} +  \qlt \beta_j U,
\end{align*}
since $w_{j+1,2k}/w_{j,k}=\qlt$ and $\qlt,\qrt \in[1/20,19/20]$. Note that $d_{j,k} = \dlt + \drt$, so the first term above equals $\arbnorm{d^-_{j,k}}$. Therefore, it follows that $\arbnorm{{d_{j+1,2k}} - d_{0,0}\cdot ({w_{j+1,2k}}/{w_{0,0}})} \le \beta_{j+1} U$. The claim follows analogously for all nodes that are the right children of its parent.


To see the statement of the lemma,  consider any color $c \in [R]$. We say that an internal node has color $c$ if all its leaves are assigned color $c$. A maximal color-$c$ node is a node that has color $c$ but its ancestor doesn't have color $c$. We denote the set of maximal $c$-color node to be $\SM_c$.  
Notice that $|\SM_c| \leq 2 h$ since $c$-color leaves are consecutive. Also, note that $\sum_{(j,k) \in \SM_c} w_{j,k} = w_c$ and that $\sum_{(j,k) \in \SM_c} d_{j,k} = d_c$ is exactly the sum of vectors with color $c$.  
Therefore, we have 
\begin{align*}
\arbnorm{d_c/w_c - d_{0,0}/w_{0,0}} \le   \arbnorm{d_c - d_{0,0} \cdot \frac{w_c}{w_{0,0}}} \le  \sum_{(j,k) \in \mathcal{M}_c} \arbnorm{ d_{j,k} - d_{0,0} \cdot\frac{w_{j,k} }{w_{0,0}} } = O(h \cdot U), 
\end{align*}
where the first inequality follows since $w_c \ge 1$ and the last follows from \eqref{eq:disc_from_root}.

Thus, for any two colors $c \neq c'$, we have
\begin{align*}
\disc_t(c,c') = \arbnorm{ \frac{d_c / w_c - d_{c'} / w_{c'}}{1/ w_c + 1/w_{c'}}} 
 \le \arbnorm{ \frac{d_c / w_c - d_{0,0} / w_{0,0}}{1/ w_c + 1/w_{c'}}} + \arbnorm{ \frac{d_{c'} / w_{c'} - d_{0,0} / w_{0,0}}{1/ w_c + 1/w_{c'}}} =  O(h \cdot U).
\end{align*}
This finishes the proof of the lemma. 
\end{proof}

\subsubsection*{Bounding the Drift}



\begin{proof}[Proof of \clmref{clm:driftmulti}]
We fix the time $t$ and write $d^{-}_{j,k} = d^{-}_{j,k}(t-1)$. Let $X_{j,k}(\ell) \cdot v_t$ denote the change of $d^-_{j,k}$ when the leaf chosen for $v_t$ is $\ell$. More specifically,
$X_{j,k}(\ell)$ is $\qrt$ if the leaf $\ell$ belongs to the left sub-tree of node $(j,k)$, is $-\qlt$ if it belongs to the  right sub-tree, and is $0$ otherwise. Then, $d^-_{j,k}(t) = d^-_{j,k} + X_{j,k}(\ell)\cdot v_t$ if the leaf $\ell$ is chosen.

By our assumption on the potential, we have that $\Delta\Psi_t \leq \beta \lambda L + \beta^2 \lambda^2 Q$ where 
\begin{align*}
L &= \sum_{(j,k) \in \calP(\ell)} X_{j,k}(\ell) \cdot L_{j,k}(v_t) \\
Q &= \sum_{(j,k) \in \calP(\ell)} X_{j,k}(\ell)^2 \cdot Q_{j,k}(v_t) ,
\end{align*} 
and $\calP(\ell)$ is the root-leaf path to the leaf $\ell$.

Consider choosing leaf $\ell$ (and hence the root-leaf path $\calP(\ell)$) randomly in the following way: 
First pick a uniformly random layer $j^* \in \{0,1, \cdots, h-1\}$ (i.e., level of the tree), then starting from the root randomly choose a child according to the random walk probability for all layers except $j^*$; for layer $j^*$, suppose we arrive at node $(j^*,k)$, we pick the left child if $L_{j^*,k}(v_t) \leq 0$, and the right child otherwise. Note that conditioned on a fixed value of $j_*$, this ensures that $\BE_\ell[X_{j,k} L_{j,k}(v_t)]$ is always negative if $j=j_*$ and is zero otherwise.

Since we follow the random walk before layer $j^*$,  for a fixed choice of $j^*$ we get a node in its layer proportional to their weights. Let us write $\SN_j$ for the set of all nodes at depth $j$. In expectation over the randomness of the input vector $v_t$ and our random choice of leaf $\ell$, we have
\begin{align*}
\E_{v_t,\ell}[L] 
&\leq - \frac{1}{h} \cdot \sum_{j=0}^{h-1} \sum_{k \in \SN_j} \frac{w_{j,k}}{\sum_{j \in \SN_j} w_{j,k}} \cdot \min\{\qlt,\qrt\} \cdot \E_{v_t}[|L_{j,k}|]. 
\end{align*}

For the $Q$ term, recall that one is randomly picking a child until layer $j^*$, in which one picks a child depending on $L_{j^*,k}$, and then we continue randomly for the remaining layers. Note that since $Q$ is always positive, this can be at most $20$ times a process that always picks a random root-leaf path, since we have $\qlt, \qrt \in [1/20,19/20]$. Therefore, we have
\begin{align*}
\E_{v_t, \ell}[Q] \leq 20 \cdot \sum_{j=0}^{h-1} \sum_{k \in \SN_j} \frac{w_{j,k}}{\sum_{j \in \SN_j} w_{j,k}} \cdot \E_{v_t}[Q_{j,k}].
\end{align*}

By our choice of $\beta=1/(400h)$, the above implies that
\begin{align*}
\E_{v_t}[\Delta\Psi_t] &\leq - \sum_{j=0}^{h-1} \sum_{k \in \SN_j} \frac{w_{j,k}}{\sum_{j \in \SN_j} w_{j,k}} \cdot \left(-\frac{\beta \lambda }{20h}\E_{v_t}[|L_{j,k}|] +  20\beta^2\lambda^2\E_{v_t}[Q_{j,k}]\right)\\
\ & \leq - \sum_{j=0}^{h-1} \sum_{k \in \SN_j} \frac{w_{j,k}}{\sum_{j \in \SN_j} w_{j,k}} \cdot \frac{1}{8000h^2}\cdot \left(-\lambda\E_{v_t}[|L_{j,k}|] +  \lambda^2\E_{v_t}[Q_{j,k}]\right) = O(1).
\end{align*} 
Since the algorithm is greedy, the leaf $\ell$ it assigns to the incoming vector $v$ produces an even smaller drift, so this completes the proof.
\end{proof}

\bibliographystyle{alpha}
\bibliography{fullbib}

\end{document}